\newcommand{\ignore}[1]{}
\newcommand{\cG}{\mathcal{G}}
\newcommand{\cA}{\mathcal{A}}
\newcommand{\cT}{\mathcal{T}}
\newcommand{\cM}{\mathcal{M}}
\newcommand{\cK}{\mathcal{K}}
\newcommand{\N}{\mathbb{N}}
\newcommand{\Z}{\mathbb{Z}}
\newcommand{\Q}{\mathbb{Q}}
\newcommand{\cV}{\mathcal{V}}
\newcommand{\cZ}{\mathcal{Z}}
\newcommand{\cC}{\mathcal{C}}
\newcommand{\Ncal}{\mathcal{N}}
\newcommand{\Nom}{\mathbb{N}_\omega}
\newcommand{\Lang}[1]{\mathsf{L}(#1)}
\newcommand{\Perms}{\Pi}
\newcommand{\Parikh}{\Psi}
\newcommand{\fin}{\textup{fin}}
\newcommand{\trans}[1]{\stackrel{#1}{\longrightarrow}}
\newcommand{\hurdle}{\textup{hurdle}}
\newcommand{\fdrop}{\textup{drop}}
\newcommand{\init}{\textup{init}}
\newcommand{\Loop}{\textup{Loop}}
\newcommand{\eff}{\textup{eff}}
\renewcommand{\vec}[1]{\mathbf{#1}}
\newcommand{\sep}[2]{#1\mathrel{|}#2}
\newcommand{\scal}[2]{\langle #1, #2 \rangle}
\newcommand{\Pre}{\textsc{Pre}}
\newcommand{\Post}{\textsc{Post}}
\newcommand{\rev}[1]{{#1}^{\mathsf{rev}}}
\DeclareMathOperator{\cone}{cone}
\begin{document}


\title{An Approach to Regular Separability in Vector Addition Systems}

\author{Wojciech Czerwi\'nski\inst{1}
\and Georg Zetzsche\inst{2}}

\institute{University of Warsaw \and Max Planck Institute for Software Systems (MPI-SWS)}

%
{} 



\maketitle

\begin{abstract}
  We study the problem of regular separability of languages of vector
  addition systems with states (VASS).  It asks whether for two given
  VASS languages $K$ and $L$, there exists a regular language $R$ that
  includes $K$ and is disjoint from $L$. While decidability of the
  problem in full generality remains an open question, there are
  several subclasses for which decidability has been shown: It is
  decidable for (i)~one-dimensional VASS, (ii)~VASS coverability
  languages, (iii)~languages of integer VASS, and (iv)~commutative
  VASS languages.

  We propose a general approach to deciding regular separability. We
  use it to decide regular separability of an arbitrary VASS language
  from any language in the classes~(i), (ii), and~(iii). This
  generalizes all previous results, including~(iv).
\end{abstract}

\section{Introduction}\label{intro}

\paragraph{Vector addition systems with states} Vector addition
systems with states (VASS)~\cite{DBLP:journals/tcs/HopcroftP79} are
one of the most intensively studied model for concurrent systems.
They can be seen as automata with finitely many counters, which can be
increased or decreased whenever its values is non-negative, but not
tested for zero.  Despite their fundamental nature and the extensive
interest, core aspects remain obscure. A prominent example is the
reachability problem, which was shown decidable in the early
1980s~\cite{DBLP:conf/stoc/Mayr81}. However, its complexity remains
unsettled. The best known upper bounds are
non-primitive-recursive~\cite{DBLP:conf/lics/LerouxS15}, whereas the
best known lower bound is tower
hardness~\cite{DBLP:journals/corr/abs-1809-07115}, and reachability
seems far from being understood.

There is also a number of other natural problems concerning VASS where
the complexity or even decidability remains unresolved. An example is
the structural liveness problem, which asks whether there exists a
configuration such that for every configuration $c$ reachable from it
and every transition $t$ one can reach from $c$ some configuration in which $t$ is enabled.
Its decidability status was
settled only recently~\cite{DBLP:conf/sofsem/Jancar17}, but the complexity
is still unknown.  For closely related extensions of VASS, namely
branching VASS and pushdown VASS even decidability status is unknown
with the best lower bound being tower-hardness~\cite{LazicS15,LazicT17}.  This all suggests that there is still a lot to understand
about VASS.

\paragraph{Separability problem} One way to gain a fresh
perspective and deeper understanding of the matter is to study
decision problems that generalize reachability.  It seems to us that
here, a natural choice is the problem of \emph{regular separability}.
It asks whether for two given languages $K$ and $L$ there exists a
\emph{regular separator}, i.e. a regular language $R$ such that
$K\subseteq R$ and $R\cap L=\emptyset$.
Decidability of this problem for general VASS languages appears to be
difficult.  It has been shown decidable for several subclasses, namely
for (a)~\emph{commutative VASS
  languages}~\cite{DBLP:conf/stacs/ClementeCLP17} (equivalently,
separability of sections of reachability sets by recognizable sets),
for (b)~\emph{one-counter nets}~\cite{DBLP:conf/lics/CzerwinskiL17}
i.e.\ VASS with one counter, (c)~integer
VASS~\cite{DBLP:conf/icalp/ClementeCLP17}, i.e.\ VASS where we allow
counters to become negative, and finally for (d)~\emph{coverability
  languages}, which follows from the general decidability for
well-structured transition
systems~\cite{DBLP:conf/concur/CzerwinskiLMMKS18}. However, in full
generality, decidability remains a challenging open question.  It
should be mentioned that this line of research has already led to
unforeseen insights: The closely related problem of separability by
bounded regular languages prompted methods that turned out to yield
decidability results that were deeply unexpected~\cite{DBLP:conf/icalp/CzerwinskiHZ18}.

\paragraph{Contribution} We present a general approach to deciding
separability by regular languages and prove three new results, which
generalize all four regular separability results shown until
now. Specifically, we show decidability of regular separability of
(i)~VASS languages from languages of one-counter nets, (ii)~VASS
languages from coverability VASS languages, and (iii)~VASS languages
from integer VASS languages. This clearly generalizes results~(b),
(c), and (d) above, and we will see that this also strengthens~(a).

\paragraph{Main ingredients}
The starting point of our approach is the observation that for many
language classes $\cK$, deciding regular separability of a language
$L$ from a given language $K$ in $\cK$ can be reduced to deciding
regular separability of $L$ from some fixed language $G$ in $\cK$.  In
all three cases (i)--(iii), this allows us to interpret the
words in $L$ as walks in the grid $\Z^n$. For~(i), we then have to
decide separability from those walks in $\Z=\Z^1$ that remain in $\N$
and arrive at zero. For (ii), we decide separability from the set of
walks that remain in $\N^n$ and arrive somewhere in $\N^n$.  For
(iii), we want to separate from all walks in $\Z^n$ that end at the
origin. The corresponding fixed languages are denoted $D_1$ (for~(i)),
$C_n$ (for~(ii)), and $Z_n$ (for~(iii)), respectively.

%
%
%
In order to decide separability from a fixed language $G$ (i.e. $D_1$,
$C_n$, or $Z_n$), we first classify those regular languages that are
disjoint from $G$.  Second, the classifications are used to decide
whether a given VASS language $L$ is included in such a regular
language. These decision procedures employ either the previous
result~(a) above or reduce to the simultaneous unboundedness problem
(which is known to be decidable for VASS
languages~\cite{DBLP:conf/icalp/HabermehlMW10,DBLP:conf/icalp/CzerwinskiHZ18}).

\paragraph{VASS vs. integer VASS} The result~(iii) is significantly
more involved than~(i) and~(ii).  First, the classification of regular
languages disjoint from $Z_n$ leads to a geometric characterization of
regular separability. This is then applied in a decision procedure
that employs the KLMST decomposition from the algorithms by Sacerdote
and Tenney~\cite{sacerdote1977decidability},
Mayr~\cite{DBLP:conf/stoc/Mayr81},
Kosaraju~\cite{DBLP:conf/stoc/Kosaraju82}, and
Lambert~\cite{DBLP:journals/tcs/Lambert92} (and recast by Leroux and
Schmitz~\cite{DBLP:conf/lics/LerouxS15}) for reachability in VASS.
Previous algorithms for VASS languages that use this decomposition (by
Habermehl, Meyer, and Wimmel~\cite{DBLP:conf/icalp/HabermehlMW10} and
by Czerwi\'{n}ski, Hofman, and
Zetzsche~\cite{DBLP:conf/icalp/CzerwinskiHZ18}) perform the
decomposition once, which yields regular overapproximations that
contain all information needed for their purposes. Our procedure
requires an additional refinement: Depending on a property of each
overapproximation, we can either reduce separability to the
commutative case and apply~(a) or we can reduce the dimension of the
input language (i.e. transform it into a set of walks in $\Z^m$ for
$m<n$) and invoke our algorithm recursively.



\paragraph{Connection to VASS reachability}
We hope that this approach can be used to decide regular separability
for VASS in full generality in the future. This would amount to
deciding regular separability of a given VASS language from the set of
all walks in $\Z^n$ that remain in $\N^n$ and arrive in the
origin. The corresponding language is denoted $D_n$. We emphasize that
an algorithm along these lines might directly yield new insights
concerning reachability: Classifying those regular
languages that are disjoint from $D_n$ would yield an algorithm for
reachability because the latter reduces to intersection of a given
regular language with $D_n$. Such an algorithm would look for a
certificate for \emph{non-reachability} (like Leroux's
algorithm~\cite{DBLP:conf/lics/Leroux09}) instead of a run.


\paragraph{Related work} Aside from \emph{regular} separability,
separability problems in a more general sense have also attracted
significant attention in recent years.  Here, the class of sought
separators can differ from the regular languages.  A series of recent
works has concentrated on separability of regular languages by
separators from
subclasses~\cite{DBLP:conf/lics/Place15,DBLP:conf/mfcs/PlaceRZ13,DBLP:conf/stacs/PlaceZ15,DBLP:journals/corr/PlaceZ14,DBLP:conf/csr/PlaceZ17,DBLP:conf/lics/PlaceZ17},
and work in this direction has been started for trees as
well~\cite{DBLP:journals/fuin/Bojanczyk17,DBLP:conf/icalp/Goubault-Larrecq16}.


In the case of non-regular languages as input languages, it was shown
early that regular separability  is
undecidable for context-free languages~\cite{SzymanskiW-sicomp76,DBLP:journals/jacm/Hunt82a}.
Moreover, aside from the above mentioned results on regular separability,
infinite-state systems have also been studied with respect to separability
by bounded regular languages~\cite{DBLP:conf/icalp/CzerwinskiHZ18} and
piecewise testable languages~\cite{DBLP:journals/dmtcs/CzerwinskiMRZZ17}
and generalizations thereof~\cite{DBLP:conf/lics/Zetzsche18}.




\section{Preliminaries}\label{prelims}
By $\Q$ ($\Q_+$), we denote the set of (non-negative) rational
numbers. Let $\Sigma$ be an alphabet and let $\varepsilon$ denote the empty
word. If $\Sigma=\{x_1,\ldots,x_n\}$, then the \emph{Parikh image} of
a word $w\in\Sigma^*$ is defined as
$\Parikh(w)=(|w|_{x_1},\ldots,|w|_{x_n})$, where $|w|_x$ denotes the
number of occurrences of $x$ in $w$. The \emph{commutative closure} of a language $L\subseteq\Sigma^*$ is $\Perms(L)=\{u\in\Sigma^* \mid \exists v\in L\colon
\Parikh(v)=\Parikh(u)\}$. 

A \emph{($n$-dimensional) vector addition system with states
  (VASS)} is a tuple
$V=(Q,T,s,t)$, where $Q$ is a finite set of \emph{states},
$T\subseteq Q\times \Sigma_\varepsilon\times \Z^n\times Q$ is a finite
set of transitions, $s\in Q$ is its \emph{source state}, $t\in Q$ is
its \emph{target state}. Here, $\Sigma_\varepsilon$ denotes
$\Sigma\cup\{\varepsilon\}$. A \emph{configuration} of $V$ is a pair
$(q,\vec u)\in Q\times\N^n$. For each transition
$t = (q,x,\vec v,q')\in T$ and configurations $(q,\vec u)$,
$(q',\vec u')$ with $\vec u'=\vec u+\vec v$, we write
$(q,\vec u)\xrightarrow{x} (q',\vec u')$.  For a word $w\in\Sigma^*$,
we write $(q,\vec u)\trans{w}(q',\vec u')$ if there are
$x_1,\ldots,x_n\in\Sigma_\varepsilon$ and configurations
$(q_i,\vec v_i)$ for $i\in[0,n]$ with
$(q_{i-1},\vec v_{i-1})\xrightarrow{x_i}(q_i,\vec v_i)$ for
$i\in[1,n]$, $(q_0,\vec v_0)=(q,\vec u)$, and
$(q_n,\vec v_n)=(q',\vec u')$. The language of $V$ is then
$\Lang{V}=\{w\in\Sigma^* \mid (s,\textbf{0})\trans{w}(t,\textbf{0})\}$. An
\emph{($n$-dimensional) integer vector addition system with states
  ($\Z$-VASS)}~\cite{DBLP:conf/rp/HaaseH14} is syntactically a VASS,
but for $\Z$-VASS, the configurations are pairs in $Q\times\Z^n$.
This difference aside, the language is defined verbatim.  Likewise, an
\emph{$n$-dimensional coverability vector addition system with states
  (coverability VASS)} is syntactically a VASS. However, if we regard
a VASS $V$ as a coverability VASS, we define its language as
$\Lang{V}=\{w\in\Sigma^*\mid \text{$(s,\textbf{0})\trans{w}(t,\vec u)$ for some
  $\vec u\in\N^n$}\}$.  Let $\cV_n$ ($\cZ_n$, $\cC_n$) denote the
class of languages of $n$-dim.\ VASS ($\Z$-VASS, coverability VASS).

Let $\Sigma_n=\{a_i,\bar{a}_i\mid i\in[1,n]\}$ and define the
homomorphism $\varphi_n\colon\Sigma_n^*\to\Z^n$ by
$\varphi_n(a_i)=\vec e_i$ and $\varphi_n(\bar{a}_i)=-\vec e_i$. Here,
$\vec e_i\in\Z^n$ is the vector with $1$ in coordinate $i$ and $0$
everywhere else.  By way of $\varphi_n$, we can regard words from
$\Sigma_n^*$ as walks in the grid $\Z^n$ that start in the origin.
Later, we will only write $\varphi$ when the $n$ is clear from the
context.  With this, let $Z_n=\{w\in\Sigma^*_n\mid
\varphi(w)=\textbf{0}\}$. Hence, $Z_n$ is the set of walks that start and end
in the origin.

For $w\in\Sigma_1^*$, let
$\fdrop(w)=\min\{\varphi(v)\mid \text{$v$ is a prefix of
  $w$}\}$. Thus, if $w$ is interpreted as walking along $\Z$, then
$\fdrop(w)$ is the lowest value attained on the way. Note that
$\fdrop(w)\in [-|w|,0]$ for every $w\in\Sigma_1^*$. We define
$C_1=\{w\in\Sigma_1^* \mid \fdrop(w)=0\}$. For each $i\in[1,n]$, let
$\lambda_i\colon\Sigma_n^*\to\Sigma_1^*$ be the homomorphism with
$\lambda_i(a_i)=a_1$, $\lambda_i(a_j)=\varepsilon$ for $j\ne i$, and
$\lambda_i(\bar{a}_j)=\overline{\lambda_i(a_j)}$ for every
$j\in[1,n]$. Then we define $C_n=\bigcap_{i=1}^n
\lambda_i^{-1}(C_1)$. Thus, $C_n$ is the set of walks in $\Z^n$ that
start in the origin and remain in the positive orthant
$\N^n$. Finally, let $D_n=C_n\cap Z_n$. Hence, $D_n$ collects those
walks that start in the origin, always remain in $\N^n$ and arrive in
the origin.  For $w\in\Sigma_n^*$, $w=w_1\cdots w_m$,
$w_1,\ldots,w_m\in\Sigma_n$, let $\bar{w}=\bar{w}_1\cdots\bar{w}_m$ and
$\rev{w}=w_m\cdots w_1$.  Here, we set $\bar{\bar{a}}_i=a_i$ for
$a_i\in\Sigma_n$. For $L \subseteq \Sigma^*$ we define $\overline{L} = \{\bar{w} \mid w \in L\}$.

For alphabets $\Sigma,\Gamma$, a subset
$T\subseteq\Sigma^*\times\Gamma^*$ is a \emph{rational transduction}
if it is a homomorphic image of a regular language, i.e.\ if there is
an alphabet $\Delta$, a regular $K\subseteq\Delta^*$, and a
morphism $h\colon\Delta^*\to\Sigma^*\times\Gamma^*$ such that
$T=h(K)$. Typical examples of rational transductions are the relation
$\{(w,g(w)) \mid w\in\Sigma^*\}$ for some morphism
$g\colon \Sigma^*\to\Gamma^*$ or $\{(w,w) \mid w\in R\}$ for some
regular language $R\subseteq\Sigma^*$~\cite{Ber79}.

It is well-known that if $S\subseteq\Sigma^*\times\Gamma^*$ and
$T\subseteq\Delta^*\times\Sigma^*$ are rational transductions, then
the relation $S\circ T$, which is defined
$\{(u,v)\in\Delta^*\times\Gamma^*\mid \exists w\in\Sigma^*\colon
(u,w)\in T,~(w,v)\in S\}$ and also
$T^{-1}=\{(v,u)\in\Sigma^*\times\Delta^* \mid (u,v)\in T\}$ are
rational transductions as well~\cite{Ber79}.

For a language $L\subseteq \Sigma^*$ and a subset
$T\subseteq\Sigma^*\times\Gamma^*$, we define
$TL=\{v\in\Gamma^*\mid \exists u\in L\colon (u,v)\in T\}$.  
A language class $\cK$ is called \emph{full trio} if for every
$L\subseteq\Sigma^*$ from $\cK$, and every rational transduction
$T\subseteq\Sigma^*\times\Gamma^*$, we also have $TL$ in $\cK$.  The
full trio \emph{generated by $L$}, denoted by $\cM(L)$, is the class
of all languages $TL$, where $T\subseteq\Sigma^*\times\Gamma^*$ is a
rational transduction for some $\Gamma$. It is well-known that
$\cV_n$, $\cC_n$, and $\cZ_n$ are (effectively) the full trios generated
by $D_n$, $C_n$, and $Z_n$,
respectively~\cite{Greibach1978,Jantzen1979}.
Since these two paper do not mention effectivity explicitly, we include a short proof.

\begin{proposition}\label{full-trio-generators}
  We have the identities $\cV_n=\cM(D_n)$, $\cC_n=\cM(C_n)$, and
  $\cZ_n=\cM(Z_n)$. Moreover, all inclusions are effective: A
  description in one form can be effectively transformed into the
  other.
\end{proposition}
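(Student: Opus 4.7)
The plan is to prove each of the three identities $\cV_n = \cM(D_n)$, $\cC_n = \cM(C_n)$, and $\cZ_n = \cM(Z_n)$ by two inclusions, using a uniform construction. Throughout, write $(G,\cK)$ for the generic pair taken from $(D_n,\cV_n)$, $(C_n,\cC_n)$, or $(Z_n,\cZ_n)$.

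For the inclusion $\cM(G)\subseteq \cK$, I would first place $G$ in $\cK$ via the obvious one-state VASS (respectively coverability VASS, $\Z$-VASS) over $\Sigma_n$ whose self-loops $a_i$ add $\vec e_i$ and $\bar a_i$ subtract $\vec e_i$; the three acceptance regimes (return to $\mathbf{0}$ while staying in $\N^n$; end in $\N^n$; return to $\mathbf{0}$ over $\Z$) yield exactly $D_n$, $C_n$, $Z_n$. Second, I would invoke effective closure of $\cK$ under rational transductions by the standard product construction: given a $\cK$-machine for $L$ and a finite transducer realising a rational transduction $T$, take the synchronous product of their state spaces, leave the counter updates untouched, and use the transducer's output symbols as the new labels; the resulting machine accepts $TL$ and remains in the same class.

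For the reverse inclusion $\cK \subseteq \cM(G)$, given $V=(Q,T,s,t)$ over $\Sigma$ I would exhibit a rational transduction $\rho$ with $\rho(G)=\Lang{V}$. For each transition $\tau=(q,x,\vec v,q')\in T$, fix a canonical encoding $u_\tau \in \Sigma_n^*$ of $\vec v$, namely $a_1^{v_1^+}\cdots a_n^{v_n^+}\bar a_1^{v_1^-}\cdots\bar a_n^{v_n^-}$, placing all positive letters before all negative ones. Set $\Delta = T \uplus \Sigma_n$ and let $K \subseteq \Delta^*$ be the regular language of sequences $\tau_1 u_{\tau_1}\tau_2 u_{\tau_2}\cdots\tau_m u_{\tau_m}$ where $\tau_1,\ldots,\tau_m$ forms a syntactic path in $V$ from $s$ to $t$. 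Define morphisms $h_1\colon\Delta^*\to\Sigma_n^*$ erasing $T$-letters and $h_2\colon\Delta^*\to\Sigma^*$ reading the $\Sigma$-label of each $T$-letter while erasing $\Sigma_n$-letters. Then $\rho=\{(h_1(w),h_2(w)) : w\in K\}$ is a rational transduction, and the claim reduces to $\rho(G)=\Lang{V}$.

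The main verification is a prefix analysis of the walk $u_{\tau_1}\cdots u_{\tau_m}$. Thanks to the ``positives-before-negatives'' convention inside each block $u_{\tau_i}$, each coordinate's trajectory during $u_{\tau_i}$ lies between its values at the start and end of the block; hence the full walk stays in $\N^n$ throughout iff all intermediate abstract configurations $\vec c_0=\mathbf{0},\vec c_1,\ldots,\vec c_m$ of the run lie in $\N^n$ -- which matches exactly the VASS firing rule. Combined with $\varphi(u_{\tau_1}\cdots u_{\tau_m})=\vec c_m$, this yields that the walk lies in $D_n$ (resp.\ $C_n$, $Z_n$) iff $\tau_1,\ldots,\tau_m$ is a valid VASS (resp.\ coverability VASS, $\Z$-VASS) run ending at the appropriate target configuration. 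Effectivity is clear in both directions since every construction is a finite combinatorial transformation. The main (and only delicate) design point is the canonical encoding: with an arbitrary ordering of positives and negatives inside a block, the walk could spuriously dip below $\mathbf{0}$ mid-block even when the abstract run is valid, and prefix non-negativity of the walk would no longer track VASS run validity.
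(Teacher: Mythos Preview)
Your proposal is correct and follows essentially the same approach as the paper's proof. Both directions proceed by the same mechanism: for $\cM(G)\subseteq\cK$ a product of transducer states with VASS states (you factor this as ``$G\in\cK$'' plus closure under transductions, the paper does the direct construction, but these are the same), and for $\cK\subseteq\cM(G)$ a transducer whose states are the VASS states and whose edges read a canonical $\Sigma_n$-encoding of each transition's effect vector. Your encoding $a_1^{v_1^+}\cdots a_n^{v_n^+}\bar a_1^{v_1^-}\cdots\bar a_n^{v_n^-}$ differs cosmetically from the paper's coordinate-by-coordinate $a_1^{u_1}\cdots a_n^{u_n}$, but both share the crucial property that each coordinate is monotone within a block, so the walk drops below zero mid-block only if some block endpoint does; you make this prefix analysis explicit, whereas the paper dismisses it with ``clearly''.
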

\begin{proof}
  To simplify notation, we use finite-state transducers in the proof.
  A \emph{finite-state transducer} is a tuple
  $\cT=(Q,\Sigma,\Gamma,E,s,t)$, where $Q$ is a finite set of
  \emph{states}, $\Sigma$ is its \emph{input alphabet}, $\Gamma$ is
  its \emph{output alphabet},
  $E\subseteq Q\times \Sigma^*\times\Gamma^*\times Q$ is a finite set
  of \emph{transitions}, $s\in Q$ is its \emph{starting state}, and
  $t\in Q$ is its \emph{terminal state}. For such a transducer and a
  language $L\subseteq\Sigma^*$, we write $R(\cT)$ for the set of all
  pairs $(u,v)\in\Sigma^*\times\Gamma^*$
  for which there is a sequence of transitions
  $(q_0,u_1,v_1,q_1)(q_1,u_2,v_2,q_2)\cdots (q_{n-1},u_n,v_n,q_n)$ such that
  $q_0=s$, $q_n=t$, $u=u_1\cdots u_n$, and $v=v_1\cdots v_n$.  It is
  easy to see that a relation $T\subseteq\Sigma^*\times\Gamma^*$ is
  rational by our definition if and only if there is a finite-state
  transducer $\cT$ with $T=R(\cT)$ \cite[Theorem~6.1]{Ber79}.  If
  $T=R(\cT)$ and $L\subseteq\Sigma^*$, then for $TL$, we also simply
  write $\cT(L)$.
  
  We begin with the inclusion $\cM(D_n)\subseteq\cV_n$.  Given a
  transducer $\cT=(Q,\Sigma_n,\Gamma,E,s,t)$ and $n\in\N$, we
  construct a VASS $V$ as follows. First, by splitting each edges into
  a sequence of edges, we may assume that every edge in $T$ is of the
  form $(p,w,x,q)$ with $x\in\Gamma\cup\{\varepsilon\}$.  We define
  $V=(Q,\Gamma,T,s,t)$ as follows. For every edge $(p,w,x,q)$ in
  $\cT$, $V$ has a transition $(p,x,\vec w,q)$, where
  $\vec w=\varphi(w)$.  Recall that $\varphi\colon\Sigma_n^*\to\Z^n$
  is the morphism with $\varphi(a_i)=\vec e_i$ and
  $\varphi(\bar{a}_i)=-\vec e_i$, where $\vec e_i\in\Z^n$ is the unit
  vector with $1$ in the $i$-th coordinate and $0$ everywhere else.
  Then clearly, $\Lang{V}=\cT(D_n)$. This proves
  $\cM(D_n)\subseteq\cV_n$.  For the inclusions
  $\cM(C_n)\subseteq\cC_n$ and $\cM(Z_n)\subseteq\cZ_n$, we construct
  the same VASS, but interpret it as a coverability VASS or as an
  integer VASS and the same equality of languages will hold.
  
  For the inclusion $\cV_n\subseteq\cT(D_n)$, consider a VASS
  $V=(Q,\Gamma,T,s,t)$. To construct the transducer
  $\cT=(Q,\Sigma_n,\Gamma,E,s,t)$ so that $\cT(D_n)=\Lang{V}$, we
  need to turn vectors $\vec u\in\Z^n$ into words. Given a vector
  $\vec u=(u_1,\ldots,u_n)\in\Z$, define the word $w_{\vec u}$ as
  $a_1^{u_1}\cdots a_n^{u_n}$. Here, if $u_i<0$, we define $a_i^{u_i}$
  as $\bar{a}_i^{|u_i|}$. Our transducer has the following edges. For
  each transition $(p,x,\vec u,q)$ in $V$, $\cT$ has an edge
  $(p, w_{\vec u}, x,q)$.  Then clearly, $\cT(D_n)=\Lang{V}$.  As
  above, if $\cT$ is applied to $C_n$ or $Z_n$, then we construct the
  same VASS, but interpret it as a coverability VASS or integer VASS,
  respectively.
  \qed
\end{proof}

\paragraph{State of the art}
We now give a brief overview of previous results on regular
separability for subclasses of VASS languages. Two languages
$K,L\subseteq\Sigma^*$ are called \emph{regularly separable} if there
exists a regular language $S\subseteq\Sigma^*$ with $K\subseteq S$ and
$L\cap S=\emptyset$. In that case, we write $\sep{K}{L}$.  The
\emph{regular separability problem} asks, given languages $K$ and $L$,
whether $\sep{K}{L}$.  The first studied subclass of VASS was that of \emph{commutative VASS languages}, i.e. those of the form $\Perms(L)$ for a VASS language $L$.
\begin{theorem}[\cite{DBLP:conf/stacs/ClementeCLP17}]\label{separability-commutative}
  Given VASS languages $K,L\subseteq\Sigma^*$, it is decidable whether
  $\sep{\Perms(K)}{\Perms(L)}$.
\end{theorem}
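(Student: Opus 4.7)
The plan is to reduce the question to an arithmetic question about Parikh images and then settle it using VASS reachability.

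First, I would show that regular separability of commutative languages can always be witnessed by a \emph{commutative} separator: if $R$ separates $\Perms(K)$ from $\Perms(L)$, then so does $\Perms(R)$, because monotonicity of $\Perms$ yields $\Perms(K) \subseteq \Perms(R)$, while any $w \in \Perms(R) \cap \Perms(L)$ would share its Parikh image with some $r \in R$, placing $r$ in $\Perms(L)$ and contradicting $R \cap \Perms(L) = \emptyset$. Since commutative regular languages over $\Sigma$ correspond exactly to recognizable subsets of the free commutative monoid $\N^{|\Sigma|}$, the problem reduces to deciding whether $\Parikh(K)$ and $\Parikh(L)$ admit a recognizable separator in $\N^{|\Sigma|}$.

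Second, I would use the standard description of recognizable subsets of $\N^{|\Sigma|}$ as those saturated by some finite-index congruence determined by a threshold $N$ and period $p$: two vectors are equivalent when, componentwise, they are equal or both exceed $N$ and agree modulo $p$. Hence recognizable separability holds iff there exist $(N,p)$ such that no equivalence class meets both $\Parikh(K)$ and $\Parikh(L)$. For each fixed $(N,p)$, testing whether some class meets both sets simultaneously is a reachability-flavored question on a product VASS whose finite control tracks, per coordinate, the residue modulo $p$ and a flag recording whether $N$ has been surpassed. This gives a straightforward semi-decision procedure that recognizes separability: enumerate candidate pairs $(N, p)$ until one succeeds.

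The main obstacle is the matching semi-decision procedure certifying \emph{non}-separability: one needs a uniform witness refuting the existence of any $(N,p)$ that disentangles the two Parikh sets. The natural route is a compactness/geometric argument---failure of recognizable separation should extract matched sequences in $\Parikh(K)$ and $\Parikh(L)$ whose entries escape every threshold while remaining congruent modulo every period---and the existence of such an asymptotic coincidence should itself be encodable as reachability (or at least as an effectively semilinear property) in an auxiliary VASS. The hard part will be turning the universal quantifier over moduli into a single effective query: since Parikh images of VASS languages are not semilinear in general, one cannot directly apply the semilinear separability toolkit, and identifying the right finite certificate of non-separability---presumably via semilinear over-approximations of Parikh images extracted from Karp--Miller trees or from the KLMST decomposition---is where I expect the real difficulty to lie.
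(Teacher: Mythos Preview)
The paper does not prove this theorem; it is quoted as a prior result from \cite{DBLP:conf/stacs/ClementeCLP17} in the ``State of the art'' subsection and then used as a black box (for \cref{separability-bounded} and in the decision procedure of \cref{vass}). There is no in-paper proof to compare against.

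Evaluating your sketch on its own merits: the reduction to recognizable separability of $\Parikh(K)$ and $\Parikh(L)$ in $\N^{|\Sigma|}$ is correct (the paper even cites this step as \cite[Lemma~11]{DBLP:conf/stacs/ClementeCLP17}), and your semi-decision procedure for the separable side is sound. But you explicitly leave the non-separability side open---you say the ``right finite certificate of non-separability\ldots is where I expect the real difficulty to lie.'' That is not a minor gap in an otherwise complete argument; it is essentially the whole theorem. The separable direction is routine once one has the threshold/period description of recognizable subsets of $\N^{|\Sigma|}$; the contribution of \cite{DBLP:conf/stacs/ClementeCLP17} is precisely to supply the missing direction. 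So what you have written is a correct problem analysis and a plan, not a proof.
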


As observed in \cite{DBLP:conf/icalp/CzerwinskiHZ18},
\cref{separability-commutative} also implies the following.
\begin{corollary}[\cite{DBLP:conf/stacs/ClementeCLP17,DBLP:conf/icalp/CzerwinskiHZ18}]\label{separability-bounded}
  Given VASS languages $K,L$ and words $w_1,\ldots,w_m\in\Sigma^*$
  such that $K,L\subseteq w_1^*\cdots w_m^*$, it is decidable whether
  $\sep{K}{L}$.
\end{corollary}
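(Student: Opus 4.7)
The plan is to reduce to \cref{separability-commutative} by passing from the bounded setting to a commutative one. First I would introduce fresh letters $x_1,\ldots,x_m$ together with the homomorphism $\sigma$ defined by $\sigma(x_i)=w_i$, and form
\[
K' := \sigma^{-1}(K) \cap x_1^*\cdots x_m^*, \qquad L' := \sigma^{-1}(L) \cap x_1^*\cdots x_m^*.
\]
Both $K'$ and $L'$ are effectively VASS languages, since $\cV_n$ is a full trio (\cref{full-trio-generators}) and hence closed under inverse homomorphism and intersection with regular languages.

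Next I would verify that $\sep{K}{L}$ iff $\sep{K'}{L'}$. A regular separator $R$ of $K,L$ yields the regular separator $\sigma^{-1}(R)\cap x_1^*\cdots x_m^*$ of $K',L'$. Conversely, a regular separator $R'\subseteq x_1^*\cdots x_m^*$ of $K',L'$ gives the regular separator $\sigma(R')$ of $K,L$: any putative common point $u=w_1^{n_1}\cdots w_m^{n_m}\in L\cap\sigma(R')$ would force $x_1^{n_1}\cdots x_m^{n_m}$ to lie in both $L'$ and $R'$, contradicting their disjointness.

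The crux is the equivalence $\sep{K'}{L'} \iff \sep{\Perms(K')}{\Perms(L')}$. The ($\Leftarrow$) direction is immediate from $K'\subseteq\Perms(K')$ and $L'\subseteq\Perms(L')$. For ($\Rightarrow$), given a regular $R'$ separating $K'$ from $L'$, I would first intersect with $x_1^*\cdots x_m^*$ and then argue that the commutative closure $\Perms(R')$ is still regular. The key observation is that a regular subset $R'\subseteq x_1^*\cdots x_m^*$ has a \emph{recognizable} Parikh image in $\N^m$: analyzing a DFA for $R'$ by recording which state is reached after each block $x_i^{n_i}$ expresses $\Parikh(R')$ as a finite union of products of ultimately periodic subsets of $\N$ (this is essentially the Ginsburg--Spanier characterization of regular bounded languages). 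Since $\Parikh^{-1}(S)$ is regular for any recognizable $S\subseteq\N^m$, the language $\Perms(R')=\Parikh^{-1}(\Parikh(R'))$ is regular. Disjointness from $\Perms(L')$ transfers because every element of $\N^m$ has a unique representative in $x_1^*\cdots x_m^*$: if $u\in\Perms(L')\cap\Perms(R')$, there would be $v\in L'$ and $w\in R'$ with $\Parikh(v)=\Parikh(w)$, forcing $v=w\in L'\cap R'$, a contradiction.

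Combining the two equivalences, \cref{separability-commutative} applied to the VASS languages $K'$ and $L'$ decides $\sep{\Perms(K')}{\Perms(L')}$ and therefore $\sep{K}{L}$. The main obstacle is the regularity of $\Perms(R')$ for a regular $R'\subseteq x_1^*\cdots x_m^*$; once this classical Ginsburg--Spanier-style fact is available, the rest of the argument is routine bookkeeping.
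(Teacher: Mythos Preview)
Your argument is essentially correct and is the standard derivation of this corollary from \cref{separability-commutative}; the paper itself does not supply a proof but merely cites the result as a known consequence.

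One minor imprecision deserves mention: in the converse direction of $\sep{K}{L}\iff\sep{K'}{L'}$, you cannot conclude that the specific word $x_1^{n_1}\cdots x_m^{n_m}$ lies in $R'$, because the decomposition $u=w_1^{n_1}\cdots w_m^{n_m}$ need not be unique (consider e.g.\ $w_1=a$, $w_2=aa$). The fix is immediate: from $u\in\sigma(R')$ take any $v\in R'$ with $\sigma(v)=u$; since $u\in L$ we get $v\in\sigma^{-1}(L)\cap x_1^*\cdots x_m^*=L'$, hence $v\in L'\cap R'$, the desired contradiction. With this adjustment the proof is complete.
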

\noindent
After \cref{separability-commutative}, the investigation went on to 
1-dim.\ VASS~\cite{DBLP:conf/lics/CzerwinskiL17}:
\begin{theorem}[\cite{DBLP:conf/lics/CzerwinskiL17}]\label{separability-1vass}
  Given 1-VASS languages $K$ and $L$, it is decidable whether
  $\sep{K}{L}$.
\end{theorem}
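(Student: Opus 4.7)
The plan follows the general strategy sketched in the introduction, with three steps. First, by Proposition \ref{full-trio-generators}, $\cV_1 = \cM(D_1)$, so one can effectively write $L = TD_1$ for a rational transduction $T$. For any regular $R$, the disjointness $R \cap L = \emptyset$ is equivalent to $T^{-1}R \cap D_1 = \emptyset$, and $T^{-1}R$ is again regular and effectively computable. Using this together with the closure of $\cV_1$ under rational transductions, I would repackage the two-sided problem as a one-sided question: construct from $K$ and $T$ a 1-VASS language $M$ over an extended alphabet so that the existence of a separator for $(K,L)$ becomes equivalent to the existence of a regular $R'$ with $M \subseteq R'$ and $R' \cap D_1 = \emptyset$. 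In other words, $\sep{K}{L}$ reduces to deciding whether $M$ is contained in some regular language disjoint from the fixed language $D_1$.

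Next, I would classify the regular languages disjoint from $D_1$. Since $D_1$ is the Dyck language over one bracket type, disjointness says that no word of $R'$ can simultaneously stay non-negative on every prefix and sum to zero. Working with a DFA $\cA$ for $R'$, I expect the classification to take the form: in every pumping profile of $\cA$ (a strongly connected component reachable from the initial state and co-reachable to an accepting state), either all non-trivial cycle effects share the same strict sign in $\Z$, or some prefix of every accepting run through the component is forced below zero. Such obstructions are Presburger-expressible in terms of the cycle effects and prefix drops of $\cA$.

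Given the classification, deciding whether the 1-VASS language $M$ is contained in some such $R'$ reduces to a finite disjunction of subproblems, each handled either via the commutative closure using Theorem \ref{separability-commutative}, or via the decidable simultaneous unboundedness problem for VASS languages. The main obstacle I anticipate is the classification itself: one needs it to be simultaneously \emph{complete}, so that every separable pair is certified by a regular language of the classified form, and \emph{operational}, so that containment of a 1-VASS language in such a regular language is algorithmically testable. The tension between the prefix-drop condition defining $C_1$ and the sum-zero condition defining $Z_1$ is precisely what makes even this one-dimensional case subtle.
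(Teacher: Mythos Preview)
The paper does not itself prove \cref{separability-1vass}; it is cited as prior work from~\cite{DBLP:conf/lics/CzerwinskiL17}. The relevant comparison is the paper's proof of the strict generalization \cref{result-vass-onevass} in \cref{vass-1vass}.

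Your high-level three-step plan matches the paper's: reduce via \cref{movetrans} to $\sep{M}{D_1}$ for a single input language $M$, classify the regular languages disjoint from $D_1$, then decide containment. Step~1 is exactly \cref{movetrans}, and Step~3 is close in spirit (the paper reduces to the bounded-language case, \cref{separability-bounded}, rather than to simultaneous unboundedness). The gap is in Step~2. Your proposed classification---for each SCC, either all non-trivial cycle effects share a strict sign, or every accepting run through it is forced below zero on some prefix---is incomplete. It misses the \emph{modular} separators: the language $\{w\in\Sigma_1^*\mid \varphi(w)\text{ is odd}\}$ is regular and disjoint from $D_1$, yet its unique SCC has cycle effects of both signs (e.g.\ $a_1a_1$ with effect $+2$ and $\bar{a}_1\bar{a}_1$ with effect $-2$), and the accepting word $a_1$ never dips below zero. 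The paper's \cref{basic-separators-1vass} gives the correct list of basic separators---modular $M_k$, left-bounded $B_\ell$, and the right-to-left mirror $\rev{\bar{B}_m}$---and shows by a pumping argument that every regular $R$ with $R\cap D_1=\emptyset$ lies in a finite union of these. You were right to flag the classification as the crux; your sketch of it, however, does not yet name the right obstructions.
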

\noindent
Moreover, the next \lcnamecref{separability-zvass} has been
established in \cite{DBLP:conf/icalp/ClementeCLP17}.
\begin{theorem}[\cite{DBLP:conf/icalp/ClementeCLP17}]\label{separability-zvass}
  Given $\Z$-VASS languages $K,L\subseteq\Sigma^*$, it is decidable
  whether $\sep{K}{L}$.
\end{theorem}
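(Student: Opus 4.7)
The plan is to reduce $\sep{K}{L}$ for two $\Z$-VASS languages to decidable properties of their Parikh images and $\Z$-VASS control structure, invoking \cref{separability-commutative} for the commutative component. First, every $\Z$-VASS language has an effectively semilinear Parikh image: given a $\Z$-VASS $V=(Q,T,s,t)$ with counter morphism $\varphi_V\colon T^*\to\Z^n$, the accepting transition sequences $\{\tau\in T^*:(s,\vec{0})\xrightarrow{\tau}(t,\vec{0})\}$ are the intersection of the regular path language from $s$ to $t$ with the linear kernel $\varphi_V^{-1}(\vec{0})$, so by Parikh's theorem together with the linear constraint their Parikh image in $\N^T$ is effectively semilinear, and hence so is the projection $\Parikh(\Lang{V})\subseteq\N^\Sigma$. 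Combined with \cref{full-trio-generators}, standard full-trio manipulations reduce $\sep{K}{L}$ to an instance $\sep{K'}{Z_m}$ where $K'\in\cZ_m$ and $Z_m$ is the canonical commutative generator.

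Next, the commutativity of $Z_m$ enables a partial reduction to the commutative case. Since $Z_m=\Perms(Z_m)$ with explicit Parikh image $\{v\in\N^{\Sigma_m}:v(a_i)=v(\bar{a}_i)\text{ for all }i\}$, a commutatively closed regular candidate separator has the form $\Parikh^{-1}(S)$ for recognisable $S\subseteq\N^{\Sigma_m}$ disjoint from $\Parikh(Z_m)$; whether such a separator of $\Perms(K')$ from $Z_m$ exists is decidable by applying \cref{separability-commutative} to the commutative $\Z$-VASS languages determined by $\Parikh(K')$ and $\Parikh(Z_m)$, and any positive answer also separates $K'$ from $Z_m$ because $K'\subseteq\Perms(K')$. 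However, commutative separators do not suffice in general: the example $K'=(a_1\bar{a}_1\bar{a}_1)^+$ is regular (hence trivially separable from $Z_1$), yet $\Parikh(K')=\{(n,2n):n\geq 1\}$ and $\Parikh(Z_1)=\{(m,m):m\geq 0\}$ cannot be recognisably separated in $\N^{\Sigma_1}$, since any common modulus $p$ would have to witness $2n\not\equiv n\pmod{p}$ for all $n$, which is impossible at $n\equiv 0$.

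If no commutative separator exists, one must look for a non-commutative syntactic separator exploiting the $\Z$-VASS control of $K'$. My approach is to decompose accepting runs of $K'$ according to the simple loops used in the control graph: each decomposition gives a ``shape'' imposing interleaving constraints on the output word that can be captured by a finite-state device. For each of the finitely many relevant shapes, the separability question reduces to a combination of Presburger arithmetic on the loop effects (to certify Parikh-level disjointness from $\Parikh(Z_m)$ inside that shape) and standard automata-theoretic checks (to verify that the shape-induced constraints can be captured by a regular language disjoint from $Z_m$).

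The principal obstacle is the loop-shape analysis: I would need to formalise the decomposition so that only finitely many shapes arise, each giving a decidable local question, and then show that their combined outcomes faithfully capture $\sep{K'}{Z_m}$. The technical heart of the proof is thus the interplay between the semilinear Parikh structure (addressed by \cref{separability-commutative}) and the syntactic interleaving constraints imposed by the $\Z$-VASS control, which are responsible for separators like the one in the example above that elude any purely commutative analysis.
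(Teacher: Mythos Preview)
This statement is not proved in the paper; it is quoted as prior work from~\cite{DBLP:conf/icalp/ClementeCLP17} in the state-of-the-art overview, so there is no in-paper proof to compare against. The relevant benchmark here is the paper's own \cref{result-vass-zvass}, which is strictly stronger and whose proof occupies \cref{vass-zvass}.

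Your proposal has a genuine gap, and you name it yourself: the ``loop-shape analysis'' of Step~4 is a placeholder, not an argument. Steps~1--3 are correct and parallel the paper (reduction to $\sep{K'}{Z_m}$ via \cref{movetrans}; the observation that commutative separators do not suffice, with essentially the same kind of counterexample the paper gives). But all the work sits in Step~4, and there you supply no mechanism to (a)~bound the number of relevant shapes, (b)~decide the local question for each shape, or (c)~glue local positive answers into a single regular separator for all of $K'$. The paper fills exactly this hole with structure you do not have: a geometric classification of the regular languages disjoint from $Z_n$ via the basic separators $M_k$ and $D_{\vec u,k}$ (\cref{geometric-disjointness}), then the KLMST decomposition into pieces equipped with \emph{modular envelopes} (\cref{construct-modular-envelopes}), and a cone dichotomy (\cref{lem:dichotomy}) that, for each piece, either reduces to the commutative case via \cref{lem:all-directions} and \cref{separability-commutative}, or projects to a strictly lower dimension and recurses. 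Your intuition about loop effects and Presburger checks gestures toward the cone analysis, but without the basic-separator classification and the dimension-reduction recursion the plan does not terminate in a decision procedure.
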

It should be noted that the authors of~\cite{DBLP:conf/icalp/ClementeCLP17} speak of Parikh automata, but
these are equivalent to $\Z$-VASS: Parikh automata are equivalent to
reversal-bounded counter
machines~\cite[Prop.~3.13]{DBLP:journals/ita/CadilhacFM12} and
the latter are equivalent to blind counter
machines~\cite[Theorem~2]{Greibach1978}, which are the same as $\Z$-VASS.

Finally, a recent general result shows that any two coverability
languages of well-structured transition
systems~\cite{FinkelSchnoebelen2001,geeraerts2007well} fulfilling some mild conditions are regular
separable if and only if they are
disjoint~\cite{DBLP:conf/concur/CzerwinskiLMMKS18}.
In particular it applies to the situation when the systems are upward-compatible and one of them
is finitely branching, which is the case for coverability VASS languages:
\begin{theorem}[\cite{DBLP:conf/concur/CzerwinskiLMMKS18}]\label{separability-cover}
  Given coverability VASS languages $K,L\subseteq\Sigma^*$, it is decidable
  whether $\sep{K}{L}$.
\end{theorem}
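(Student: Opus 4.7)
The idea is to invoke the general result from \cite{DBLP:conf/concur/CzerwinskiLMMKS18} on coverability languages of well-structured transition systems (WSTS), which states that under upward-compatibility and finite branching, regular separability of two coverability languages is equivalent to their disjointness. Thus the task splits into two subgoals: (a) fitting coverability VASS languages into that framework, and (b) deciding disjointness of two such languages.

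For (a), a coverability VASS $V=(Q,T,s,t)$ induces a transition system on the configuration set $Q\times\N^n$, which is a WSTS under the well-quasi-order declaring $(q,\vec u)\le(q',\vec u')$ iff $q=q'$ and $\vec u\le\vec u'$ componentwise; this is a well-quasi-order by Dickson's lemma. Upward-compatibility is immediate from monotonicity of VASS transitions: if $(q,\vec u)\trans{x}(q',\vec u+\vec d)$ and $\vec u'\ge\vec u$, then $(q,\vec u')\trans{x}(q',\vec u'+\vec d)$, whose target dominates $(q',\vec u+\vec d)$. Finite branching is immediate since only finitely many transitions are enabled at any configuration. Finally, the definition $\Lang{V}=\{w\mid (s,\vec 0)\trans{w}(t,\vec u)\text{ for some }\vec u\in\N^n\}$ coincides with the WSTS coverability language targeting $(t,\vec 0)$.

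Granted (a), the theorem of \cite{DBLP:conf/concur/CzerwinskiLMMKS18} yields $\sep{K}{L}$ iff $K\cap L=\emptyset$. For (b), I would construct a synchronized product coverability VASS $W$ over $\Sigma$: its states are pairs $(p,q)\in Q_V\times Q_{V'}$, its counters are the concatenation of the two counter vectors, a transition on a letter $a\in\Sigma$ in $W$ combines simultaneous $a$-transitions from both components, and an $\varepsilon$-transition of either component fires independently in $W$. Then $K\cap L\ne\emptyset$ iff $W$ can reach $(t_V,t_{V'})$ with non-negative counters starting from $(s_V,s_{V'},\vec 0,\vec 0)$, which is a coverability query for $W$ and hence decidable (indeed EXPSPACE-complete).

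The only real mathematical substance lies in the cited WSTS separability theorem, whose proof uses well-quasi-order arguments to extract a regular separator from any disjoint pair of coverability languages; everything else reduces to routine verification of the hypotheses and a standard product construction. The potential obstacle is making sure the hypotheses (upward-compatibility and finite branching) are formulated in exactly the form required by \cite{DBLP:conf/concur/CzerwinskiLMMKS18}, but for VASS both are completely straightforward.
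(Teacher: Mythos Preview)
Your proposal is correct and matches the paper's own justification: the paper does not give a standalone proof of this theorem but cites it as a known result, explaining that the general WSTS separability theorem of \cite{DBLP:conf/concur/CzerwinskiLMMKS18} applies because coverability VASS are upward-compatible and finitely branching, which reduces regular separability to disjointness. Your elaboration of points (a) and (b) is exactly the intended argument, with the product-construction reduction to coverability being the standard way to decide disjointness.
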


\section{Main Results}\label{results}

\newcommand{\TOneVASS}{$1$-VASS}
\newcommand{\TCoverabilityVASS}{$\ge$-VASS}
\newcommand{\TZVASS}{$\Z$-VASS}
\newcommand{\TCommutativeVASS}{c-VASS}
\newcommand{\TVASS}{VASS}
\newcommand{\TDecidableKnown}{D}
\newcommand{\TDecidableNew}{\textbf{D}}

\begin{table}
  \begin{tabular}{llllll}
    \toprule
    \TOneVASS & \TCoverabilityVASS & \TZVASS & \TCommutativeVASS & \TVASS & \\
    \midrule
    \TDecidableKnown~\cite{DBLP:conf/lics/CzerwinskiL17} & \TDecidableNew         & \TDecidableNew          & \TDecidableNew          & \TDecidableNew     & \TOneVASS \\
              & \TDecidableKnown~\cite{DBLP:conf/concur/CzerwinskiLMMKS18}         & \TDecidableNew          & \TDecidableNew          & \TDecidableNew     & \TCoverabilityVASS \\
           &    & \TDecidableKnown~\cite{DBLP:conf/icalp/ClementeCLP17}          & \TDecidableNew          & \TDecidableNew     & \TZVASS \\
    
             &          &        &\TDecidableKnown~\cite{DBLP:conf/stacs/ClementeCLP17}          & ?           & \TCommutativeVASS \\
              &          &           &           & ?     & \TVASS \\
    \bottomrule
  \end{tabular}
  \caption{Overview of the decidability of regular separability for
    VASS subclasses. Here, \TCoverabilityVASS\ and \TCommutativeVASS\
    are short for coverability VASS languages, and commutative VASS
    languages, respectively. The entry in the column for class
    $\mathcal{K}_0$ and the row for class $\mathcal{K}_1$ denotes
    decidability of regular separability of languages of
    $\mathcal{K}_0$ from languages of $\mathcal{K}_1$. The entries in
    bold are new consequences of results in this
    paper.}\label{table-results}
\end{table}

In this section, we record the main results of this work. See
\cref{table-results} for an overview. Our first main result is that
regular separability is decidable if one input language is a VASS
language and the other is the language of a 1-VASS.
\begin{theorem}\label{result-vass-onevass}
  Given a VASS $V_0$ and a 1-dim.\  VASS $V_1$, it is decidable
  whether $\sep{\Lang{V_0}}{\Lang{V_1}}$.
\end{theorem}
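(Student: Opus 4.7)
The plan is to follow the general approach outlined in the introduction. First, I would reduce the problem to separation from the fixed generator $D_1$. By \cref{full-trio-generators}, one can write $\Lang{V_1}=T(D_1)$ for an effectively computable rational transduction $T$, and a standard trio argument then shows that $\sep{\Lang{V_0}}{\Lang{V_1}}$ holds if and only if $\sep{T^{-1}\Lang{V_0}}{D_1}$ does. In one direction, if $R$ separates $\Lang{V_0}$ from $T(D_1)$, then $T^{-1}R$ is regular and separates $T^{-1}\Lang{V_0}$ from $D_1$. Conversely, whenever $R'$ separates $T^{-1}\Lang{V_0}$ from $D_1$, the complement of $T(\Sigma_1^*\setminus R')$ in the output alphabet separates $\Lang{V_0}$ from $T(D_1)$, using that rational transductions preserve regularity of both images and preimages. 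Since $\cV_n$ is a full trio, $T^{-1}\Lang{V_0}$ is again effectively a VASS language, so the task reduces to deciding $\sep{L}{D_1}$ for an arbitrary VASS language $L$.

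The core technical step is then a structural classification of the regular languages $R\subseteq\Sigma_1^*$ that are disjoint from $D_1$: such an $R$ is exactly one in which every word $w$ satisfies $\varphi(w)\ne 0$ or $\fdrop(w)<0$. I would attack this by analysing a DFA for $R$. For each strongly connected component one examines the $\varphi$-effects of its simple cycles, and a pumping argument should yield a decomposition of $R$ into finitely many regular pieces of canonical shape. The intended canonical shapes are bounded-form languages $w_1^*\cdots w_m^*$, regular languages forcing a nonzero final $\varphi$-value via a Parikh-style constraint on letter counts, and regular languages forcing a strict negative drop at some distinguished prefix.

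Given such a classification, existence of a regular separator of $L$ from $D_1$ becomes a finite disjunction of inclusion tests $L\subseteq R_i$, one for each canonical shape. For bounded-shape $R_i$ I invoke \cref{separability-bounded}. For shapes that force a nonzero $\varphi$-value, the inclusion test reduces to separation of commutative images of VASS languages and is handled by \cref{separability-commutative}. For shapes that force a strict drop, I reduce to the simultaneous unboundedness problem for VASS, which is decidable by~\cite{DBLP:conf/icalp/HabermehlMW10,DBLP:conf/icalp/CzerwinskiHZ18}.

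The main obstacle is the classification step. Regular separability from $D_1$ is strictly finer than commutative separability from $\Perms(D_1)=Z_1$: for instance $\bar{a}_1 a_1\in Z_1\setminus D_1$, so $R=\{\bar{a}_1 a_1\}$ regularly separates $\{\bar{a}_1 a_1\}$ from $D_1$ even though the commutative images are not even disjoint. Producing a finite effective list of canonical shapes that is simultaneously exhaustive and amenable to the decidable tools of the previous step is where the genuine combinatorial work lies.
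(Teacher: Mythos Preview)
Your reduction to deciding $\sep{L}{D_1}$ for a VASS language $L\subseteq\Sigma_1^*$ is correct and matches the paper exactly (this is \cref{movetrans}). The gap is in the classification and decision steps.

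First, your proposed list of canonical shapes is wrong. A language of the form $w_1^*\cdots w_m^*$ is not in general disjoint from $D_1$ (e.g.\ $a_1^*\bar{a}_1^*$ contains $a_1\bar{a}_1\in D_1$), so ``bounded-form'' is not a separator shape at all. More importantly, ``forces a strict negative drop at some prefix'' covers only half of what is needed: there are regular languages disjoint from $D_1$ whose words have \emph{no} negative prefix drop and \emph{no} modular obstruction, yet are still separable because the drop happens from the right. The paper's actual basic separators are three families parameterized by $k\in\N$: the modular sets $M_k$, the bounded-counter sets $B_k=\{w\notin D_1\mid \mu(w)\le k\}$, and the \emph{symmetric reversed} sets $\rev{\bar{B}_k}$. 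The key classification (\cref{basic-separators-1vass}) is that $R\cap D_1=\emptyset$ iff $R\subseteq M_k\cup B_\ell\cup\rev{\bar{B}_m}$ for some $k,\ell,m$; without the third family the classification is false.

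Second, even granting a correct classification, your decision procedure is not well-formed. You do not face a ``finite disjunction of inclusion tests $L\subseteq R_i$'': the $R_i$ are parameterized by unbounded integers $k,\ell,m$, and you must decide whether \emph{some} choice of parameters works. The paper resolves this by extracting from each $w\in L$ the triple $(\mu(w),\varphi(w),\mu(\rev{\bar{w}}))$ and encoding it as a word in a fixed bounded language $\hat{L}\subseteq a_1^*\bar{a}_1^*a_1^*\bar{a}_1^*a_1^*\bar{a}_1^*$; a VASS construction (\cref{bounded-replacement}) shows $\hat{L}$ is again a VASS language, and then $\sep{L}{D_1}$ holds iff $\sep{\hat{L}}{D_1\cap B}$, where both sides are bounded, so a single call to \cref{separability-bounded} suffices. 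Your split into three separate oracles (bounded, commutative, SUP) does not correspond to the actual structure of the problem here.
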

This generalizes \cref{separability-1vass}, because here, one of the
input languages can be an arbitrary VASS language.
Our second main result generalizes \cref{separability-cover} in the
same way as \cref{result-vass-onevass} extends
\cref{separability-1vass}:
\begin{theorem}\label{result-vass-cover}
  Given a VASS $V_0$ and a coverability VASS $V_1$, it is decidable
  whether $\sep{\Lang{V_0}}{\Lang{V_1}}$.
\end{theorem}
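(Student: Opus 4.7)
The plan follows the general framework described in the introduction: reduce to separability from a fixed generator, classify regular languages disjoint from that generator, and decide whether $L$ is included in such a language.

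First, I reduce $\sep{L(V_0)}{L(V_1)}$ to separability from the fixed language $C_n$, where $n$ is the dimension of $V_1$. By Proposition~\ref{full-trio-generators}, $\cC_n = \cM(C_n)$, so we can effectively compute a rational transduction $T$ with $L(V_1) = T\cdot C_n$. A standard manipulation shows
\[
\sep{L(V_0)}{T\cdot C_n}\quad\text{iff}\quad \sep{T^{-1}L(V_0)}{C_n};
\]
the forward direction uses the separator $T^{-1}R$, while the converse uses $\Sigma^*\setminus T(\Sigma^*\setminus R')$, which is regular because rational transductions preserve regularity. Since VASS languages form a full trio, $T^{-1}L(V_0)$ is again (effectively) a VASS language. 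Hence it suffices to decide $\sep{L}{C_n}$ for a given VASS language $L\subseteq\Sigma_n^*$.

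Next, I classify the regular languages $R$ with $R\cap C_n=\emptyset$. A word $w$ lies in $C_n$ iff the walk $\varphi(v)$ stays in $\N^n$ for every prefix $v$ of $w$, so $R\cap C_n=\emptyset$ iff every word of $R$ has some prefix whose walk leaves $\N^n$. In DFA terms, if $A$ is a DFA for $R$, then in the product coverability VASS in which $A$ drives a walk tracker for $\varphi$, no accepting DFA state is coverable from $(q_0,\mathbf{0})$. A structural consequence is that every cycle in $A$ used on an accepting run must have walk effect with some strictly negative coordinate, for otherwise iterating the cycle would keep the walk in $\N^n$ forever and produce a word of $R\cap C_n$.

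To decide whether such an $R$ containing $L$ exists, I reduce to the simultaneous unboundedness problem (SUP) for VASS languages, which is decidable~\cite{DBLP:conf/icalp/HabermehlMW10,DBLP:conf/icalp/CzerwinskiHZ18}. Concretely, I augment $V_0$ with $n$ counters tracking $\varphi$ and further counters designed to measure, over runs of $V_0$, quantities whose simultaneous unboundedness corresponds exactly to the existence of a ``pumpable positive loop'' in any hypothetical separator DFA. When those quantities are uniformly bounded, the bound yields a finite-state device that separates $L$ from $C_n$; when they are simultaneously unbounded, a pumping argument in any candidate DFA produces a word in $R\cap C_n$, ruling out separability.

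The main obstacle is identifying the correct quantities in the last step. A naive choice such as ``length of the shortest prefix of $w\in L$ that leaves $\N^n$'' is insufficient: for $L=\{a_1^n\bar{a}_1\bar{a}_2 \mid n\ge 0\}$ this length is unbounded, yet $L$ is separated from $C_n$ by the regular set $a_1^*\bar{a}_1\bar{a}_2$. The correct parameters must take the DFA cycle structure into account --- essentially tracking walk values at potential return points of cycles --- and their precise definition is the main technical challenge. I expect them to be encodable in an enriched product VASS, so that the overall decision reduces cleanly to SUP.
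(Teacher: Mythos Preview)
Your high-level plan matches the paper's: reduce to $\sep{L}{C_n}$ via \cref{movetrans} and \cref{full-trio-generators}, classify the regular languages disjoint from $C_n$, and reduce the final check to the simultaneous unboundedness problem. However, the proposal has a genuine gap in the two technical steps that actually carry the argument.

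First, your structural claim about separator DFAs is false. You write that ``every cycle in $A$ used on an accepting run must have walk effect with some strictly negative coordinate.'' Consider $R=\bar{a}_1(a_1\bar{a}_1)^*\subseteq\Sigma_1^*$: it is disjoint from $C_1$ (every word starts by going below zero), yet the only cycle has effect $0$. So this classification attempt does not go through, and you do not propose an alternative.

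Second, and more seriously, you explicitly leave open ``identifying the correct quantities'' that should be unbounded, writing that you ``expect them to be encodable.'' This is exactly the content of the proof. The paper's answer is concrete: the right quantity in coordinate $i$ is
\[
\mu(\lambda_i(w))=\max\{\varphi(\lambda_i(v))\mid v\text{ is a prefix of }w,\ \fdrop(\lambda_i(v))=0\},
\]
the highest value reached in coordinate $i$ \emph{before} that coordinate first drops below zero. The basic separators are $B_{i,k}=\lambda_i^{-1}(B'_k)$ with $B'_k=\{w\in\Sigma_1^*\mid w\notin C_1\text{ and }\mu(w)\le k\}$, and \cref{basic-separators-cover} shows a regular $R$ is disjoint from $C_n$ iff $R\subseteq B_{1,k}\cup\cdots\cup B_{n,k}$ for some $k$. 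One then builds (by standard VASS constructions, \cref{sup-replacement}) a VASS for
\[
\hat{L}=\{a_1^{x_1}\cdots a_n^{x_n}\mid \exists w\in L\colon \mu(\lambda_i(w))\ge x_i\text{ for all }i\},
\]
so that $\sep{L}{C_n}$ holds iff $L\cap C_n=\emptyset$ and $\hat{L}$ is \emph{not} simultaneously unbounded. Without this specific quantity (or an equivalent one) and the corresponding basic-separator lemma, your proposal is a program outline rather than a proof.
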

Our third main result is decidability of regular separability of a
given VASS language from a given $\Z$-VASS language. 
\begin{theorem}\label{result-vass-zvass}
  Given a VASS $V_0$ and a $\Z$-VASS $V_1$, it is decidable whether
  $\sep{\Lang{V_0}}{\Lang{V_1}}$.
\end{theorem}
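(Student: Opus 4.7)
The plan is to follow the approach sketched in the introduction: reduce the general problem to separating an arbitrary VASS language from the fixed language $Z_n$, and then solve this reduced problem by combining a geometric classification of regular languages disjoint from $Z_n$ with an inductive analysis based on the KLMST decomposition.

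For the reduction, I first invoke \cref{full-trio-generators} to write $\Lang{V_1}=\cT(Z_n)$ for some effectively computable rational transduction $\cT$ and dimension $n$. Since VASS languages form a full trio, $\cT^{-1}(\Lang{V_0})$ is again effectively a VASS language over $\Sigma_n^*$. A standard reduction argument, exploiting that the class of candidate separators (regular languages) is closed under inverse rational transductions, shows that $\sep{\Lang{V_0}}{\Lang{V_1}}$ is equivalent to $\sep{\cT^{-1}(\Lang{V_0})}{Z_n}$. It therefore suffices to decide, for an arbitrary VASS language $L\subseteq\Sigma_n^*$, whether $\sep{L}{Z_n}$.

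Next I would establish a geometric characterization of regular $R\subseteq\Sigma_n^*$ with $R\cap Z_n=\emptyset$. Since $\Parikh(R)$ is semilinear, so is $\varphi_n(R)\subseteq\Z^n$; disjointness from $Z_n$ is equivalent to the origin lying outside every linear component of $\varphi_n(R)$, and by a Farkas-style argument this is in turn equivalent to each such component being strictly separated from the origin by an integer hyperplane whose normal vanishes on all of its periods. The decision procedure then applies the KLMST decomposition to $L$ to obtain a finite family of perfect pieces $L_i\subseteq L$ whose Parikh images are linear sets, and for each piece inspects the cone $\cone\{\varphi_n(\vec p):\vec p\text{ is a period of }L_i\}\subseteq\Q^n$. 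If this cone is a proper subcone of $\Q^n$, a nonzero integer form vanishes on it; projecting $L_i$ along the kernel yields an instance of the same problem in dimension $m<n$, on which I recurse (with the base case $n=0$ immediate). If the cone is full-dimensional, I would prove that on this piece $\sep{L_i}{Z_n}$ is equivalent to $\sep{\Perms(L_i)}{\Perms(Z_n)}$, which is decidable by \cref{separability-commutative}.

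The main obstacle I anticipate is precisely this last equivalence. The direction $\sep{\Perms(L_i)}{\Perms(Z_n)}\Rightarrow\sep{L_i}{Z_n}$ is immediate, but the converse requires showing that, when the loop effects of $L_i$ span all of $\Q^n$, \emph{every} regular separator between $L_i$ and $Z_n$ can be replaced by one that depends only on Parikh images. This is where the classification from the second step must be deployed to force any candidate separator to be commutative-invariant on the piece. A secondary technical point is to verify that the projection along the kernel truly yields a VASS language of strictly smaller dimension so that the overall refinement terminates---this requires extra bookkeeping beyond the single-pass KLMST applications of~\cite{DBLP:conf/icalp/HabermehlMW10,DBLP:conf/icalp/CzerwinskiHZ18}.
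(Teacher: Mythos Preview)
Your overall architecture---reducing to $\sep{L}{Z_n}$ via \cref{movetrans}, decomposing $L$ by KLMST, and branching on whether the cone of loop effects is all of $\Q^n$---matches the paper, and the full-cone branch (reduce to \cref{separability-commutative}) is right in spirit.

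The genuine gap is in the other branch. You assert that if the cone is a proper subcone of $\Q^n$ then ``a nonzero integer form vanishes on it''. This is false: the positive orthant in $\Q^2$ is a proper cone on which no nonzero linear form vanishes. Farkas only yields a form that is \emph{non-negative} on the cone, i.e.\ the cone lies in a closed half-space. Hence there is in general no hyperplane containing all loop effects, and your ``project along the kernel'' step does not go through.

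This is exactly why a purely Parikh-image characterization of regular $R$ with $R\cap Z_n=\emptyset$ is too coarse for the algorithm. The paper's basic separators come in two flavors: the commutative modular ones $M_k$, and the genuinely non-commutative directional ones
\[
D_{\vec u,k}=\{w\in\Sigma_n^*\mid \langle\varphi(w),\vec u\rangle\ne 0 \text{ and every infix }v\text{ of }w\text{ satisfies }\langle\varphi(v),\vec u\rangle\ge -k\}.
\]
When $\cone(\cA_i)$ lies in the half-space $\{\vec x:\langle\vec x,\vec u\rangle\ge 0\}$, the paper shows $L_i\subseteq D_{\vec u,k}\cup S_{U,\ell}$ with $U=\vec u^\perp$: the drifting part $L_i\cap D_{\vec u,k}$ is \emph{already} regularly separated from $Z_n$, and only the remainder $L_i\setminus D_{\vec u,k}$ stays $\ell$-close to $U$ and can be transduced down to $\Sigma_{n-1}^*$. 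Your recursion is missing this split; without it, there is no way to shed a dimension.

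The obstacle you flag in the full-cone case is also resolved through the $D_{\vec u,k}$ classification, combined with a strengthening of the KLMST pieces (the paper's \emph{modular envelopes}, which guarantee that any finite selection of loops of $\cA_i$ can be realized as infixes of some $w'\in L_i$ in a prescribed Parikh class mod $k$). If some $D_{\vec u,k}$ were needed to cover $L_i$, a loop with $\langle\varphi(\cdot),\vec u\rangle<0$ supplied by the full cone could be inserted $(k{+}1)$-fold as an infix, expelling $w'$ from every $D_{\vec u_j,k}$ while keeping it in the same class mod $k$; hence $L_i\subseteq M_k$. So the missing ingredient in both branches is the same: the non-commutative separators $D_{\vec u,k}$.
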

As before, this significantly generalizes \cref{separability-zvass}.
Our proof of \cref{result-vass-zvass} relies on
\cref{separability-commutative}. At first glance, it might seem that
\cref{result-vass-zvass} is unrelated to regular separability of
commutative VASS languages. However, a simple observation shows that
\cref{result-vass-zvass} also strengthens
\cref{separability-commutative}. This is because for deciding regular
separability of commutative VASS languages, one may assume that one of
the input languages is $Z_n$:
\begin{proposition}\label{commutative-one-side-zn}
  Let $\Gamma_n=\{a_1,\ldots,a_n\}\subseteq\Sigma_n$.
  For any $K,L\subseteq\Gamma_n^*$, we have
  $\sep{\Perms(K)}{\Perms(L)}$ if and only if
  $\sep{\Perms(K\overline{L})}{Z_n}$.
\end{proposition}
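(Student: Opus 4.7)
The plan is to handle the two directions separately, exploiting the two homomorphisms $h,\bar{h}\colon \Sigma_n^*\to \Gamma_n^*$ defined by $h(a_i)=a_i$, $h(\bar{a}_i)=\varepsilon$ and $\bar{h}(a_i)=\varepsilon$, $\bar{h}(\bar{a}_i)=a_i$; these extract, respectively, the positive and the (unbarred) negative letters of a word over $\Sigma_n$. Throughout I write $\Delta:=\{(\vec{p},\vec{p}) : \vec{p}\in\N^n\}$ for the diagonal of $\N^{2n}$, and note that $\Parikh(Z_n)=\Delta$ while $\Parikh(\Perms(K\overline{L}))=\Parikh(K)\times\Parikh(L)$.

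\textbf{($\Rightarrow$).} Given a regular separator $R\subseteq\Gamma_n^*$ of $\Perms(K)$ from $\Perms(L)$, I would first argue that $R$ may be taken commutatively closed. The containments $\Perms(K)\subseteq R$ and $R\cap\Perms(L)=\emptyset$ force $\Parikh(K)\subseteq\Parikh(R)$ and $\Parikh(R)\cap\Parikh(L)=\emptyset$; the semilinear set $\Parikh(R)$ can then be refined to a subset $S\subseteq\N^n$ that is a finite union of products of ultimately periodic subsets of $\N$ still separating $\Parikh(K)$ from $\Parikh(L)$, so that $\Psi^{-1}(S)$ is a commutatively closed regular separator. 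Replacing $R$ by $\Psi^{-1}(S)$, I set
\[
  R' := h^{-1}(R)\cap \bar{h}^{-1}(\Gamma_n^*\setminus R),
\]
which is regular by closure of regular languages under inverse morphism, complement and intersection. The containment $\Perms(K\overline{L})\subseteq R'$ is immediate from $h(w)\in\Perms(K)\subseteq R$ and $\bar{h}(w)\in\Perms(L)\subseteq\Gamma_n^*\setminus R$ for $w\in\Perms(K\overline{L})$. The disjointness $R'\cap Z_n=\emptyset$ follows because, for $w\in Z_n$, $\Parikh(h(w))=\Parikh(\bar{h}(w))$, so by commutative closure of $R$, $h(w)\in R$ if and only if $\bar{h}(w)\in R$.

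\textbf{($\Leftarrow$).} Given a regular separator $R'\subseteq\Sigma_n^*$ of $\Perms(K\overline{L})$ from $Z_n$, I would pass to its Parikh image $M:=\Parikh(R')\subseteq\N^{2n}$. By Parikh's theorem $M$ is semilinear; by $\Perms(K\overline{L})\subseteq R'$ it contains $\Parikh(K)\times\Parikh(L)$; and by $R'\cap Z_n=\emptyset$ it is disjoint from $\Delta$. The goal is to extract from $M$ a recognizable $S\subseteq\N^n$ (a finite union of products of ultimately periodic subsets of $\N$) with $\Parikh(K)\subseteq S$ and $S\cap\Parikh(L)=\emptyset$; then $R:=\Psi^{-1}(S)\subseteq\Gamma_n^*$ is a regular separator of $\Perms(K)$ from $\Perms(L)$.

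The principal obstacle is precisely this extraction. Writing $M=L_1\cup\cdots\cup L_k$ as a finite union of linear sets, each disjoint from $\Delta$, a Farkas-type argument furnishes, for each $L_i$, an integer functional $\lambda_i\in\Z^n$ satisfying $\lambda_i\cdot(\vec{p}-\vec{q})>0$ for every $(\vec{p},\vec{q})\in L_i$. The challenge is then to combine these functionals via a modular argument so that the resulting $S$ is recognizable rather than merely a semilinear half-space-like set, since $\Psi^{-1}$ of an arbitrary semilinear set need not be regular when $n\geq 2$. This combinatorial passage from semilinear separators of Parikh images to recognizable ones is the crux of the argument.
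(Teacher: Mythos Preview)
Your proposal has a genuine gap, and in fact the difficulty you explicitly flag in the backward direction---passing from a semilinear separator of Parikh images to a \emph{recognizable} one---also appears, unacknowledged, in your forward direction. The assertion that $\Parikh(R)$ ``can then be refined to a subset $S\subseteq\N^n$ that is a finite union of products of ultimately periodic subsets of $\N$ still separating $\Parikh(K)$ from $\Parikh(L)$'' is exactly the same nontrivial passage you later call ``the crux.'' It does not follow from semilinearity of $\Parikh(R)$ alone: the diagonal $\{(m,m):m\in\N\}\subseteq\N^2$ is semilinearly but not recognizably separable from its complement. So both directions, as written, rest on an unproven statement of the form ``for these particular pairs, semilinear separability in $\N^n$ implies recognizable separability,'' and your Farkas sketch does not close this.

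The paper sidesteps this entirely by working abstractly with morphisms into finite monoids rather than with explicit separators. It cites the known equivalence $\sep{\Perms(K)}{\Perms(L)}\Leftrightarrow\sep{\Parikh(K)}{\Parikh(L)}$ (separability in the commutative monoid $\N^n$), and then proves a short general lemma: for subsets $K,L$ of \emph{any} monoid $M$, one has $\sep{K}{L}$ iff $\sep{K\times L}{\Delta}$, where $\Delta=\{(m,m):m\in M\}$. The proof is two lines: from a separating $\varphi\colon M\to F$ one takes $\varphi\times\varphi\colon M\times M\to F\times F$; conversely, from a separating $\psi\colon M\times M\to F$ one takes $u\mapsto\psi(u,1)$ and checks that $\psi(u,1)=\psi(v,1)$ with $u\in K$, $v\in L$ would give $\psi(u,v)=\psi(v,v)\in\psi(\Delta)$. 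Applying this with $M=\N^n$ yields $\sep{\Parikh(K)}{\Parikh(L)}\Leftrightarrow\sep{\Parikh(K)\times\Parikh(L)}{\Delta}$; a renaming of coordinates identifies $\Parikh(K)\times\Parikh(L)$ with $\Parikh(K\overline{L})\subseteq\N^{\Sigma_n}$ and $\Delta$ with $\Parikh(Z_n)$; and a second use of the cited equivalence gives $\sep{\Perms(K\overline{L})}{Z_n}$. No explicit recognizable separator needs to be built, and no Farkas-type argument is required.
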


Since $\Perms(K\overline{L})$ is a VASS language and $Z_n$ is a
$\Z$-VASS language, this means \cref{result-vass-zvass} indeed
strengthens \cref{separability-commutative}.

The rest of this section proves \Cref{commutative-one-side-zn}, which
follows from two simple observations.  The first concerns separability
of subsets of monoids.  If $M$ is a monoid and $K,L\subseteq M$ are
subsets, then $K$ and $L$ are called \emph{separable} if there is a
morphism $\varphi\colon M\to F$ into a finite monoid $F$ such that
$\varphi(K)\cap\varphi(L)=\emptyset$. Clearly, if
$K,L\subseteq\Sigma^*$, then $K$ and $L$ are separable if and only if
$\sep{K}{L}$. Therefore, it creates no inconsistencies to write
$\sep{K}{L}$ whenever $K,L\subseteq M$ are separable.  Let
$\Delta=\{(m,m)\mid m\in M\}\subseteq M\times M$.
\begin{lemma}\label{delta-separability}
  Let $K,L\subseteq M$. Then $\sep{K}{L}$ if and only if
  $\sep{K\times L}{\Delta}$.
\end{lemma}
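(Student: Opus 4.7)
My plan is to prove both directions by manipulating morphisms into finite monoids directly (using the definition of separability stated just before the lemma), since the monoid structure on $M\times M$ is exactly what makes the correspondence work.

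For the forward direction, suppose $\varphi\colon M\to F$ witnesses $\sep{K}{L}$, so $\varphi(K)\cap\varphi(L)=\emptyset$. I would take the product morphism $\psi\colon M\times M\to F\times F$ defined by $\psi(m,n)=(\varphi(m),\varphi(n))$. Then $\psi(K\times L)=\varphi(K)\times\varphi(L)$ and $\psi(\Delta)\subseteq\{(f,f)\mid f\in F\}$, and a common element of these two sets would force some $f$ to lie in both $\varphi(K)$ and $\varphi(L)$, contradicting the hypothesis.

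The substantive direction is the converse, and this is where the product structure $M\times M$ is used. Given a morphism $\psi\colon M\times M\to F$ with $\psi(K\times L)\cap\psi(\Delta)=\emptyset$, I would define $\chi\colon M\to F\times F$ by $\chi(m)=(\psi(m,1),\psi(1,m))$; a routine check shows $\chi$ is a morphism because the two components depend multiplicatively on $m$. Now suppose toward a contradiction that there exist $k\in K$ and $\ell\in L$ with $\chi(k)=\chi(\ell)$. Then in particular $\psi(\ell,1)=\psi(k,1)$, and since $(k,\ell)=(k,1)(1,\ell)$ and $(\ell,\ell)=(\ell,1)(1,\ell)$ inside $M\times M$, I get
\[
\psi(k,\ell)=\psi(k,1)\,\psi(1,\ell)=\psi(\ell,1)\,\psi(1,\ell)=\psi(\ell,\ell)\in\psi(\Delta),
\]
while simultaneously $\psi(k,\ell)\in\psi(K\times L)$, contradicting disjointness.

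I do not expect any real obstacle here: the only subtlety is recognising that one must combine the two ``one-sided'' morphisms $m\mapsto\psi(m,1)$ and $m\mapsto\psi(1,m)$ into a single morphism into $F\times F$, and that the identity $\psi(m,n)=\psi(m,1)\,\psi(1,n)$ in the commutative submonoid generated by $\psi(M\times\{1\})$ and $\psi(\{1\}\times M)$ is precisely what lets a matched pair under $\chi$ land in $\psi(\Delta)$.
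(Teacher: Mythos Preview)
Your proof is correct and follows essentially the same approach as the paper: both directions use the product morphism for the easy direction, and the converse uses the factorisation $(k,\ell)=(k,1)(1,\ell)$ to derive a contradiction. The only difference is that the paper defines $\varphi'(m)=\psi(m,1)$ as a morphism into $F$ and uses just that single component, whereas you package both one-sided morphisms into $F\times F$; since your contradiction uses only the first component anyway, the second is redundant, but this is harmless.
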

\begin{proof}
  If $\sep{K}{L}$ with $\varphi\colon M\to F$, 
  define $\varphi'\colon M\times M\to F\times F$ by
  $\varphi'(u,v)=(\varphi(u),\varphi(v))$. Then clearly
  $\varphi'(K\times L)\cap \varphi'(\Delta)=\emptyset$. Conversely, if
  $\sep{K\times L}{\Delta}$ with a morphism
  $\varphi\colon M\times M\to F$, let $\varphi'\colon M\to F$ be
  the morphism with $\varphi'(u)=\varphi(u,1)$ for $u\in M$. Then we have
  $\varphi'(K)\cap\varphi'(L)=\emptyset$, because if there were $u\in K$, $v\in L$ with $\varphi'(u)=\varphi'(v)$, then
  \begin{multline*}
    \varphi(K\times
    L)\ni\varphi(u,v)=\varphi(u,1)\varphi(1,v) \\
    =\varphi(v,1)\varphi(1,v)=\varphi(v,v)\in\varphi(\Delta),
  \end{multline*}
  and thus $\varphi(K\times L)\cap\varphi(\Delta)\ne\emptyset$, which
  is impossible.
  \qed
\end{proof}

For subsets $S,T\subseteq\N^\Sigma$, separability is equivalent to
unary separability as studied by
\cite{DBLP:conf/stacs/ClementeCLP17}. We now have:
\begin{align*}
  \sep{\Perms(K)}{\Perms(L)} &\Leftrightarrow \sep{\Parikh(K)}{\Parikh(L)} \Leftrightarrow \sep{\Parikh(K)\times\Parikh(L)}{\Delta} \\
                               & \Leftrightarrow \sep{\Parikh(K)\times\Parikh(\bar{L})}{\Delta'}\Leftrightarrow\sep{\Perms(K\bar{L})}{Z_n},
\end{align*}
where $\Delta'=\{\vec u\in\N^{\Sigma\cup\bar{\Sigma}}\mid \text{$\vec u(a_i)=\vec u(\bar{a}_i)$ for $i\in[1,n]$}\}$.
Here, the third equivalence is just renaming components.
The second equivalence is \cref{delta-separability}; the first
and last equivalence are due to an observation
from~\cite{DBLP:conf/stacs/ClementeCLP17}: In \cite[Lemma
11]{DBLP:conf/stacs/ClementeCLP17}, it is shown that for languages
$K,L\subseteq\Sigma^*$, we have $\sep{\Perms(K)}{\Perms(L)}$ if and
only if $\sep{\Parikh(K)}{\Parikh(L)}$.  This completes 
\cref{commutative-one-side-zn}.

\section{VASS vs. 1-VASS}\label{vass-1vass}

In this section, we introduce our approach to regular separability
together with the first application: Regular separability of VASS
languages and 1-dim.\ VASS languages.

Our approach is inspired by the decision procedure for regular
separability for one dimensional VASS~\cite{DBLP:conf/lics/CzerwinskiL17}.
There, given languages $K$ and $L$, the idea is to construct
\emph{approximants} $K_k$ and $L_k$ for $k\in\N$. Here, $K_k$ and
$L_k$ are regular languages with $K\subseteq K_k$ and $L\subseteq L_k$
for which one can show that $\sep{K}{L}$ if and only if there is a
$k\in\N$ with $K_k\cap L_k=\emptyset$. The latter condition is then
checked algorithmically.

We simplify this idea in two ways. First, we show that for many
language classes, one may assume that one of the two input languages
is fixed (or fixed up to a parameter). Roughly speaking, if a language
class $\cK$ is defined by machines involving a finite-state control, then
 $\cK$ is typically a full trio since a
transduction can be applied using a product construction in the
finite-state control. Moreover, there is often a simple set $\cG$ of
languages so that $\cK$ is the full trio generated by $\cG$. For example,
as mentioned above, $\cV_n$ is generated by $D_n$ for each $n\ge
1$. This makes the following simple \lcnamecref{movetrans} very
useful.
\begin{lemma}\label{movetrans}
Let $T$ be a rational transduction. Then $\sep{L}{TK}$ if and only if $\sep{T^{-1}L}{K}$.
\end{lemma}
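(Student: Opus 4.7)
The plan is to use that the class of regular languages is a full trio, so that applying a rational transduction or its inverse to a regular language yields a regular language. Under this closure, a separator on one side of $T$ can be transported to a separator on the other side. Throughout, let $T\subseteq\Sigma^*\times\Gamma^*$, $K\subseteq\Sigma^*$, and $L\subseteq\Gamma^*$.

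For the forward direction, suppose a regular $R\subseteq\Gamma^*$ witnesses $\sep{L}{TK}$. I would take $R' := T^{-1}R$ as the candidate separator. Since $T^{-1}$ is itself rational (as recalled in \cref{prelims}), $R'$ is regular. The inclusion $T^{-1}L\subseteq R'$ is immediate, because $L\subseteq R$ implies $T^{-1}L\subseteq T^{-1}R=R'$. The disjointness $R'\cap K=\emptyset$ follows by contraposition: any $u\in R'\cap K$ would come with some $v\in R$ and $(u,v)\in T$, placing $v$ in both $TK$ and $R$, contradicting $R\cap TK=\emptyset$.

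For the backward direction, suppose a regular $R'\subseteq\Sigma^*$ witnesses $\sep{T^{-1}L}{K}$. Here the right candidate is $R := \Gamma^*\setminus T(\Sigma^*\setminus R')$, which is regular: $\Sigma^*\setminus R'$ is regular, the rational transduction $T$ maps it to a regular language, and regular languages are closed under complement. To see $L\subseteq R$, assume for contradiction that some $v\in L$ lies in $T(\Sigma^*\setminus R')$, witnessed by $(u,v)\in T$ with $u\notin R'$; then $u\in T^{-1}L\subseteq R'$, a contradiction. For $R\cap TK=\emptyset$, any $v\in TK$ has a witness $u\in K$ with $(u,v)\in T$; since $R'\cap K=\emptyset$, we have $u\in\Sigma^*\setminus R'$, hence $v\in T(\Sigma^*\setminus R')$, so $v\notin R$.

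The only real subtlety is choosing the correct candidate in the backward direction: the naive attempt $R:=TR'$ would not force disjointness from $TK$, so one genuinely needs the complement--image--complement construction to turn the universal statement ``every $T$-preimage of $v$ lies in $R'$'' into a regular condition. Apart from this, the argument is a routine chase of definitions resting on the closure of regular languages under rational transductions and their inverses, a standard fact (see~\cite{Ber79}).
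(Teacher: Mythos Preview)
Your proof is correct, and the forward direction coincides with the paper's. The backward direction, however, is handled differently: instead of building the complement--image--complement separator $R=\Gamma^*\setminus T(\Sigma^*\setminus R')$, the paper simply observes that the forward direction, applied with $T^{-1}$ in place of $T$ and with the roles of the two languages swapped, already yields the converse. Concretely, from $\sep{T^{-1}L}{K}$ one has $\sep{K}{T^{-1}L}$, and since $T^{-1}$ is itself a rational transduction, the forward direction gives $\sep{(T^{-1})^{-1}K}{L}$, i.e.\ $\sep{TK}{L}$. So the ``real subtlety'' you identify dissolves once one notices this symmetry. Your explicit construction has the minor advantage of exhibiting the separator concretely (and, incidentally, $\Gamma^*\setminus T(\Sigma^*\setminus R')$ is exactly the separator the paper's symmetry argument produces implicitly, namely the complement of $T^{-1}$ applied to the complement of $R'$), whereas the paper's argument is shorter and makes the underlying duality transparent.
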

\begin{proof}
Suppose $L\subseteq R$ and $R\cap TK=\emptyset$ for some regular
$R$. Then clearly $T^{-1}L\subseteq T^{-1}R$ and $T^{-1}R\cap
K=\emptyset$. Therefore, the regular set $T^{-1}R$ witnesses
$\sep{T^{-1}L}{K}$. Conversely, if $\sep{T^{-1}L}{K}$, then
$\sep{K}{T^{-1}L}$ and hence, by the first direction,
$\sep{(T^{-1})^{-1}K}{L}$. Since $(T^{-1})^{-1}=T$, this reads
$\sep{TK}{L}$ and thus $\sep{L}{TK}$.
\qed
\end{proof}
Suppose we have full trios $\cK_0$ and $\cK_1$ generated by languages
$G_0$ and $G_1$, respectively.  Then, to decide if
$\sep{T_0G_0}{T_1G_1}$, we can check whether
$\sep{T_1^{-1}T_0G_0}{G_1}$.  Since $T_1^{-1}T_0$ is also a rational
transduction and hence $T_1^{-1}T_0G_0$ belongs to $\cK_0$, this means
we may assume that one of the input languages is $G_1$. This effectively
turns separability into a decision problem with one input language $L$
where we ask whether $\sep{L}{G_1}$.

Going further in this direction, instead of considering approximants
of two languages, we just consider regular overapproximations of $G_1$
and decide whether $L$ intersects all of them. However, we find it
more convenient to switch to the complement and think in terms of
``basic separators of $G_1$'' instead of overapproximations of
$G_1$. Informally, we call a family of regular languages \emph{basic
  separators of $G_1$} if (i)~each of them is disjoint from $G_1$ and
(ii)~every regular language $R$ disjoint from $G_1$ is included in a
finite union of basic separators.  This implies that $\sep{L}{G_1}$ if
and only if there exists a finitely many basic separators
$S_1, \ldots, S_k$ such that $L$ is contained in the union
$\bigcup_{i \in [1,k]} S_i$. Note that for each language $G_1$ there
trivially exists a family of basic separators; just take the family of
all regular languages disjoint from $G_1$. Our approach is to identify
a family of basic separators for which it is decidable whether a
language from $\cK_0$ is included in a finite union of them.

\paragraph{Basic separators for one-dimensional VASS} Let us see
this approach in an example and prove
\cref{result-vass-onevass}. Since $\cV_1$ is generated as a full trio
by $D_1$, \cref{movetrans} tells us that it suffices to decide whether
a given VASS language $L$ fulfills $\sep{L}{D_1}$. Now the first step is to
develop a notion of basic separators for $D_1$.


Since $D_1\subseteq\Sigma_1^*$, we assume now that $n=1$, meaning
$\varphi\colon\Sigma_1^*\to \Z$. One way a
finite automaton can guarantee non-membership in $D_1$ is by modulo
counting.  For $k\in\N$, let
\[ M_k=\{w\in\Sigma_1^* \mid \varphi(w)\not\equiv 0\bmod{k}\},\]
which is regular. Another option for an automaton to make
sure an input word $w$ avoids $D_1$ is to guarantee (i)~for
prefixes $v$ of $w$, that $\varphi(v)$ does not exceed some $k$ if
$\fdrop(v)=0$ and (ii)~$\varphi(w) \neq 0$.  For $w\in\Sigma_1^*$, let
$\mu(w)=\max\{\varphi(v) \mid \text{$v$ is a prefix of $w$ and
  $\fdrop(v)=0$}\}$ and
\[ B_{k}=\{w\in\Sigma_1^* \mid \text{$w\notin D_1$ and $\mu(w)\le k$}\} \]
Here, the B stands for ``bounded counter value''.
It is obvious that the languages $B_k$ are disjoint from $D_1$. We observe that they are regular:
A word $w$ with $\mu(w)\le k$ avoids $D_1$ if and only if
(i)~$\varphi$ drops below zero after a prefix where $\varphi$ is
confined to $[0,k]$ or (ii)~$\varphi$ stays above zero and thus
assumes values in $[0,k]$ throughout. The third type of separator
is a symmetric right-to-left version of $B_{k}$, namely 
\begin{align*}
  \rev{\bar{B}_{k}}&=\{\rev{\bar{w}} \mid w\in B_{k}\} \\
                   &=\{w\in \Sigma_1^* \mid \text{$w\notin D_1$ and $\mu(\rev{\bar{w}})\le k$}\}
\end{align*}
Then we have indeed a family of basic separators for $D_1$:
%
%
%
%
\begin{restatable}{lemma}{basicSeparatorsOneVASS}\label{basic-separators-1vass}
  Let $R\subseteq\Sigma_1^*$ be a regular language. Then
  $R\cap D_1=\emptyset$ if and only if $R$ is included in
  $M_k\cup B_\ell\cup\rev{\bar{B}_{m}}$ for some $k,\ell,m\in\N$.
\end{restatable}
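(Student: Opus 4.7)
The easy direction is immediate from the definitions: each of $M_k$, $B_{\ell}$, and $\rev{\bar{B}_{m}}$ is disjoint from $D_1$, so any finite union and any subset of such a union is disjoint from $D_1$.

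For the hard direction, fix a DFA $\cA$ with $N$ states accepting $R$ and set $k = N!$ and $\ell = m = N$. The plan is to show $R \subseteq M_k \cup B_{\ell} \cup \rev{\bar{B}_{m}}$ by contradiction: suppose some $w \in R$ avoids the union. From $w \notin M_k$ we get $k \mid \varphi(w)$; combining $R \cap D_1 = \emptyset$ with $w \notin B_{\ell} \cup \rev{\bar{B}_{m}}$ forces $\mu(w) > N$ and $\mu(\rev{\bar{w}}) > N$. The idea is to pump two loops of $\cA$ that occur in the run on $w$ into a word $w' \in R \cap D_1$, which would contradict $R \cap D_1 = \emptyset$.

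The loops come from a pigeonhole argument. Since $\mu(w) > N$, the path of $w$ attains each of the heights $0, 1, \ldots, N$ for the first time at strictly increasing positions within the initial safe segment (where it stays $\geq 0$); two of the $N + 1$ DFA-states visited at these positions must coincide, giving a loop $c_+$ with $\varphi(c_+) = a \in \{1, \ldots, N\}$ located entirely within the nonnegative region. Symmetrically applied to $\rev{\bar{w}}$, one obtains a loop $c_-$ with $\varphi(c_-) = -b \in \{-N, \ldots, -1\}$ sitting in the terminal region where the path stays $\geq \varphi(w)$. Generically, the initial safe prefix and reverse-safe suffix are disjoint in $w$, so the chosen positions of $c_+$ and $c_-$ are too; in the overlap case $\fdrop(w) \geq 0$, one instead extracts $c_+$ from the first ascent to the global maximum and $c_-$ from the descent, which are automatically disjoint.

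Since $\gcd(a, b) \mid N! = k \mid \varphi(w)$, the equation $p a - q b = -\varphi(w)$ admits nonnegative integer solutions $(p, q)$, and by shifting along the kernel direction $(b/\gcd(a,b), a/\gcd(a,b))$ we may take $p, q$ arbitrarily large. Let $w'$ be obtained from $w$ by replacing the $c_+$-occurrence with $c_+^{p+1}$ and the $c_-$-occurrence with $c_-^{q+1}$; then $w' \in R$ by construction and $\varphi(w') = \varphi(w) + p a - q b = 0$. For $p, q$ large enough one verifies $\fdrop(w') \geq 0$ on three path segments: during the $c_+$-insertions the path stays in the original nonnegative region; the middle portion of $w$ is shifted up by $p a$, which dominates any negative dip of the original path; and during the $c_-$-insertions and the final portion the cumulative shift is $p a - q b = -\varphi(w)$, while the reverse-safe region satisfies $y_l \geq \varphi(w)$, so the shifted values $y_l - \varphi(w)$ are $\geq 0$. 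Hence $w' \in R \cap D_1$, contradicting disjointness. The main obstacle is exactly this geometric bookkeeping together with the overlap placement of $c_+, c_-$ when $\fdrop(w) \geq 0$; the uniform choice $k = N!$ is valid because every simple loop of $\cA$ has $|\varphi| \leq N$, so $\gcd(a, b)$ is always a divisor of $k$.
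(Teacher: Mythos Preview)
Your approach is essentially the paper's: set $k=N!$, $\ell=m=N$, take $w\in R$ outside $M_k\cup B_\ell\cup\rev{\bar B_m}$, extract a positive-effect loop $c_+$ from a rising safe prefix and a negative-effect loop $c_-$ from a falling reverse-safe suffix, solve $pa-qb=-\varphi(w)$ using $\gcd(a,b)\mid N!\mid\varphi(w)$, and pump into $R\cap D_1$. The arithmetic and the three-segment verification of $\fdrop(w')=0$ are fine.

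The gap is your disjointness argument for $c_+$ and $c_-$. Your assertion that the overlap case coincides with $\fdrop(w)\ge 0$ is false. Consider $w$ whose path is $0,1,\ldots,N{+}1,N,\ldots,0,-1,\ldots,-N!$: here $\varphi(w)=-N!$, so the reverse-safe suffix (the region where the path stays $\ge\varphi(w)$) is all of $w$ and hence overlaps the initial safe prefix, yet $\fdrop(w)=-N!<0$. Your fallback extraction ``from the first ascent to the global maximum and the descent'' is also underspecified (which descent? why is it long enough for a pigeonhole on $N{+}1$ levels?).

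The paper closes this gap with a one-line minimality argument that you should adopt. Instead of first-passage positions, let $u$ be the \emph{shortest} prefix of $w$ with $\varphi(u)=\mu(w)$ and $v$ the shortest suffix with $\varphi(v)=-\mu(\rev{\bar w})$; both inherit the safe/reverse-safe property. If $w=xyz$ with $u=xy$, $v=yz$, $y\neq\varepsilon$, then minimality of $u$ forces $\varphi(x)<\varphi(xy)$ (since $\varphi(u)=\mu(w)$ is the maximum over safe prefixes), so $\varphi(y)>0$; symmetrically minimality of $v$ forces $\varphi(y)<0$, a contradiction. Hence $u$ and $v$ never overlap, and since $\varphi(u)>N$ and $\varphi(v)<-N$, one finds the loops $c_+$ inside $u$ and $c_-$ inside $v$ by pigeonhole exactly as you do. With this replacement your proof is complete and matches the paper's.
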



\begin{proof}
The proof of \cref{basic-separators-1vass} is very similar to the proof
of Lemma~8 in~\cite{DBLP:conf/lics/CzerwinskiL17}, but phrased in a slightly different setting.

  The ``if'' direction is obvious, so let us prove the ``only if''.
  Suppose $R\cap D_1=\emptyset$ and $R=\Lang{\cA}$ for an automaton
  $\cA$ with $n$ states. We claim that then
  $R\subseteq M_{n!}\cup B_{n}\cup\rev{\bar{B}_{n}}$.

  We proceed by a relatively simple pumping argument.
  Towards a contradiction, we assume that there is a word $w\in R$
  with $w\notin M_{n!}\cup B_{n}\cup \rev{\bar{B}_{n}}$.
  %
  %
  This means $\varphi(w)\equiv 0\bmod{n!}$ and $w$ has a prefix $u'$
  with $\fdrop(u')=0$ and $\varphi(u')=\mu(w)>n$ and a suffix $v'$
  with $\fdrop(\rev{\bar{v'}})=0$ and
  $\varphi(\rev{\bar{v'}})=\mu(\rev{\bar{w}})>n$, hence
  $\varphi(v')<-n$. We aim at pumping $u'$ and $v'$ such we get a word in $D_1$
  and finish with contradiction.
  Let $u$ be the shortest prefix of $w$ with
  $\varphi(u)=\varphi(u')$ and let $v$ be the shortest suffix with
  $\varphi(v')=\varphi(v)$. Then $|u|\le |u'|$ and $|v|\le|v'|$, which
  means in particular $\fdrop(u)=0$ and $\fdrop(\rev{\bar{v}})=0$.

  Let us show that $u$ and $v$ do not overlap in $w$, i.e.
  $|w|\ge |u|+|v|$. If they do overlap, we can write $w=xyz$ so that
  $u=xy$ and $v=yz$ with $y\ne\varepsilon$. Then by minimality of $u$,
  we have $\varphi(x)<\varphi(xy)$ and thus
  $\varphi(y)>0$. Symmetrically, minimality of $v$ yields
  $\varphi(\rev{\bar{z}})<\varphi(\rev{\bar{yz}})$ and thus
  $-\varphi(y)=\varphi(\rev{\bar{y}})>0$, contradicting
  $\varphi(y)>0$. Thus $u$ and $v$ do not overlap and we can write
  $w=uw'v$.
  
  Since $\varphi(u)>n$, we can decompose $u=u_1u_2u_3$ so that
  $1\le\varphi(u_2)\le n$ and in the run of $\cA$ for $w$, $u_2$ is read
  on a cycle. Analogously, since $\varphi(v)<-n$, we can decompose
  $v=v_1v_2v_3$ so that $-n\le\varphi(v_2)\le -1$ and $v_2$ is read on a
  cycle.
  
  Since $\varphi(w)\equiv 0\bmod{n!}$ and $\varphi(u_2)\in[1,n]$ and
  $\varphi(v_2)\in[-n,-1]$, there are $p,q\in\N$ with
  $\varphi(w)+p\varphi(u_2)+q\varphi(v_2)=0$. Moreover, we also
  have
  \begin{equation} \varphi(w)+(p+r|\varphi(v_2)|)\varphi(u_2)+(q+r|\varphi(u_2)|)\varphi(v_2)=0 \label{effect-zero} \end{equation}
  for every $r\in\N$. Consider the word
  \[ w_r = u_1u_2^{p+r|\varphi(v_2)|}u_3 w'v_1v_2^{q+r|\varphi(u_2)|}v_3. \]
  
  Since $u_2$ and $v_2$ are read on cycles, we have $w_r \in
  R$. Moreover, \cref{effect-zero} tells us that $\varphi(w_r)=0$.
  Finally, since $\fdrop(u)=0$ and $\varphi(u_2)>0$, for large
  enough $r$, we have $\fdrop(w_r)=0$ and hence $w_r \in D_1$. This
  is in contradiction to $R\cap D_1=\emptyset$.
  \qed
\end{proof}


\paragraph{Deciding separability}
The next step in our approach is to decide whether a given VASS
language $L$ is contained in $M_k\cup B_\ell\cup \rev{\bar{B}_m}$ for
some $k,\ell,m\in\N$.  Of course this is the case if and only if
$L\subseteq M_k\cup B_k\cup \rev{\bar{B}_k}$ for some $k\in\N$.
Thus, \cref{basic-separators-1vass} essentially tells us that whether
$\sep{L}{D_1}$ holds only depends on three numbers associated to each
word from $L$.  Consider the function $\sigma\colon \Sigma_1^*\to\N^3$
with $\sigma(w)=(\mu(w),\varphi(w),\mu(\rev{\bar{w}}))$. We call a
subset $S\subseteq\N^3$ \emph{separable} if there is a $k\in\N$ so
that for every $(x_1,x_2,x_3) \in S$, we have $x_1\le k$ or $x_3\le k$
or $x_2\not\equiv 0\bmod{k}$. Then, \cref{basic-separators-1vass} can
be formulated as:
\begin{lemma}
  Let $L\subseteq\Sigma_1^*$.  If $L\cap D_1=\emptyset$, then
  $\sep{L}{D_1}$ if and only if $\sigma(L)$ is separable.
\end{lemma}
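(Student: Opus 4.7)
The plan is to derive both directions of the equivalence directly from \cref{basic-separators-1vass}, essentially by restating the basic-separators characterization in terms of the triple $\sigma(w) = (\mu(w), \varphi(w), \mu(\rev{\bar{w}}))$.

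First I would prove the ``only if'' direction. Assuming $\sep{L}{D_1}$, let $R$ be a regular separator with $L \subseteq R$ and $R \cap D_1 = \emptyset$. By \cref{basic-separators-1vass}, $R \subseteq M_k \cup B_\ell \cup \rev{\bar{B}_m}$ for some $k,\ell,m \ge 1$. A small monotonicity observation lets me collapse these three parameters into one: $B_\ell \subseteq B_N$ whenever $\ell \le N$, $\rev{\bar{B}_m} \subseteq \rev{\bar{B}_N}$ whenever $m \le N$, and $M_k \subseteq M_N$ whenever $k \mid N$ (since $\varphi(w) \equiv 0 \bmod N$ forces $\varphi(w) \equiv 0 \bmod k$). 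Taking $N := k \ell m$ therefore yields $L \subseteq M_N \cup B_N \cup \rev{\bar{B}_N}$. Translating each alternative to $\sigma$, every $w \in L$ satisfies $\varphi(w) \not\equiv 0 \bmod N$ (if $w \in M_N$), $\mu(w) \le N$ (if $w \in B_N$), or $\mu(\rev{\bar{w}}) \le N$ (if $w \in \rev{\bar{B}_N}$). This is precisely separability of $\sigma(L)$ with witness $N$.

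For the ``if'' direction, suppose $\sigma(L)$ is separable with witness $k$. Then for every $w \in L$, at least one of $\mu(w) \le k$, $\mu(\rev{\bar{w}}) \le k$, $\varphi(w) \not\equiv 0 \bmod k$ holds. Since $L \cap D_1 = \emptyset$, each $w \in L$ already satisfies $w \notin D_1$, so the first alternative places $w$ in $B_k$, the second in $\rev{\bar{B}_k}$, and the third in $M_k$. Hence $L \subseteq M_k \cup B_k \cup \rev{\bar{B}_k}$, a regular language disjoint from $D_1$, witnessing $\sep{L}{D_1}$.

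Both directions amount to little more than unpacking the definitions against \cref{basic-separators-1vass}. The only point that requires any care is that the family $M_k$ is not monotone in the usual order on $k$ but only in the divisibility order, so collapsing $k, \ell, m$ into a single parameter calls for a common multiple rather than a plain maximum; this is the ``main obstacle'' in an otherwise routine argument. It also explains why the hypothesis $L \cap D_1 = \emptyset$ is genuinely used in the backward direction: it is precisely what allows bounds on $\mu(w)$ or $\mu(\rev{\bar{w}})$ alone to imply membership in $B_k$ or $\rev{\bar{B}_k}$.
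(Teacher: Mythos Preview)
Your proof is correct and follows exactly the route the paper intends: the paper presents this lemma as an immediate reformulation of \cref{basic-separators-1vass} without a separate proof, and you have simply written out the unpacking. Your observation about divisibility monotonicity of $M_k$ (so that the three parameters collapse into one via a common multiple rather than a maximum) is the only nontrivial step, and you handle it correctly.
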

This enables us to transform $L$ into a
bounded language $\hat{L}$ that behaves the same
in terms of separability from $D_1$. Let
\begin{multline*}
  \hat{L}=\{a_1^m\bar{a}_1^{m+1} a_1^r\bar{a}_1^s a_1^{n+1}\bar{a}_1^{n} \mid \exists w\in L \colon \\
  m\le\mu(w),~~n\le\mu(\rev{\bar{w}}),~~r-s=\varphi(w) \}.
\end{multline*}
Note that if $v=a_1^m\bar{a}_1^{m+1} a_1^r\bar{a}_1^s a_1^{n+1}\bar{a}_1^{n}$,
then we have $\mu(v)=m$ and $\mu(\rev{\bar{v}})=n$ and $\varphi(v)=r-s$. Therefore, the set
$\sigma(\hat{L})$ is separable if and only if $\sigma(L)$ is
separable. Hence, we have:
\begin{lemma}
  For every $L\subseteq\Sigma_1^*$ with $L\cap D_1=\emptyset$, we have
  $\sep{L}{D_1}$ if and only if $\sep{\hat{L}}{D_1}$.
\end{lemma}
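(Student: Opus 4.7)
The plan is to reduce both $\sep{L}{D_1}$ and $\sep{\hat L}{D_1}$ to separability of certain subsets of $\N^3$ via the preceding lemma, and then to verify that these two subsets have the same separability status. Before invoking the lemma on $\hat L$, I would first check that $\hat L\cap D_1=\emptyset$: every word $v=a_1^m\bar{a}_1^{m+1}a_1^r\bar{a}_1^s a_1^{n+1}\bar{a}_1^{n}$ in $\hat L$ has $a_1^m\bar{a}_1^{m+1}$ as a prefix, whose $\varphi$-value is $-1$, so $\fdrop(v)\le -1$ and hence $v\notin C_1\supseteq D_1$. Applying the preceding lemma to both $L$ and to $\hat L$ then reduces the goal to showing that $\sigma(L)$ is separable if and only if $\sigma(\hat L)$ is separable.

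Next, using the identities $\mu(v)=m$, $\varphi(v)=r-s$, and $\mu(\rev{\bar v})=n$ already recorded in the excerpt (routinely verified by tracking the counter), and noting that every integer can be written as $r-s$ with $r,s\ge 0$, I would observe that
\[ \sigma(\hat L)=\{(m,\varphi(w),n)\mid w\in L,\ m\le \mu(w),\ n\le \mu(\rev{\bar w})\}, \]
which is exactly the closure of $\sigma(L)$ under decreasing the first and third coordinates.

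The remaining step is that this kind of downward closure preserves the separability property. One direction is immediate from $\sigma(L)\subseteq\sigma(\hat L)$. For the other, suppose $k$ witnesses separability of $\sigma(L)$, and pick any $(m,\varphi(w),n)\in\sigma(\hat L)$ together with a corresponding $w\in L$. Since $(\mu(w),\varphi(w),\mu(\rev{\bar w}))\in\sigma(L)$, one of the conditions $\mu(w)\le k$, $\mu(\rev{\bar w})\le k$, or $\varphi(w)\not\equiv 0\pmod{k}$ holds; in the first two cases, $m\le\mu(w)\le k$ or $n\le\mu(\rev{\bar w})\le k$ by monotonicity, and in the third case the modular condition transfers verbatim. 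Thus the same $k$ separates $\sigma(\hat L)$.

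The proof is essentially routine once the classification \cref{basic-separators-1vass} and the preceding $\sigma$-lemma are available. The only mild subtlety, which I would be careful to handle correctly, is the asymmetry $m\le\mu(w)$ (rather than $m=\mu(w)$) in the definition of $\hat L$: this is precisely what forces the downward-closure step above, but is harmless because the separability condition is itself monotone in the coordinates that can decrease.
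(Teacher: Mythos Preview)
Your proof is correct and follows essentially the same approach as the paper, which simply asserts that $\sigma(\hat L)$ is separable if and only if $\sigma(L)$ is separable and appeals to the preceding lemma. You actually fill in two details the paper leaves implicit: the verification that $\hat L\cap D_1=\emptyset$ (needed to apply the $\sigma$-lemma to $\hat L$), and the monotonicity argument showing that the downward closure in the first and third coordinates preserves separability.
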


Using standard VASS constructions, we can turn $L$ into $\hat{L}$.
\begin{restatable}{lemma}{boundedReplacement}\label{bounded-replacement}
  Given a VASS language $L\subseteq\Sigma_1^*$, one can construct a
  VASS for $\hat{L}$.
\end{restatable}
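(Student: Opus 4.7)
The plan is to construct $V'$ from the given VASS $V = (Q, T, s, t)$ with $\Lang{V} = L$ by interleaving the production of the six output blocks $a_1^m$, $\bar{a}_1^{m+1}$, $a_1^r$, $\bar{a}_1^s$, $a_1^{n+1}$, $\bar{a}_1^n$ with a simulation of $V$ on a nondeterministically chosen word $w$. All cardinality equalities in the output (for example, the matching of the $m$ copies of $a_1$ in block~1 with the $m$ copies of $\bar{a}_1$ in block~3) will be enforced by auxiliary counters incremented in one block and decremented in another, exploiting the VASS acceptance condition ``all counters zero at the target'' as a surrogate for zero-tests.

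First, I would guess $m$ by incrementing two fresh counters $c_m$ and $c_m'$ with each $a_1$ emitted in block~1; block~2 emits the single extra $\bar{a}_1$ without touching any counter; block~3 emits $\bar{a}_1$'s while decrementing $c_m'$. Acceptance forces $c_m' = 0$ at the end, so block~3 contains exactly $m$ copies of $\bar{a}_1$. The counter $c_m$ is preserved for the $\mu$-check described below, and an analogous pair $c_n, c_n'$ is set up in blocks~5 and~6 to handle $n$.

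For the $\mu$-check, $V'$ simulates $V$ on $w$ and nondeterministically splits $w$ into a prefix $v_1$ and a remainder. During $v_1$, a counter $c_\varphi$ is incremented by each $a_1$-labeled $V$-transition and decremented by each $\bar{a}_1$-labeled one; the built-in non-negativity of $c_\varphi$ automatically enforces $\fdrop(v_1) = 0$. A transitional sub-phase at the end of $v_1$ pairs decrements of $c_m$ and $c_\varphi$, optionally followed by solo decrements of $c_\varphi$ to drain it; since $c_m$ is modified nowhere else and must vanish at acceptance, exactly $m$ paired decrements occur, witnessing $\varphi(v_1) \ge m$. The symmetric check $n \le \mu(\rev{\bar w})$ during a suffix $v_2$ of $w$ uses the dual observation that every suffix of $v_2$ has non-positive $\varphi$-value iff a counter initialized to $-\varphi(v_2)$ and updated by each character's $\varphi$-value stays non-negative throughout and ends at zero; the initial value is a nondeterministic guess, enforced correct by the end condition, and the $\ge n$ requirement is imposed by pairing decrements of $c_n$ with the setup increments.

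For the middle block, a counter $c_s$ accumulates the total count of $\bar{a}_1$-labeled $V$-transitions taken anywhere in the simulation, while each $a_1$-labeled $V$-transition directly emits an $a_1$ into block~3; block~4 then emits $\bar{a}_1$'s while decrementing $c_s$ to zero, yielding $r = |w|_{a_1}$, $s = |w|_{\bar{a}_1}$ and thus $r - s = \varphi(w)$. The main technical subtlety will be to ensure that the witnessing $v_1$ and $v_2$ can be chosen disjoint inside $w$, which follows from a short combinatorial argument showing $|v_1| + |v_2| \le |w|$ whenever $m \le \mu(w)$ and $n \le \mu(\rev{\bar w})$ (obtained by inclusion--exclusion using $\fdrop(v_1) = 0$ and the non-positivity of all suffixes of $v_2$). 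All counter manipulations used are standard VASS gadgets---pair-increment/decrement, nondeterministic guess-and-verify via the end condition, and finite-state phase control---so once these are composed with $V$'s transitions, the desired VASS $V'$ with $\Lang{V'} = \hat L$ emerges.
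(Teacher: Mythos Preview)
Your direct construction is in the right spirit, but there is a concrete gap: your VASS does not accept all of $\hat L$. The definition of $\hat L$ quantifies over \emph{all} pairs $(r,s)$ with $r-s=\varphi(w)$, whereas your construction outputs only the single pair $r=|w|_{a_1}$, $s=|w|_{\bar a_1}$. For instance, if $L=\{a_1\}$ then $\bar a_1\, a_1^{5}\bar a_1^{4}\, a_1\in\hat L$ (taking $m=n=0$, $r=5$, $s=4$), but your machine never produces it. The fix is easy---store $r$ and $s$ in counters and add silent loops that simultaneously add or subtract $1$ from both before emitting blocks~3 and~4---but as written the lemma is not established. (For the downstream separability argument this does not matter, since $\sigma$ depends only on $r-s$; but the lemma is about $\hat L$ itself.) There is also a smaller issue with the disjointness of $v_1$ and $v_2$: from ``$\fdrop(v_1)=0$'' and ``all suffixes of $v_2$ have non-positive $\varphi$'' alone you only get $-n\le\varphi(y)\le m$ for an overlap $y$, which is no contradiction. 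You need to take $v_1,v_2$ \emph{shortest} with $\varphi(v_1)=m$ and $\varphi(v_2)=-n$; minimality then forces $\varphi(y)>0$ and $\varphi(y)<0$ simultaneously---this is exactly the overlap argument already used in the proof of \cref{basic-separators-1vass}. Finally, your description sets up $c_n$ in blocks~5 and~6 but consumes it during the Phase~C setup, which happens earlier; this needs rerouting through an auxiliary counter carried forward from the Phase~C setup.

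The paper takes a more modular route that sidesteps all of these issues. It introduces VAS relations $R\subseteq\Sigma^*\times\N^m$, proves they are closed under a synchronised product $\oplus$ (\cref{vas-relation-product}), and shows how to project a VAS language through such a relation (\cref{vas-relation-result}). It then exhibits three tiny explicit VASSes for the relations $R_1=\{(w,k)\mid k\le\mu(w)\}$, $R_2=\{(w,r,s)\mid r-s=\varphi(w)\}$, $R_3=\{(w,k)\mid k\le\mu(\rev{\bar w})\}$, combines them via $\oplus$, and renames letters. Because the three checks run \emph{in parallel} on the same $w$, each relation may pick its own witnessing prefix or suffix, so the disjointness question never arises; and the relation $R_2$ has built-in $(\pm 1,\pm 1)$ loops, so it already yields all admissible $(r,s)$. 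Your sequential-phase design is workable but buys these extra obligations; the paper's product-of-relations approach is both cleaner and reusable (it is invoked again for \cref{sup-replacement}).
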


Before proving~\cref{bounded-replacement} we show how to use it to finalise the argument.
We need to decide whether $\sep{\hat{L}}{D_1}$.
Since $\hat{L}\subseteq B$ with
$B=a_1^*\bar{a}_1^*a_1^*\bar{a}_1^*a_1^*\bar{a}_1^*$, we have
$\sep{\hat{L}}{D_1}$ if and only if $\sep{\hat{L}}{(D_1\cap B)}$. As
subsets of $B$, both $\hat{L}$ and $D_1\cap B$ are bounded languages
and we can decide whether $\sep{\hat{L}}{(D_1\cap B)}$ using
\cref{separability-bounded}.

\medskip

To prove \cref{bounded-replacement}, it is convenient to have a
notion of subsets of $\Sigma^*\times\N^m$ described
by vector addition systems.  First, a \emph{vector addition system}
(VAS) is a VASS that has only one state.  Since it has only one state,
it is not mentioned in the configurations or the transitions.  We say
that $R\subseteq\Sigma^*\times\N^m$ is a \emph{VAS relation} if there
is a $d+m$-dimensional VAS $V$ and vectors $\vec s,\vec t\in\N^d$ such
that
$R=\{(w,\vec u)\in\Sigma^*\times\N^m \mid (\vec s,0)\trans{w}(\vec
t,\vec u)\}$.  Here, $\vec s$ and $\vec t$ are called \emph{source}
and \emph{target vector}, respectively.

However, sometimes it is easier to describe a relation by a VASS than
by a VAS.  We say that $R\subseteq\Sigma^*\times\N^m$ is
\emph{described by the $d+m$-dimensional VASS} $V=(Q,T,s,t,h)$ if
$R=\{(w,\vec u)\in\Sigma^*\times\N^m \mid (s,0,0)\trans{w}(t,0,\vec
u)\}$. Of course, a relation is a VAS relation if and only if it is
described by some VASS and these descriptions are easily translated.

\begin{lemma}\label{vas-relation-product}
  If $R\subseteq\Sigma^*\times\N^m$ and $S\subseteq\Sigma^*\times\N^n$
  are VAS relations, then so is the relation
  $R\oplus S:=\{(w,\vec u,\vec v) \mid (w,\vec u)\in R\wedge (w,\vec v)\in S\}$.
\end{lemma}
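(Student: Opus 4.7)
The plan is to carry out a standard product construction, but using the VASS descriptions of $R$ and $S$ rather than their VAS presentations, since the remark preceding the \lcnamecref{vas-relation-product} already grants us that the two formalisms describe exactly the same class of relations. So let $V_R$ be a $(d_R+m)$-dimensional VASS with state set $Q_R$, transitions $T_R$, source $s_R$, and target $t_R$ that describes $R$, and analogously let $V_S$ be a $(d_S+n)$-dimensional VASS with $Q_S$, $T_S$, $s_S$, $t_S$ that describes $S$. The idea is to build a VASS $V$ of dimension $(d_R+d_S)+(m+n)$ whose state set is $Q_R\times Q_S$, whose first $d_R+d_S$ coordinates are the ``internal'' coordinates (required to start and end at $0$) and whose last $m+n$ coordinates hold the output vectors.

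The transitions of $V$ synchronize on letters of $\Sigma$ and interleave on $\varepsilon$. Concretely, for each pair of transitions $(p,a,(\vec u_1,\vec u_2),p')\in T_R$ and $(q,a,(\vec v_1,\vec v_2),q')\in T_S$ with the same letter $a\in\Sigma$, put a transition in $V$ from $(p,q)$ to $(p',q')$ labelled $a$ with effect $(\vec u_1,\vec v_1,\vec u_2,\vec v_2)$, where the first block lives in $\Z^{d_R+d_S}$ and the second in $\Z^{m+n}$. For each $\varepsilon$-transition $(p,\varepsilon,(\vec u_1,\vec u_2),p')\in T_R$ and each $q\in Q_S$, add a transition from $(p,q)$ to $(p',q)$ labelled $\varepsilon$ whose effect is $(\vec u_1,\mathbf{0},\vec u_2,\mathbf{0})$, and symmetrically for $\varepsilon$-transitions of $V_S$. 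The source state is $(s_R,s_S)$ and the target state is $(t_R,t_S)$.

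It remains to verify the equality $\{(w,\vec u,\vec v)\mid ((s_R,s_S),\mathbf{0})\trans{w}((t_R,t_S),(\mathbf{0},\mathbf{0},\vec u,\vec v))\}=R\oplus S$. The inclusion from left to right is immediate: a run of $V$ projects coordinate-wise and state-wise to a run of $V_R$ on $w$ ending in $(t_R,(\mathbf{0},\vec u))$ and a run of $V_S$ on $w$ ending in $(t_S,(\mathbf{0},\vec v))$, since each transition of $V$ comes from a transition of $V_R$, a transition of $V_S$, or both. The reverse inclusion is the familiar ``shuffle'' argument: given accepting runs $\rho_R$ of $V_R$ and $\rho_S$ of $V_S$ on the same $w$, one interleaves their steps in any order that respects the reading of $w$ letter by letter (synchronizing on every non-$\varepsilon$ letter and taking $\varepsilon$-steps of one side while the other pauses); because the two sides use disjoint coordinate blocks, non-negativity of all intermediate configurations in $V$ follows from non-negativity of the configurations in $\rho_R$ and $\rho_S$ separately.

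The only minor obstacle is keeping the bookkeeping of coordinates consistent: one must be careful that the $m$ output coordinates of $V_R$ and the $n$ output coordinates of $V_S$ never see steps from the other machine, so that after projection their values agree with those of $\rho_R$ and $\rho_S$ respectively. This is built into the construction above by zero-padding the effects of $\varepsilon$-transitions. Finally, translating the resulting VASS back into a VAS by the folklore encoding of the state in additional coordinates (already invoked in the paragraph above) yields a VAS presentation of $R\oplus S$, as required. \qed
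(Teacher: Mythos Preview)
Your proof is correct and follows essentially the same product construction as the paper: synchronize the two machines on letters of $\Sigma$, interleave their $\varepsilon$-transitions with zero-padding on the other machine's coordinate block, and keep the internal and output coordinates in separate blocks. The only cosmetic difference is that the paper carries out the construction directly at the VAS level (transitions as labelled vectors, start/target vectors concatenated), whereas you work with VASS and a product state set $Q_R\times Q_S$ and then invoke the VASS-to-VAS translation at the end; both routes are equally valid here.
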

\begin{proof}
  We employ a simple product construction. Suppose $V_0$ describes $R$
  and $V_1$ describes $S$.  Without loss of generality, let $V_0$ and
  $V_1$ be $d+m$-dimensional and $d+n$-dimensional, respectively. The
  new VAS $V$ is $2d+m+n$-dimensional and has three types of
  transitions: First, for any letter $a\in\Sigma$, every transition
  $(\vec u_0,\vec v_0)\in\Z^{d+m}$ of $V_0$ with label $a$ and
  $\vec u_0\in\Z^d$ and $\vec v_0\in\Z^m$, every transition
  $(\vec u_1,\vec v_1)\in\Z^{d+n}$ of $V_1$ with label $a$ and
  $\vec u_1\in\Z^d$ and $\vec v_1\in\Z^n$, $V$ has a transition
  $(\vec u_0,\vec u_1, \vec v_0, \vec v_1)\in\Z^{2d+m+n}$ with label
  $a$.

  Second, for every transition $(\vec u,\vec v)\in\Z^{d+m}$ from $V_0$
  labeled $\varepsilon$ with $\vec u\in\Z^d$ and $\vec v\in\Z^m$, $V$
  has an $\varepsilon$-labeled transition
  $(\vec u, 0^d, \vec v, 0^n)\in\Z^{2d+m+n}$. Here, in slight abuse of
  notation, $0^k$ is meant to be a vector of zeros that occupies
  $k$ coordinates. Third, for every transition
  $(\vec u,\vec v)\in\Z^{d+n}$ labeled $\varepsilon$ from $V_1$ with
  $\vec u\in\Z^d$ and $\vec v\in\Z^m$, $V$ has an
  $\varepsilon$-labeled transition $(0^d,\vec u,0^m,\vec v)$.  If
  $\vec s_i$ and $\vec t_i$ are start and target vector of $V_i$ for
  $i\in\{0,1\}$, then $\vec s=(\vec s_0,\vec s_1)$ and
  $\vec t=(\vec t_0,\vec t_1)$ are used as start and target vectors
  for $V$.  Then, it is routine to check that indeed
  $\{(w,\vec u,\vec v) \mid \vec u\in\Z^m,\vec v\in\Z^n,
  (\vec s,\textbf{0})\trans{w}(\vec t,\vec u,\vec v)\}=R\oplus S$.
  \qed
\end{proof}

\begin{lemma}\label{vas-relation-result}
  Given a VAS language $L\subseteq\Sigma^*$ and a VAS relation
  $R\subseteq\Sigma^*\times\N^m$ one can construct a VAS for the language
  $\{a_{1}^{x_{1}}\cdots a_{m}^{x_{m}} \mid \exists w\in L\colon  R(w,x_{1},\ldots,x_{m}) \}$.
\end{lemma}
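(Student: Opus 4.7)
The plan is to combine \cref{vas-relation-product} with a tail construction that drains the output counters of a VAS relation into a word. First, regard $L$ as a VAS relation in $\Sigma^*\times\N^0$ (using the VAS for $L$ with no extra output coordinates) and apply \cref{vas-relation-product} to this relation and to $R$. This produces a VAS relation $R'\subseteq\Sigma^*\times\N^m$ satisfying $R'(w,\vec u)$ iff $w\in L$ and $R(w,\vec u)$; denote the corresponding VAS by $V'$, of dimension $d+m$ with source/target vectors $\vec s,\vec t\in\N^d$.

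Next, I build a VASS $V''$ of dimension $d+m$ (to be turned into a VAS at the end by the standard state-to-counter encoding) whose language equals the desired one. It splits into a \emph{simulation phase} consisting of a single state $q_{\text{sim}}$ carrying a self-loop for each transition of $V'$ with its input label replaced by $\varepsilon$, and a \emph{drain phase} consisting of a chain of states $q_1,\ldots,q_m,q_f$ where $q_i$ has a self-loop labeled $a_i$ that decrements the $(d+i)$-th counter, together with an $\varepsilon$-transition $q_i\to q_{i+1}$ (and $q_m\to q_f$ at the end). A fresh initial state $q_\iota$ is linked to $q_{\text{sim}}$ by an $\varepsilon$-edge adding $(\vec s,\vec 0)$ to the counters, and an $\varepsilon$-edge $q_{\text{sim}}\to q_1$ subtracts $(\vec t,\vec 0)$. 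Acceptance is the usual VASS condition of reaching $q_f$ with all counters equal to $\vec 0$.

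Correctness rests on the zero-ending acceptance condition. The edge $q_{\text{sim}}\to q_1$ can fire only when the first $d$ counters are at least $\vec t$, and since no later transition alters those coordinates, ending with them at $\vec 0$ forces equality with $\vec t$ at the moment of firing; in other words, the simulated run must have reached the target of $V'$. Likewise, once the run moves from $q_i$ to $q_{i+1}$, nothing afterwards decreases counter $d+i$, so emitting fewer than $u_i$ copies of $a_i$ leaves a positive residue and spoils acceptance. Hence the accepted words are exactly the $a_1^{u_1}\cdots a_m^{u_m}$ for which $(\vec s,\vec 0)\trans{w}(\vec t,\vec u)$ holds in $V'$ for some $w$, which by the definition of $R'$ matches the target language. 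The only delicate point is this drainage argument; the invocation of \cref{vas-relation-product}, the pre- and post-shifts by $(\vec s,\vec 0)$ and $-(\vec t,\vec 0)$, and the conversion of the resulting VASS into a VAS are all routine.
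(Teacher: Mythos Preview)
Your proposal is correct and follows essentially the same approach as the paper: form the product of $L$ and $R$, then silently simulate it and drain the $m$ output counters into the word $a_1^{x_1}\cdots a_m^{x_m}$. The only organizational difference is that you invoke \cref{vas-relation-product} explicitly and enforce the phase structure with VASS states, whereas the paper inlines the product construction in a single VAS and then intersects with the regular language $a_1^*\cdots a_m^*$ to force the output shape.
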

\begin{proof}
  Suppose $V$ is a $d$-dimensional VAS accepting $L$ and $V'$ is a
  $d+m$-dimensional VAS for $R$. We construct the $2d+m$-dimensional
  VAS $V''$, which has four types of transitions. First, for every
  transition $\vec u\in\Z^d$ labeled $a\in\Sigma$ and every
  $a$-labeled transition $\vec v\in\Z^{d+m}$ in $V'$, we have an
  $\varepsilon$ labeled transition $(\vec u,\vec v)$ in $V''$.
  Second, for every $\varepsilon$-labeled transition $\vec u\in\Z^d$
  in $V$, we have an $\varepsilon$-labeled transition
  $(\vec u,0^{d+m})\in\Z^{2d+m}$ in $V''$. Third, for every
  $\varepsilon$-labeled transition $\vec u\in\Z^{d+m}$ in $V'$, we
  have an $\varepsilon$-labeled transition $(0^d,\vec u)\in\Z^{2d+m}$
  transition in $V''$. Fourth, for every $i\in[1,m]$, we have an
  $a_i$-labeled transition $(0^{2d},-\vec e_i)\in\Z^{2d+m}$, where
  $\vec e_i\in\Z^m$ is the $m$-dimensional unit vector with $1$ in
  coordinate $i$ and $0$ everywhere else. It is now easy construct a
  VASS $V'''$ with $\Lang{V'''}=\Lang{V''}\cap a_1^*\cdots a_n^*$.
  Then clearly, we have
  $\Lang{V'''}=\{a_1^{x_1}\cdots a_m^{x_m}\mid \exists w\in L\colon
  R(w,x_1,\ldots,x_m)\}$.
\qed
\end{proof}

\begin{proof}[Proof of \cref{bounded-replacement}]
  First, let us show that the following relations are VAS relations:
  \begin{align*}
    R_1&=\{(w,n)\in\Sigma_1^*\times\N \mid n\le \mu(w)\},  \\
    R_2&=\{(w,r,s)\in\Sigma_1^*\times\N^2\mid r-s=\varphi(w)\}, \\
    R_3&=\{(w,n)\in\Sigma_1^*\times\N\mid n\le \mu(\rev{\bar{w}}) \}.\end{align*}
  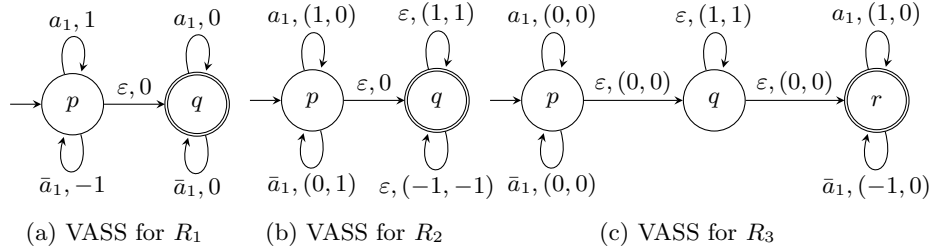
\begin{figure}
    \centering
    {\centering\subcaptionbox{VASS for $R_1$\label{vass-r1}}[0.25\textwidth]{
    \begin{tikzpicture}[->,inner sep=2pt,auto,initial text=,>=stealth,node distance=0.8cm]
        \node[state,initial] (p) {$p$};
        \node[state,accepting] (q) [right=of p] {$q$};
        \path (p) edge [loop above] node {$a_1, 1$} (p);
        \path (p) edge [loop below] node {$\bar{a}_1, -1$} (p);
        \path (p) edge node {$\varepsilon,0$} (q);
        \path (q) edge [loop above] node {$a_1, 0$} (q);
        \path (q) edge [loop below] node {$\bar{a}_1, 0$} (q);
      \end{tikzpicture}
      }}
      {\centering\subcaptionbox{VASS for $R_2$\label{vass-r2}}[0.25\textwidth]{
      \begin{tikzpicture}[->,inner sep=2pt,auto,initial text=,>=stealth,node distance=0.8cm]
        \node[state,initial] (p) {$p$};
        \node[state,accepting] (q) [right=of p] {$q$};
        \path (p) edge [loop above] node {$a_1, (1,0)$} (p);
        \path (p) edge [loop below] node {$\bar{a}_1, (0,1)$} (p);
        \path (p) edge node {$\varepsilon,0$} (q);
        \path (q) edge [loop above] node {$\varepsilon, (1,1)$} (q);
        \path (q) edge [loop below] node {$\varepsilon, (-1,-1)$} (q);
      \end{tikzpicture}
    }}
  {\centering\subcaptionbox{VASS for $R_3$\label{vass-r3}}[0.45\textwidth]{
      \begin{tikzpicture}[->,inner sep=2pt,auto,initial text=,node distance=0.8cm,>=stealth]
        \node[state,initial] (p) {$p$};
        \node[state] (q) [right=1.3cm of p] {$q$};
        \node[state,accepting] (r) [right=1.3cm of q] {$r$};
        \path (p) edge [loop above] node {$a_1, (0,0)$} (p);
        \path (p) edge [loop below] node {$\bar{a}_1, (0,0)$} (p);
        \path (p) edge node {$\varepsilon,(0,0)$} (q);
        \path (q) edge [loop above] node {$\varepsilon, (1,1)$} (q);
        \path (q) edge node {$\varepsilon,(0,0)$} (r);
        \path (r) edge [loop above] node {$a_1, (1,0)$} (r);
        \path (r) edge [loop below] node {$\bar{a}_1, (-1,0)$} (r);
      \end{tikzpicture}
      }}
    \caption{VASS for relations $R_1$, $R_2$, and $R_3$ in the proof of \cref{bounded-replacement}.}
    \label{vas-relations}
  \end{figure}
  In \cref{vass-r1,vass-r2,vass-r3}, we show vector addition systems with states
  for the relations $R_1$, $R_2$, and $R_3$ (it is easy to translate
  them to VAS for the relations). From the VASS for $R_1$ and $R_3$, one can readily build VAS
  for the relations
  \begin{align*} R'_1&=\{(w,m,m+1)\in\Sigma_1^*\times\N^2 \mid m\le\mu(w) \}, \\ R'_3&=\{(w,n+1,n)\in\Sigma_1^*\times\N^2\mid m\le\mu(\rev{\bar{w}})\}.\end{align*}
  
  According to \cref{vas-relation-product}, we can construct a VAS
  for
  $R=R'_1\oplus R_2\oplus R'_3\subseteq\Sigma^*\times\N^6$.
  Applying \cref{vas-relation-result} to $L$ and $R$ yields a VAS for the language
  \begin{multline*} \{ a_1^m a_2^{m+1} a_3^r a_4^s a_5^{n+1} a_6^n \mid \exists w\in L\colon \\
    m\le\mu(w),~r-s=\varphi(w),~n\le\mu(\rev{\bar{w}})\}. \end{multline*}
  Now appropriately renaming the symbols $a_1,\ldots,a_6$ to $a_1$ or $\bar{a}_1$
  yields a VAS for $\hat{L}$.
\qed
\end{proof}


\section{VASS vs. Coverability VASS}\label{vass-cover}

Let us now show \cref{result-vass-cover}.
In~\cite{DBLP:conf/concur/CzerwinskiLMMKS18} it was shown that two
coverability VASS languages $K$ and $L$ are regularly separable if and
only if $K\cap L=\emptyset$ (the result
in~\cite{DBLP:conf/concur/CzerwinskiLMMKS18} applies to all languages
of well-structured transition systems). However, when deciding
$\sep{K}{L}$ for a VASS language $K$ and a coverability VASS language
$L$, a simple disjointness check is not enough: If
$K=\{a^nb^{m}\mid n<m\}$ and $L=\{a^n b^m\mid n\ge m\}$, then $K$ is
in $\cV$ and $L$ is in $\cC$ and we have $K\cap L=\emptyset$, but not
$\sep{K}{L}$.

Instead, we use our approach to reduce separability to the
simultaneous unboundedness
problem~\cite{DBLP:conf/icalp/Zetzsche15,DBLP:journals/dmtcs/CzerwinskiMRZZ17}.
A language $L\subseteq a_1^*\cdots a_n^*$ is \emph{simultaneously
  unbounded} if for every $k\in\N$, there is a word
$a_1^{x_1}\cdots a_n^{x_n}\in L$ with $x_1,\ldots,x_n\ge k$. The
\emph{simultaneous unboundedness problem (SUP)} asks, given a language
$L\subseteq a_1^*\cdots a_n^*$, whether $L$ is simultaneously
unbounded. For VASS languages, this problem is decidable. This follows
from computability of downward
closures~\cite{DBLP:conf/icalp/HabermehlMW10} or from general results
on unboundedness problems for
VASS~\cite{DBLP:conf/icalp/CzerwinskiHZ18}.

\paragraph{Basic separators for coverability VASS} As before, we
develop a notion of basic separators. We start with a version of the
sets $B_{k}$, where we use $C_1$ instead of $D_1$. We set
\[ B'_k=\{w\in\Sigma_1^* \mid \text{$w\notin C_1$ and $\mu(w)\le k$}\}. \]
Just like for $B_k$, the set $B'_k$ is clearly disjoint from
$C_1$.  Moreover, we need variants of these in higher dimension:
For $i\in[1,n]$ and $k\in\N$, let
$B_{i,k}=\lambda_i^{-1}(B'_k)$. This means, $B_{i,k}$ is the set of
walks through $\Z^n$ which have a prefix $p$ such that in coordinate
$i$, the walk $p$ goes below zero, but also never exceeds $k$ before it does so. The sets $B_{i,k}$ are clearly regular,
because an automaton can maintain the $\varphi$-value in coordinate
$i$ of the read prefix in its state as long as it stays positive:
During that time, the value belongs to $[0,k]$.
\begin{restatable}{lemma}{basicSeparatorsCover}\label{basic-separators-cover}
  Let $R\subseteq\Sigma_n^*$ be a regular language. Then
  $R\cap C_n=\emptyset$ if and only if $R$ is included in a finite
  union of sets of the form $B_{i,k}$ for $i\in[1,n]$ and $k\in\N$.
\end{restatable}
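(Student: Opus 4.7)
}

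The ``if'' direction is immediate: $B_{i,k}=\lambda_i^{-1}(B'_k)\subseteq\lambda_i^{-1}(\overline{C_1})$, and since $C_n\subseteq\lambda_i^{-1}(C_1)$, we get $B_{i,k}\cap C_n=\emptyset$, so every finite union of such sets stays disjoint from $C_n$.

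For the ``only if'' direction, fix a DFA $\cA$ for $R$ with $N$ states. The plan is to exhibit a single bound $k=k(N,n)$ such that $R\subseteq\bigcup_{i=1}^n B_{i,k}$. I would proceed by contradiction: take some $w\in R$ with $\mu(\lambda_i(w))>k$ for every $i\in I(w):=\{j : \lambda_j(w)\notin C_1\}$ and construct $w'\in R\cap C_n$, contradicting $R\cap C_n=\emptyset$.

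The construction is an iterative pumping that shrinks $I$ one coordinate at a time. Starting from $w^{(0)}=w$, at step $s$ let $i^*$ be the first coordinate to dip below zero in $w^{(s)}$, and let $u$ be the prefix of $w^{(s)}$ ending where coord $i^*$ attains its maximum $\mu(\lambda_{i^*}(w^{(s)}))>k$ just before the first drop below $0$; throughout $u$ every coordinate stays non-negative. Enumerating the first positions $p_0<p_1<\cdots$ in $u$ where coord $i^*$ reaches the successive values $0,1,2,\ldots$, the induced sequence $(q_{p_j},(\phi_{l,p_j})_{l\ne i^*})$ lives in $Q\times\N^{n-1}$. For $k$ chosen so that this sequence must contain a Dickson pair $a<b$, the segment of $u$ between $p_a$ and $p_b$ is a cycle $c$ in $\cA$ with $\varphi_{i^*}(c)>0$ and $\varphi_l(c)\ge 0$ for every $l\ne i^*$. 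Pumping $c$ sufficiently many times inside $w^{(s)}$ produces $w^{(s+1)}\in R$ in which coord $i^*$ is never negative, while every coordinate non-bad in $w^{(s)}$ stays non-bad (since the pumped effect is non-negative in every coordinate). Hence $I(w^{(s+1)})\subseteq I(w^{(s)})\setminus\{i^*\}$, and after at most $n$ iterations $I$ becomes empty, yielding $w^{(n)}\in R\cap C_n$, the desired contradiction.

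The hardest part is pinning down $k$: antichains in $Q\times\N^{n-1}$ are not uniformly length-bounded, so one has to exploit that the other-coordinate values at the $p_j$'s are controlled (for instance by the length of the prefix $u$, or by a pigeonhole over a bounded-value orthant refined via Dickson's Lemma) to extract a finite $k$. A secondary point is to verify that the $\mu$-hypothesis carries over to subsequent iterations for the remaining bad coordinates; this is immediate because pumping a cycle with non-negative effect can only increase all $\mu$-values, so $\mu(\lambda_j(w^{(s+1)}))\ge\mu(\lambda_j(w^{(s)}))>k$ for every $j\in I(w^{(s+1)})$.
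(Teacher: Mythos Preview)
Your ``if'' direction is fine. The ``only if'' direction has a genuine gap, and you have correctly identified where it lies: you cannot pin down the bound~$k$.

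Your plan requires, for a \emph{single} counterexample word $w$ with $\mu(\lambda_{i^*}(w))>k$, a Dickson pair among the first-hit positions $p_0,p_1,\ldots,p_k$ in $Q\times\N^{n-1}$. As you note, antichains in $Q\times\N^{n-1}$ (for $n\ge 3$) are not length-bounded, and your proposed remedy---bounding the other-coordinate values by $|u|$---does not help, since $|u|\le|w|$ is itself unbounded. So for any fixed $k=k(N,n)$ there may be a word $w_k\in R\setminus\bigcup_i B_{i,k}$ in which the sequence $(q_{p_j},(\phi_{l,p_j})_{l\ne i^*})_{j\le k}$ is an antichain; your pump simply does not fire. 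The difficulty does not go away at later iterations either: after one pump the word is longer, so the value bounds at the new first-hit positions are larger, and you would need an even larger $k$ for the next coordinate. The argument is circular.

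The paper sidesteps this entirely by \emph{not} working with a single word. It takes the full sequence $(w_k)_{k\in\N}$ with $w_k\notin\bigcup_i B_{i,k}$, passes to a subsequence so that the set $I$ of ``bad'' coordinates is the same for all $w_k$, and then proves a reduction lemma: given an $I$-witness sequence with $I\ne\emptyset$, one can build an $I'$-witness sequence for some strict $I'\subsetneq I$. The construction does not pump a cycle inside one word; instead it identifies, for each $w_k$, a prefix $u_k\in C_n$ reaching height $>k$ in some fixed $m\in I$, passes (via the WQO on $Q\times\N^n$) to a subsequence where the end-states coincide and $\varphi(u_k)$ is increasing, and then \emph{swaps prefixes across the sequence}: $w'_k:=u_\ell v_k$ with $\ell$ large enough that coordinate $m$ never dips in $w'_k$. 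This prefix-swapping is the key move your single-word approach lacks; it lets the WQO act on an infinite sequence (where Dickson's lemma is unconditional) rather than on finitely many positions inside one word (where it is not).
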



In order to show \cref{basic-separators-cover}, we first prove a lemma, for which we need some terminology.
Suppose $R\subseteq\Sigma_n^*$ and for every $k\in\N$, we have
$R\setminus (B_{1,k}\cup\cdots\cup B_{n,k})\ne\emptyset$. We have to
show that then $R\cap C_n\ne\emptyset$.

There is a sequence of words $w_1,w_2,\ldots\in R$ so that for each
$k\in\N$, we have $w_k\notin B_{1,k}\cup\cdots\cup B_{n,k}$. For each
$i\in[1,n]$, the non-membership $w_k\notin B_{i,k}$ is either because
(i)~$\mu(\lambda_i(w))>k$ or (ii)~$\lambda_i(w_k)\in C_1$. By
selecting a subsequence, we may assume that all words agree about
which coordinates satisfy (i) and which satisfy (ii).  Formally, we
may assume that there is a subset $I\subseteq[1,n]$ such that for each
$k\in\N$ and $i\in[1,n]$, we have $\mu(\lambda_i(w_k))>k$ if $i\in I$
and $\lambda_i(w_k)\in C_1$ if $i\notin I$. In this situation, we call
$w_1,w_2,\ldots$ an \emph{$I$-witness sequence}.

\begin{lemma}\label{witness-sequence-reduction}
  Suppose there is an $I$-witness sequence in $R$ for
  $I\subseteq[1,n]$ with $I\ne\emptyset$. Then $R$ has an $I'$-witness
  sequence for some strict subset $I'\subset I$.
\end{lemma}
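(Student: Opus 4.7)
The plan is to construct, for each $w_k$, a word $w'_k \in R$ by pumping a carefully selected cycle of an NFA $\mathcal{A}$ for $R$. I will fix $i_0 \in I$ and seek a cycle $y_k$ of $\mathcal{A}$ whose effect $\varphi(y_k)$ is strictly positive in coordinate $i_0$ and non-negative in every coordinate $i \notin I$; pumping such a cycle inside $w_k$ will force the $i_0$-coordinate into $C_1$ while preserving the $C_1$-property for coordinates outside $I$. The target set will be $I' := I \setminus \{i_0\}$, which satisfies $I' \subsetneq I$ automatically.

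To find the cycle, I will use the assumption $\mu(\lambda_{i_0}(w_k)) > k$ to produce a prefix $u_k$ of $w_k$ with $\lambda_{i_0}(u_k) \in C_1$ and $\varphi(u_k)_{i_0} > k$; meanwhile $\lambda_i(w_k) \in C_1$ for $i \notin I$ forces every such $i$-coordinate walk to stay in $\N$ throughout $u_k$. For $j = 0, 1, \ldots, k$, let $t_j$ be the first position in $u_k$ where the $i_0$-coordinate attains value $j$, and label $t_j$ by $(q_{t_j}, (c_{t_j}^{(i)})_{i \notin I}) \in Q \times \N^{|[1,n] \setminus I|}$, where $Q$ is the state set of $\mathcal{A}$. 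For $k$ large, Dickson's lemma will yield indices $a < b$ with identical state and componentwise non-decreasing non-$i_0$ labels; the sub-run of $\mathcal{A}$ between $t_a$ and $t_b$ is then a cycle $y_k$ with $\varphi(y_k)_{i_0} = b - a \ge 1$ and $\varphi(y_k)_i \ge 0$ for every $i \notin I$. Writing $w_k = \alpha_k y_k \beta_k$ and setting $w'_k := \alpha_k y_k^{p_k} \beta_k$ with $p_k$ large enough that $(p_k - 1)\varphi(y_k)_{i_0}$ dominates the worst dip of the $i_0$-coordinate walk along $\beta_k$, I obtain $w'_k \in R$, $\lambda_{i_0}(w'_k) \in C_1$ by the choice of $p_k$, and $\lambda_i(w'_k) \in C_1$ for every $i \notin I$ by the non-negativity of $\varphi(y_k)_i$.

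It remains to verify $\mu(\lambda_i(w'_k)) > k$ for every $i \in I'$. By a further pigeonhole extraction, I may assume the sign of $\varphi(y_k)_i$ is constant across $k$ for each such $i$. When $\varphi(y_k)_i \ge 0$, the $\mu$-witnessing prefix for coordinate $i$ in $w_k$ is shifted only upward by pumping, so $\mu(\lambda_i(w'_k)) \ge \mu(\lambda_i(w_k)) > k$. When $\varphi(y_k)_i < 0$, I will refine the earlier choices by taking $u_k$ maximal (the longest prefix of $w_k$ on which the $i_0$-coordinate stays $\ge 0$) and by taking the Dickson pair $(a, b)$ as late as possible along $u_k$, so that the $\mu$-witness $v_i$ for coordinate $i$ fits inside $\alpha_k$ and is preserved verbatim in $w'_k$. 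The principal obstacle is this last adjustment: arranging that the $\mu$-witnesses for every $i \in I \setminus \{i_0\}$ with $\varphi(y_k)_i < 0$ are contained in $\alpha_k$, which requires a careful analysis of the interplay between the maximal $i_0$-safe prefix and the positions of the various coordinate witnesses.
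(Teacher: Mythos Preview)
Your Dickson step is misapplied. Dickson's lemma says that every \emph{infinite} sequence in $Q\times\N^d$ contains indices $a<b$ with equal state component and componentwise $\le$ vector component; it does \emph{not} say that a finite sequence of length $k+1$ must contain such a pair once $k$ is large. For $d\ge 2$ there are antichains in $\N^d$ of every finite length, so no bound on $k$ forces a good pair. Concretely, take $n=3$, $I=\{1\}$, and an automaton with states $q_0,q_1,q_2$ where $q_0\xrightarrow{a_2}q_0$, $q_0\xrightarrow{a_1}q_1$, $q_1\xrightarrow{\bar a_2}q_2$, $q_2\xrightarrow{a_3}q_2$, $q_2\xrightarrow{a_1}q_1$. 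The prefix $u_k=a_2^{k}a_1(\bar a_2 a_3 a_1)^{k}$ satisfies $\lambda_1(u_k)\in C_1$, $\varphi(u_k)_1=k+1>k$, and $\lambda_2(u_k),\lambda_3(u_k)\in C_1$. Here $t_0$ sits at $q_0$ with label $(0,0)$, while $t_1,\dots,t_k$ all sit at $q_1$ with labels $(k,0),(k-1,1),\dots,(1,k-1)$, an antichain in $\N^2$; there is no good pair. Worse, in this example \emph{no} factor of $u_k$ read on a cycle of the automaton has effect strictly positive in coordinate~$1$ and non-negative in coordinate~$2$: any such cycle lies in the $(\bar a_2 a_3 a_1)^k$ block, and every cycle there containing an $a_1$ also contains a $\bar a_2$. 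So the cycle $y_k$ you seek need not exist at all, and the second obstacle you flag (placing all the $\mu$-witnesses inside $\alpha_k$) never even comes into play.

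The paper sidesteps both difficulties by working \emph{across} the sequence rather than inside a single $w_k$. For each $k$ it takes, among the witnessing prefixes $u_k^{(j)}$ for $j\in I$, one of \emph{minimal length}; this shortest prefix $u_k$ then lies in $C_n$ (each coordinate in $I$ is covered by a longer witness, each coordinate outside $I$ by the assumption $\lambda_i(w_k)\in C_1$), so $\varphi(u_k)\in\N^n$. Dickson is now applied, legitimately, to the \emph{infinite} sequence $(q_k,\varphi(u_k))_{k\ge 1}$ to pass to a subsequence with constant state, constant minimizing index $m\in I$, and $\varphi(u_1)\le\varphi(u_2)\le\cdots$. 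The new word is $w'_k:=u_\ell v_k$ for $\ell$ large enough that $\varphi(u_\ell)(m)$ dominates $|\fdrop(\lambda_m(v_k))|$. Because $\varphi(u_\ell)\ge\varphi(u_k)$ in \emph{every} coordinate, the $\mu$-witness $u_k^{(i)}=u_k v'_i$ for each $i\in I$ transfers to $u_\ell v'_i$ with at least the same value, and $\lambda_i(w'_k)\in C_1$ for $i\notin I$ is automatic. The crucial point you are missing is that the coordinate removed from $I$ is not fixed in advance but chosen so that the corresponding prefix is shortest; this is exactly what makes $\varphi(u_k)\in\N^n$ and hence makes Dickson applicable.
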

\begin{proof}
Suppose $w_1,w_2,\ldots$ is an $I$-witness sequence.  Clearly, if
$I=\emptyset$, then $w_k\in C_n$ for every $k\in\N$ and we are
done. So, suppose $I\ne\emptyset$.  We shall prove that there is also
an $I'$-witness sequence with $|I'|<|I|$.

Since for every $i\in I$, we have $\mu(\lambda_i(w_k))>k$ we know that
$w_k$ decomposes as $w_k=u_k^{(i)}v_k^{(i)}$ so that
$\lambda_i(u_k^{(i)})\in C_1$ and
$\varphi(u_k^{(i)})(i)>k$. We pick $i\in I$ so that
$u_k^{(i)}$ has minimal length. We claim that then $u_k^{(i)}\in C_n$.
Indeed, we have $\lambda_j(u_k^{(i)})\in C_1$ for every $j\in[1,n]$:
For $j\in I$, this is because $u_k^{(i)}$ is prefix of some
$u_k^{(t)}$ with $\lambda_j(u_k^{(t)})\in C_1$; for
$j\in [1,n]\setminus I$, this is because $\lambda_j(w_k)\in C_1$.
This proves our claim and thus $u_k^{(i)}\in C_n$.  We define
$u_k=u^{(i)}_k$, $v_k=v^{(i)}_k$, and $m_k:=i$.  By selecting a
subsequence of $w_1,w_2,\ldots$, we may assume that $m_1=m_2=\cdots$
and we define $m:=m_1=m_2=\cdots$.

Consider a finite automaton for $R$. Since each $w_k=u_kv_k$ belongs
to $R$, there is a state $q_k$ so that a run for $w_k$ enters $q_k$
after reading $u_k$. Moreover, let ${\vec u}_k=\varphi(u_k)$. Since we
have $u_k\in C_n$, we know that ${\vec u}_k\in\N^n$.  By selecting a
subsequence of $w_1,w_2,\ldots$, we may assume that $q_1=q_2=\cdots$
and ${\vec u}_1\le {\vec u}_2\le \cdots$ and we define
$q:=q_1=q_2=\cdots$.

We now know that $w_1,w_2,\ldots$ is an $I$-witness sequence and
$m\in I$ and $w_k=u_kv_k$ with $u_k\in C_n$ and
$\varphi(u_k)(m)>k$.  Our goal is to construct an
$I\setminus\{m\}$-witness sequence.  We do this as follows. For each
$k\in\N$, we choose some $\ell$ with $\ell\ge \fdrop(\lambda_m(v_k))$
and $\ell\ge k$. We set $w'_k=u_\ell v_k$ and claim that
$w'_1,w'_2,\ldots$ is an $I\setminus\{m\}$-witness sequence.  First,
note that $w'_k\in R$ for every $k\in\N$. Furthermore, since
${\vec u}_\ell$ is at least $\ell\ge \fdrop(\lambda_m(v_k))$ in
component $m$, we have $\fdrop(\lambda_m(u_\ell v_k))=0$ and hence
$\lambda_m(u_\ell v_k)\in C_1$. Moreover, since
${\vec u}_\ell\ge {\vec u}_k$, we have $\mu(\lambda_i(u_\ell v_k))>k$
for every $i\in I$. Finally, for $i\notin I$, we still have
$\lambda_i(w'_k)\in C_1$ because ${\vec u}_\ell\ge{\vec u}_k$. Thus,
$w'_1,w'_2,\ldots$ is an $I'$-witness sequence for
$I'=I\setminus \{m\}$.
\qed
\end{proof}

We are now prepared to prove \cref{basic-separators-cover}.  Suppose
$R\subseteq\Sigma_n^*$ and for every $k\in\N$, we have
$R\setminus (B_{1,k}\cup\cdots\cup B_{n,k})\ne\emptyset$. As argued
above, this means there is an $I$-witness sequence in $R$. Applying
\cref{witness-sequence-reduction} repeatedly yields an
$\emptyset$-witness sequence in $R$. However, every word in an
$\emptyset$-witness sequence is already a member of $C_n$. Hence,
$R\cap C_n\ne\emptyset$.

\Cref{basic-separators-cover} tells us that to decide whether
$\sep{L}{C_n}$ for a given language $L$, we have to check whether $L$
is included in $B_{1,k}\cup\cdots\cup B_{n,k}$ for some $k\in\N$. Like
in \cref{result-vass-onevass}, we turn $L$ into a different
language. Using standard methods, we can show the following:

\begin{restatable}{lemma}{supReplacement}\label{sup-replacement}
  Given a VASS language $L\subseteq\Sigma_n^*$, one can construct a
  VASS for the language
\[ \hat{L}=\{a_1^{x_1}\cdots a_n^{x_n} \mid \exists w\in L\colon \text{$\mu(\lambda_i(w))\ge x_i$ for $i\in [1,n]$} \}. \]
\end{restatable}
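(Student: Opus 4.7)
The plan is to mimic the strategy of the proof of \cref{bounded-replacement}, generalising from one to $n$ dimensions. The key ingredient is, for each $i\in[1,n]$, a VAS relation
\[
R_i = \{(w,x)\in\Sigma_n^*\times\N \mid x \le \mu(\lambda_i(w))\}.
\]
Once the $R_i$ are in hand, I combine them using (an iterated version of) \cref{vas-relation-product} to obtain a VAS for $R = R_1 \oplus \cdots \oplus R_n \subseteq \Sigma_n^*\times\N^n$, and then apply \cref{vas-relation-result} to $L$ and $R$ to get a VAS for
\[
\{a_1^{x_1}\cdots a_n^{x_n} \mid \exists w\in L\colon R(w,x_1,\ldots,x_n)\} = \hat L.
\]

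To build a VASS describing $R_i$, I generalise the machine of \cref{vass-r1} to the $n$-dimensional alphabet. It has two states $p$ (initial) and $q$ (terminal) and a single counter that serves as the output coordinate. At $p$, I add self-loops that increment the counter on $a_i$, decrement it on $\bar a_i$, and leave it unchanged on every other letter of $\Sigma_n$; an $\varepsilon$-transition from $p$ to $q$ leaves the counter unchanged; and at $q$, self-loops read every letter of $\Sigma_n$ without touching the counter. The moment of the transition from $p$ to $q$ nondeterministically selects a prefix $u$ of $w$; the VASS non-negativity requirement in state $p$ forces $\fdrop(\lambda_i(u))=0$; and the final counter value equals $\varphi(\lambda_i(u))$. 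Since $\lambda_i(u)\in\Sigma_1^*$ is a walk on $\Z$, the intermediate-value property ensures that every integer in $[0,\mu(\lambda_i(w))]$ is attained as $\varphi(\lambda_i(u))$ for some prefix $u$ with $\fdrop(\lambda_i(u))=0$, so the described relation is exactly $R_i$.

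The main thing to verify is this correctness of the VASS for $R_i$: that the non-negativity constraint in state $p$ captures precisely $\fdrop(\lambda_i(u))=0$ on the chosen prefix $u$, and that every value in $[0,\mu(\lambda_i(w))]$ is actually reachable as an output. Everything else is routine: \cref{vas-relation-product} extends from two factors to $n$ by straightforward induction (or by iterating the construction), and \cref{vas-relation-result} applied to $L$ and $R$ then delivers a VAS, which can be converted to a VASS, for $\hat L$.
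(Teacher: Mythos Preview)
Your proposal is correct and follows essentially the same approach as the paper: the paper likewise builds, for each $i\in[1,n]$, a VAS for $S_i=\{(w,k)\in\Sigma_n^*\times\N\mid k\le\mu(\lambda_i(w))\}$ by taking the machine of \cref{vass-r1}, relabelling $a_1,\bar a_1$ as $a_i,\bar a_i$, and adding zero-effect loops for the remaining letters; it then combines the $S_i$ via \cref{vas-relation-product} and applies \cref{vas-relation-result}. The only difference is presentational---you spell out the intermediate-value justification that the paper leaves implicit.
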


\begin{proof}
  In the proof of \cref{bounded-replacement}, we construct a
  VAS for the relation
  \[ R_1=\{(w,k)\in\Sigma_1^*\times\N \mid k\le \mu(w)\} \]
  (see \cref{vass-r1}). From this, it is easy to obtain a VAS
  for
  \[ S_i = \{(w,k)\in\Sigma_n^* \times\N \mid k\le\mu(\lambda_i(w))
    \} \] for each $i\in[1,n]$.  Indeed, given the VAS for $R_1$, one
  just replaces $a_1$ and $\bar{a}_1$ with $a_i$ and $\bar{a}_i$,
  respectively, and then adds a loops labeled $a_j,0$ and
  $\bar{a}_j,0$ to each state for each $j\in[1,n]$, $j\ne i$. Now,
  using \cref{vas-relation-product}, we build a VAS for the relation
  \begin{multline*}
    S=\{(w,x_1,\ldots,x_m)\in\Sigma_n^*\times\N^m \mid x_i\le\mu(\lambda_i(w)) \\
    \text{for each $i\in[1,m]$}\}.
  \end{multline*}
  Finally, using \cref{vas-relation-result} we can construct a VAS for the language $\hat{L}=\{a_1^{x_1}\cdots a_n^{x_n} \mid x_i\le\mu(\lambda_i(w))~\text{for each $i\in[1,n]$}\}$.
\qed
\end{proof}

Note that $w\in B_{i,k}$ if and only if $\lambda_i(w)\notin C_1$ and
$\mu(\lambda_i(w))\le k$. Therefore, \cref{basic-separators-cover}
implies that $\sep{L}{C_n}$ if and only if $L\cap C_n=\emptyset$ and
$\hat{L}$ is not simultaneously unbounded.

\begin{remark}
  In our decidability proof, we use a polynomial-time Turing reduction
  from regular separability of a VASS language and a coverability VASS
  language to the SUP for VASS languages. There is also such a
  reduction in the converse direction.  This is because given a VASS
  language $L\subseteq a_1^*\cdots a_n^*$, it is easy to construct in
  polynomial time a VASS for
  \[ \tilde{L}=\{a_1^{x_1}\bar{a}_1^{x_1+1}\cdots
    a_n^{x_n}\bar{a}_n^{x_n+1} \mid a_1^{x_1}\cdots a_n^{x_n}\in
    L\}. \] By \cref{basic-separators-cover}, $L$ is
  simultaneously unbounded if and only if
  $\sep{\tilde{L}}{C_n}$. Thus, the problems (i)~regular
  separability of VASS languages and coverability VASS languages and
  (ii)~the SUP for VASS languages are polynomially inter-reducible.
\end{remark}


\section{VASS vs. Integer VASS}\label{vass-zvass}





\begin{figure}
\begin{tikzpicture}[->,>=stealth,initial text=,auto,on grid,scale=1,inner sep=1pt,node distance=1.5cm]
  \node[state,accepting] (q-k) {$q_{-k}$};
  \node[state,draw=none]        (dots) [left= of q-k] {$\cdots$};
  \node[state,accepting] (q-1) [left=of dots] {$q_{-1}$};
  \node[state,accepting,initial] (q0) [left=of q-1] {$q_0$};
  \path (q-k) edge [bend left] node {$a_1$} (dots)
        (dots) edge [bend left] node {$\bar{a}_1$} (q-k);
  \path (q-1) edge [bend left] node {$a_1$} (q0)
        (q0) edge [bend left] node {$\bar{a}_1$} (q-1);
  \path (dots) edge [bend left] node {$a_1$} (q-1)
        (q-1) edge [bend left] node {$\bar{a}_1$} (dots);
  \path (q0) edge [loop above] node {$a_1$} (q0);      
\end{tikzpicture}
\caption{An automaton for $I_k\subseteq\Sigma_1^*$}\label{infix-automaton}
\end{figure}
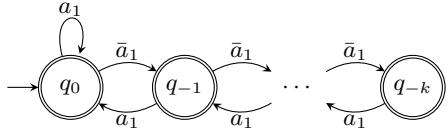

\begin{figure*}
  \begin{tikzpicture}[->,>=stealth,initial text=,auto,on grid,scale=1,inner sep=2pt,node distance=1.5cm]
  \node[state,initial above] (q0) {$q_0$};
  \node[state,accepting] (q1) [right=of q0] {$q_1$};
  \node[state,draw=none] (dots) [right=of q1] {$\cdots$};
  \node[state,accepting] (qk) [right=of dots] {$q_k$};
  \node[state,accepting] (qinf) [right=of qk] {$q_\infty$};
  \node[state,accepting] (q-1) [left=of q0] {$q_{-1}$};
  \node[state,draw=none] (-dots) [left=of q-1] {$\cdots$};
  \node[state,accepting] (q-k) [left=of -dots] {$q_{-k}$};
  \path (q0) edge [bend left] node {$a_1$} (q1)
  (q1) edge [bend left] node {$\bar{a}_1$} (q0);
  \path (q1) edge [bend left] node {$a_1$} (dots)
  (dots) edge [bend left] node {$\bar{a}_1$} (q1);
  \path (dots) edge [bend left] node {$a_1$} (qk)
  (qk) edge [bend left] node {$\bar{a}_1$} (dots);
  \path (qk) edge node {$a_1$} (qinf);
  \path (qinf) edge [loop above] node {$a_1$} (q0');
  \path (qinf) edge [loop below] node {$\bar{a}_1$} (q0');
  \path (q0) edge [bend left] node {$\bar{a}_1$} (q-1)
  (q-1) edge [bend left] node {$a_1$} (q0);
  \path (q-1) edge [bend left] node {$\bar{a}_1$} (-dots)
  (-dots) edge [bend left] node {$a_1$} (q-1);
  \path (-dots) edge [bend left] node {$\bar{a}_1$} (q-k)
  (q-k) edge [bend left] node {$a_1$} (-dots);
\end{tikzpicture}
\caption{Automaton $\cA_k$ with $\Lang{\cA_k}\cap I_k=D_{1,k}$.}\label{direction-automaton}
\end{figure*}

In this section, we apply our approach to solving regular separability
between a VASS language and a $\Z$-VASS language. Here, the collection
of basic separators serves as a geometric characterization of
separability. Proving that it is a set of basic separators is more
involved than in \cref{vass-1vass,vass-cover}.

\subsection{A geometric characterization}\label{geometric}
\Cref{movetrans} tells us that regular separability between a VASS
language and a $\Z$-VASS language amounts to checking whether a given
VASS language $L\subseteq\Sigma_n^*$ is included in some regular
language $R\subseteq\Sigma^*_n$ with $R\cap Z_n=\emptyset$. Therefore,
in this \lcnamecref{geometric}, we classify the regular languages
$R\subseteq\Sigma_n^*$ with $R\cap Z_n=\emptyset$.

A very simple type of such languages is given by modulo counting.  For
$\vec u,\vec v\in\Z^n$, we write $\vec u\equiv \vec v\bmod{k}$ if
$\vec u$ and $\vec v$ are component-wise congruent modulo $k$.
The language
\[ M_k=\{w\in\Sigma_n^* \mid \varphi(w)\not\equiv \textbf{0}\bmod{k}\} \]
is clearly regular and disjoint from $Z_n$.

Since $Z_n$ is commutative (i.e.\ $\Perms(Z_n)=Z_n$), one might expect that it
suffices to consider commutative separators. This is not the case: The
language $L=(a_1\bar{a}_1)^*a_1^+$ is regularly separable from $Z_1$,
but every commutative regular language including $L$ intersects $Z_1$.
Therefore, our second type of regular languages disjoint from $Z_n$ is
non-commutative and we start to describe it in the case
$n=1$. Consider the language
\begin{multline*}
  D_{1,k}=\{w\in\Sigma_1^* \mid \text{$\varphi(w)\ne 0$ and} \\
  \text{for every infix $v$ of $w$: $\varphi(v)\ge -k$}\}.
\end{multline*}
The set $D_{1,k}$ is clearly disjoint from $Z_1$. To see that
$D_{1,k}$ is regular, let us first observe that the language
$I_k=\{w\in\Sigma_1^*\mid \text{for every infix $v$ of $w$:
  $\varphi(v)\ge -k$}\}$ is regular, because the automaton in
\cref{infix-automaton} accepts $I_k$: After reading a word $w$, the
automaton's state reflects the difference $M-\varphi(w)$, where $M$ is
the maximal value $\varphi(v)$ for prefixes $v$ of $w$. Second, the
automaton $\cA_k$ in \cref{direction-automaton} satisfies
$\Lang{\cA_k}\cap I_k=D_{1,k}$: As long as the seen prefix $w$
satisfies $\varphi(w)\in[-k,k]$, the state of $\cA_k$ reflects
$\varphi(w)$ exactly. However, as soon as $\cA_k$ encounters a prefix
$w$ with $\varphi(w)>k$, it enters $q_\infty$. From there, it accepts
every suffix, because an input from $I_k$ can never reach $0$ under
$\varphi$ with such a prefix $w$. Thus, $D_{1,k}$ is regular.

The language $D_{1,k}$ has analogs in higher dimension.
Instead of making sure the value of $\varphi$ never drops more than $k$ along one
particular axis, one can impose this condition in an arbitrary
direction $\vec u\in\Z^n$.
For $\vec u,\vec v\in\Q^n$, $\vec u=(u_1,\ldots,u_n)$,
$\vec v=(v_1,\ldots,v_n)$, we define
$\langle \vec u,\vec v\rangle=u_1v_1+\cdots+u_nv_n$. 
For every vector $\vec u\in \Z^n$ and
$k\in\N\setminus\{0\}$, let
\begin{multline*}
  D_{\vec u,k} = \{w\in\Sigma_n^* \mid \text{$\langle\varphi(w),\vec u\rangle\ne 0$ and } 
  \text{for every infix $v$ of $w$: $\langle\varphi(v),\vec u\rangle\ge -k$}\}.
\end{multline*}
We think of the walks in
$D_{\vec u,k}$ as ``drifting in direction $\vec u$'', hence the name.  To see that
$D_{\vec u,k}$ is regular, consider the morphism
$h_{\vec u}\colon\Sigma_n^*\to\{a_1,\bar{a}_1\}^*$ with
$x\mapsto a_1^{\langle\varphi(x),\vec u\rangle}$ for $x\in\Sigma_n^*$.
Here, we mean $a_1^{\ell}=\bar{a}_1^{|\ell|}$ and
$\bar{a}_1^\ell=a_1^{|\ell|}$ in case $\ell\in\Z$, $\ell<0$. Then we have
$\langle\varphi(w),\vec u\rangle=\varphi(h_{\vec u}(w))$ for any
$w\in\Sigma^*_n$ and hence $D_{\vec u,k}=h_{\vec u}^{-1}(D_{1,k})$.
Therefore, $D_{\vec u,k}$ inherits regularity from $D_{1,k}$.

The main result of this \lcnamecref{geometric} is that the sets $M_k$
and $D_{\vec u,k}$ suffice to explain disjointness of regular
languages from $Z_n$ in the following sense.
\begin{theorem}\label{geometric-disjointness}
  Let $R\subseteq\Sigma^*_n$ be a regular language. Then
  $R\cap Z_n=\emptyset$ if and only if $R$ is included  in a finite
  union of languages of the form $M_k$ and $D_{\vec u,k}$ for $k\in\N$ and
  $\vec u\in\Z^n$.
\end{theorem}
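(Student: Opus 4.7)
The easy direction ($\Leftarrow$) is immediate: each $M_k$ and each $D_{\vec u, k}$ is disjoint from $Z_n$ by construction, so any finite union of such sets, and any language included in it, is disjoint from $Z_n$.

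For the hard direction ($\Rightarrow$), fix an NFA $\cA$ with $N$ states recognizing $R$ and assume $R \cap Z_n = \emptyset$. The plan is to decompose $R$ into finitely many ``structured'' pieces and exhibit a separator for each. Using a Parikh-style decomposition based on simple paths and simple cycles in $\cA$, write $R = \bigcup_{\sigma} L_\sigma$ as a finite union, where for each skeleton $\sigma$ we have fixed words $\alpha_0, \ldots, \alpha_l, \gamma_1, \ldots, \gamma_l$ drawn from $\cA$ (the $\alpha$'s being simple-path fragments and the $\gamma$'s being simple cycles, each of length $\le N$) so that the Parikh image of $L_\sigma$ equals the linear set $\vec b_\sigma + \sum_{j=1}^l \N \vec c_{\sigma,j}$ with $\vec b_\sigma = \varphi(\alpha_0 \cdots \alpha_l)$ and $\vec c_{\sigma,j} = \varphi(\gamma_j)$. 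Since $L_\sigma \cap Z_n \subseteq R \cap Z_n = \emptyset$, the origin $\textbf{0}$ lies outside this linear set.

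The core technical step is a Farkas/Presburger-style dichotomy (possibly after refining $L_\sigma$ into residue-class sublanguages $n_j \equiv r_j \bmod K$ to handle integrality gaps): for each $\sigma$ we produce $(\vec u_\sigma, k_\sigma) \in \Z^n \times \N$ satisfying either the \emph{geometric} case $\langle \vec c_{\sigma,j}, \vec u_\sigma \rangle \geq 0$ for all $j$ with $\langle \vec b_\sigma, \vec u_\sigma \rangle > 0$, or the \emph{modular} case $\langle \vec c_{\sigma,j}, \vec u_\sigma \rangle \equiv 0 \bmod k_\sigma$ for all $j$ with $\langle \vec b_\sigma, \vec u_\sigma \rangle \not\equiv 0 \bmod k_\sigma$. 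In the geometric case, every $w \in L_\sigma$ satisfies $\langle \varphi(w), \vec u_\sigma \rangle \geq \langle \vec b_\sigma, \vec u_\sigma \rangle > 0$; moreover, because each cycle $\gamma_j$ has nonnegative $\vec u_\sigma$-drift, iterating any cycle does not deepen the drawdown of the $\vec u_\sigma$-walk, so that for a constant $k_\sigma$ depending only on $\|\vec u_\sigma\|_\infty$, $N$, and the $|\alpha_j|,|\gamma_j|$, every infix $v$ of every $w \in L_\sigma$ satisfies $\langle \varphi(v), \vec u_\sigma \rangle \geq -k_\sigma$. Hence $L_\sigma \subseteq D_{\vec u_\sigma, k_\sigma}$. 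In the modular case, $\langle \varphi(w), \vec u_\sigma \rangle \equiv \langle \vec b_\sigma, \vec u_\sigma \rangle \not\equiv 0 \bmod k_\sigma$ forces $\varphi(w) \not\equiv \textbf{0} \bmod k_\sigma$, yielding $L_\sigma \subseteq M_{k_\sigma}$. Finally, using $M_k \subseteq M_{k'}$ whenever $k \mid k'$ and $D_{\vec u, k} \subseteq D_{\vec u, k'}$ whenever $k \le k'$, we merge the $k_\sigma$ to a common $k$ and collect the $\vec u_\sigma$'s to obtain $R \subseteq M_k \cup \bigcup_\sigma D_{\vec u_\sigma, k}$.

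The main obstacle is justifying the dichotomy cleanly. The two ``clean'' cases---$-\vec b_\sigma$ outside the rational cone generated by the $\vec c_{\sigma,j}$ (Farkas yields a geometric certificate), and $-\vec b_\sigma$ outside the lattice they generate (the quotient group yields a modular certificate)---are direct. The harder situation is the \emph{integrality gap} case, where $-\vec b_\sigma$ belongs to both the cone and the lattice but not to the submonoid $\sum_j \N \vec c_{\sigma,j}$. Handling it requires the residue-class refinement of $L_\sigma$ mentioned above, and verifying that after the refinement every sublanguage admits a clean certificate. This bookkeeping---together with the drawdown estimate in the geometric case, which relies on the observation that once cycles drift non-negatively in direction $\vec u_\sigma$ their pumping only raises the ``baseline'' of the walk---is where the technical difficulty will sit.
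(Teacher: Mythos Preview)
Your approach is genuinely different from the paper's, and the core idea of a per-skeleton geometric/modular dichotomy is appealing, but there is a real gap in the geometric case: the drawdown bound does not follow from the premises you state.

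You claim that because each skeleton cycle $\gamma_j$ has $\langle\varphi(\gamma_j),\vec u_\sigma\rangle\ge 0$, every infix of every $w\in L_\sigma$ has $\langle\varphi(v),\vec u_\sigma\rangle\ge -k_\sigma$ for a constant $k_\sigma$. But an infix $v$ of $w$ is read along a sub-path of an accepting run, and that sub-path may form a cycle of $\cA$ that is \emph{not} among the $\gamma_j$'s and may have negative $\vec u_\sigma$-drift. Concretely, take states $p,q$ with edges $e_1\colon p\to q$, $e_2\colon q\to p$, $e_3\colon q\to p$, $e_4\colon p\to q$, all with the same $\varphi$-direction, and set $\langle\varphi(e_1),\vec u\rangle=\langle\varphi(e_3),\vec u\rangle=-1$ and $\langle\varphi(e_2),\vec u\rangle=\langle\varphi(e_4),\vec u\rangle=+2$. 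Then $\gamma_1=e_1e_2$ and $\gamma_2=e_3e_4$ both have drift $+1$, yet the run $(e_1e_3)^m(e_4e_2)^m$ decomposes as $m\gamma_1+m\gamma_2$ (hence lies in $L_\sigma$) while its prefix $(e_1e_3)^m$ has $\vec u$-drift $-2m$. So $L_\sigma\not\subseteq D_{\vec u,k}$ for any $k$, even though your Farkas certificate holds. The point is that your Parikh-style decomposition controls only the \emph{multiset} of cycle effects, not the cycles actually occurring as infixes; and two skeleton cycles can be interleaved to produce a contiguous sub-run that is a \emph{different} simple cycle of $\cA$ with negative drift.

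The paper sidesteps this by working at the automaton level: it decomposes $\cA$ into linear automata and takes the cone of \emph{all} their cycle effects. Then the dichotomy is $\cone(\cA)=\Q^n$ (giving $L\subseteq M_k$ directly) versus $\cone(\cA)$ contained in a half-space $H_{\vec u}$; in the latter case every cycle of $\cA$ has nonnegative $\vec u$-drift, so the infix bound is immediate---but one only gets $L\subseteq D_{\vec u,k}\cup S_{U,\ell}$, not $L\subseteq D_{\vec u,k}$. The remaining piece $L\cap S_{U,\ell}$ (walks that stay close to the hyperplane $U=\vec u^\perp$) is then mapped via a transduction to $\Sigma_{n-1}^*$ and handled by induction on the dimension. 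Your single-shot dichotomy is effectively trying to collapse this recursion, and the example above shows why that does not work without further ideas. Separately, your treatment of the integrality-gap case via residue refinement is plausible but left entirely as an assertion; establishing it would itself require nontrivial work.
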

We therefore say that $L\subseteq\Sigma_n^*$ is \emph{geometrically
  separable} if $L$ is contained in a finite union of languages of the form
$M_k$ and $D_{\vec u,k}$.  Then, we can formulate
\cref{geometric-disjointness} as a geometric characterization of
separability from $Z_n$.
\begin{corollary}\label{geometric-separability}
  For $L\subseteq\Sigma^*_n$, we have $\sep{L}{Z_n}$ if and only
  if $L$ is geometrically separable.
\end{corollary}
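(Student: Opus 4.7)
The corollary follows almost directly from \cref{geometric-disjointness}, so the plan is to argue both directions by turning a separator into a finite union of the basic shapes $M_k$ and $D_{\vec u,k}$ and vice versa.

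First I would handle the easy ``if'' direction. Suppose $L$ is geometrically separable, so that $L\subseteq S$ for some finite union $S=\bigcup_{i=1}^p M_{k_i}\cup\bigcup_{j=1}^q D_{\vec u_j,\ell_j}$. Each $M_{k_i}$ is regular (a modulo-counting constraint on $\varphi(w)$) and each $D_{\vec u_j,\ell_j}$ was shown regular in the paragraph preceding the theorem, as the preimage $h_{\vec u_j}^{-1}(D_{1,\ell_j})$ of the regular language $D_{1,\ell_j}$ under the morphism $h_{\vec u_j}$. Thus $S$ is a finite union of regular languages and hence regular. Moreover each $M_{k_i}$ and each $D_{\vec u_j,\ell_j}$ is by construction disjoint from $Z_n$ (the former because $\varphi(w)\not\equiv \vec 0\bmod k$ in particular rules out $\varphi(w)=\vec 0$; the latter because $\langle\varphi(w),\vec u_j\rangle\ne 0$ rules out $\varphi(w)=\vec 0$), so $S\cap Z_n=\emptyset$. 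Hence $S$ witnesses $\sep{L}{Z_n}$.

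For the ``only if'' direction, assume $\sep{L}{Z_n}$. Then by definition there is a regular language $R\subseteq\Sigma_n^*$ with $L\subseteq R$ and $R\cap Z_n=\emptyset$. Now \cref{geometric-disjointness} applied to $R$ yields a finite union $S$ of languages of the form $M_k$ and $D_{\vec u,k}$ such that $R\subseteq S$. Composing the inclusions gives $L\subseteq R\subseteq S$, which is exactly the statement that $L$ is geometrically separable.

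Since both implications reduce immediately to the content of \cref{geometric-disjointness} (plus the already verified regularity and disjointness-from-$Z_n$ of each basic piece), there is no real obstacle here; the entire substance of the corollary lives in \cref{geometric-disjointness}, and the corollary itself is essentially a repackaging.
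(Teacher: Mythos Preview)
Your proposal is correct and matches the paper's treatment: the paper presents the corollary as an immediate reformulation of \cref{geometric-disjointness} without a separate proof, and your two directions spell out exactly the obvious deduction (a regular separator $R$ disjoint from $Z_n$ is covered by basic separators via \cref{geometric-disjointness}, and conversely any finite union of the regular, $Z_n$-disjoint sets $M_k$, $D_{\vec u,k}$ is itself a regular separator).
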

The rest of \cref{geometric} is devoted to proving
\cref{geometric-disjointness}.

\paragraph{Overview of the proof}
The ``if'' direction of \cref{geometric-disjointness} is clear.  We
show the ``only if'' direction for $L\subseteq\Sigma_n^*$ in several
steps. We first associate to each finite automaton over $\Sigma_n$ a
(rational) cone in $\Q^n$, which must either encompass all of $\Q^n$
or be included in a halfspace (\cref{lem:dichotomy}).

The next step is to decompose the automaton for $L$ into automata
whose strongly connected components form a path.  We then prove
\cref{geometric-disjointness} in the case that such an automaton has
the cone $\Q^n$ (\cref{no-halfspace}).  In the case that the cone of an
automaton $\cA$ is included in some halfspace, we show that
$\Lang{\cA}$ further decomposes into a part inside some $D_{\vec u,k}$
and a part that stays close to some strict linear subspace
$U\subseteq\Q^n$ (\cref{halfspace}).

It thus remains to treat regular languages
$L\subseteq\Sigma_n^*$ whose walks remain close to
$U$.  To this end, we transform $\Lang{\cA}$ into a language in
$\Sigma_m^*$ where $m=\dim U<n$.  The transformation does not affect
disjointness from $Z_n$ (resp. $Z_m$), regularity, or geometric
separability
(\cref{computability-shift,subspace-trans-coordinates,subspace-trans-intersection,subspace-trans-projection,dimension-reduction}).
Since $m<n$, this allows us to apply induction.

\paragraph{Cones of automata}
For a set $S\subseteq\Q^n$, the \emph{cone generated by $S$}
consists of all vectors $x_1\vec u_1+\cdots +x_\ell\vec u_\ell$ where
$x_1,\ldots,x_\ell\in\Q_+$ and $\vec u_1,\ldots,\vec u_\ell\in S$.  To
each automaton $\cA$ over $\Sigma_n$, we associate a cone as follows.
If $w\in\Sigma_n^*$ labels a path in $\cA$, then $\varphi(w)$ is the
\emph{effect} of that path.  Let $\cone(\cA)$ be the cone generated by
the effects of cycles of $\cA$. Since every cycle effect is the sum of
effects of simple cycles, we know that $\cone(\cA)$ is generated by
the effects of simple cycles. In particular, $\cone(\cA)$ is finitely
generated and the set of simple cycle effects can serve as a
representation of $\cone(\cA)$.  A key ingredient in our proof is a
dichotomy of cones (\cref{lem:dichotomy}), which is a
consequence of the well-known Farkas'
lemma~\cite[Corollary~7.1d]{Schrijver1986}. A \emph{half-space} is a
subset of $\Q^n$ of the form
$\{\vec x\in\Q^n\mid \langle \vec x,\vec u\rangle\ge 0\}$ for some
$\vec u\in\Q^n$, $\vec u\ne \textbf{0}$.
\begin{restatable}{lemma}{coneDichotomy}\label{lem:dichotomy}
  For every $\cA$, either $\cone(\cA)=\Q^n$ or $\cone(\cA)$
  is included in some half-space.
\end{restatable}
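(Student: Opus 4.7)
The plan is to observe that $\cone(\cA)$ is a finitely generated rational cone and to invoke Farkas' lemma. Concretely, as noted right before the statement, $\cone(\cA)$ is generated by the (finitely many) effects of the simple cycles of $\cA$; call these generators $\vec v_1,\ldots,\vec v_k \in \Z^n$. So it suffices to prove the following standard fact of convex geometry: every finitely generated cone $C \subseteq \Q^n$ either equals $\Q^n$ or lies inside a half-space $H_{\vec u} = \{\vec x \in \Q^n \mid \langle \vec x,\vec u\rangle \ge 0\}$ for some nonzero $\vec u \in \Q^n$.

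To prove this dichotomy, suppose $C = \cone(\vec v_1,\ldots,\vec v_k) \ne \Q^n$ and pick any $\vec y \in \Q^n \setminus C$. Farkas' lemma (in the form cited from Schrijver) then yields a vector $\vec u \in \Q^n$ satisfying $\langle \vec v_i,\vec u\rangle \ge 0$ for every $i \in [1,k]$ and $\langle \vec y,\vec u\rangle < 0$. The second condition forces $\vec u \ne \textbf{0}$. The first condition, together with non-negativity of the coefficients in any conic combination, gives $\langle \vec x,\vec u\rangle \ge 0$ for every $\vec x \in C$, i.e. $C \subseteq H_{\vec u}$. Specialising to $C = \cone(\cA)$ proves the lemma.

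There is no serious obstacle here: the content is entirely in the application of Farkas' lemma, and the only mild point is to remark that the cone is finitely generated (which follows because every cycle effect is a sum of simple cycle effects, so the finitely many simple cycle effects already generate $\cone(\cA)$). I would keep the write-up to a few lines, citing \cite{Schrijver1986} for Farkas' lemma and deriving the desired $\vec u$ exactly as above.
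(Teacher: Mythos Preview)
Your proposal is correct and follows essentially the same approach as the paper: the paper also observes that $\cone(\cA)$ is generated by the finitely many simple cycle effects, picks a point outside the cone when $\cone(\cA)\ne\Q^n$, and applies Farkas' lemma to obtain a separating hyperplane. The only cosmetic difference is that the paper packages the generators into a matrix and at the end scales the separating vector to lie in $\Z^n$, but the argument is the same.
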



Let us recall the Farkas' lemma\footnote{The
  formulation of Corollary~7.1d in \cite{Schrijver1986} does not
  specify whether it is over the rationals or the reals. However, on
  p.~85, the author mentions that all results in chapter~7 hold for
  the reals as well as the rationals.}
from linear programming~\cite[Corollary 7.1d]{Schrijver1986}.
Intuitively it states that if a system $A\vec x=\vec b$ of linear inequalities has no solution over $\Q_+$,
then this is certified by a half-space that contains $A\vec x$ for every $\vec x$ over $\Q_+$, but does not contain $\vec b$.
\begin{lemma}[Farkas' Lemma]
  For every $A\in\Q^{n\times m}$ and $\vec b\in\Q^n$, exactly one of the following holds:
  \begin{enumerate}
  \item There exists an $\vec x\in\Q^n$, $\vec x\ge \bf{0}$, with $A\vec x=\vec b$.
  \item There exists a $\vec y\in\Q^n$ with $\vec y^\top A\ge 0$ and
    $\langle \vec y,\vec b\rangle <0$.
  \end{enumerate}
\end{lemma}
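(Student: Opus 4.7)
The plan is to prove the two clauses are mutually exclusive and then show at least one of them always holds. Exclusivity is a one-line calculation: if both held, then $0 \le (\vec y^\top A)\vec x = \vec y^\top(A\vec x) = \langle \vec y,\vec b\rangle < 0$, a contradiction. So the real content is that whenever no non-negative $\vec x$ satisfies $A\vec x=\vec b$, a separating $\vec y$ exists.

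I would introduce the cone $C = \{A\vec x \mid \vec x \in \Q^m,~\vec x\ge\vec 0\}$ generated by the columns of $A$. If $\vec b\in C$ we are in clause~(1). Otherwise I would produce a $\vec y\in\Q^n$ that is non-negative on each column $A\vec e_j$ (yielding $\vec y^\top A\ge\vec 0$) while $\langle\vec y,\vec b\rangle<0$; this is exactly a rational separating hyperplane for $\vec b$ from $C$.

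The hard part, and the only non-elementary step, is showing that the finitely generated cone $C$ is closed in $\Q^n$. I would establish this via Carath\'eodory's theorem for cones: every element of $C$ is already a non-negative combination of a \emph{linearly independent} subset of the columns of $A$. Consequently $C$ equals the \emph{finite} union, over column-index sets $J$ for which $\{A\vec e_j\mid j\in J\}$ is linearly independent, of the simplicial subcones $C_J=\{\sum_{j\in J} x_j A\vec e_j \mid x_j\ge 0\}$. Each $C_J$ is the image of an orthant under an injective linear map and therefore closed, and a finite union of closed sets is closed.

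Once $C$ is closed, convex and $\vec b\notin C$, the usual nearest-point separation argument finishes the job. Take $\vec c_0\in C$ minimising $\|\vec c-\vec b\|$ (which exists by intersecting $C$ with a sufficiently large ball) and set $\vec y=\vec c_0-\vec b$. Convexity of $C$ yields $\langle\vec y,\vec c-\vec c_0\rangle\ge 0$ for every $\vec c\in C$; scaling $\vec c_0$ by $\lambda\ge 0$ inside $C$ forces $\langle\vec y,\vec c_0\rangle=0$, so $\langle\vec y,\vec c\rangle\ge 0$ on all of $C$ (hence $\vec y^\top A\ge\vec 0$) and $\langle\vec y,\vec b\rangle=-\|\vec y\|^2<0$. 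Rationality is preserved because on the simplicial cone $C_J$ achieving the minimum the point $\vec c_0$ is the unique solution of a rational square linear system; hence $\vec y\in\Q^n$, as required.
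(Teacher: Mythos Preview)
The paper does not actually prove Farkas' Lemma; it merely quotes it from \cite[Corollary~7.1d]{Schrijver1986} and uses it as a black box to derive the cone dichotomy (\cref{lem:dichotomy}). Your proposal therefore supplies a full argument where the paper gives none.

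Your proof is the standard geometric route via a separating hyperplane and is essentially correct. The exclusivity step is fine, the Carath\'eodory decomposition of $C$ into finitely many simplicial cones is the right move for closedness, and the nearest-point construction of $\vec y$ works over $\R$. The only point that deserves a little more care is the rationality of $\vec c_0$: the nearest point in a simplicial cone $C_J$ to $\vec b$ need not be the orthogonal projection onto $\linspan\{A\vec e_j\mid j\in J\}$, since that projection may land outside $C_J$; rather, $\vec c_0$ is the orthogonal projection onto the span of some \emph{face} $C_{J'}$ with $J'\subseteq J$. Since every such face is itself one of the simplicial cones in your finite union, the global minimiser is still one of finitely many rational projections, so $\vec c_0\in\Q^n$ and your conclusion stands.
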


\begin{proof}[of \cref{lem:dichotomy}]
  Let $\vec u_1,\ldots,\vec u_k\in\Z^n$ be the effects of all simple
  cycles of $\cA$ and let $C\in\Z^{n\times k}$ be the matrix with
  columns $\vec u_1,\ldots,\vec u_k$. Then $\cone(\cA)$ consists of
  those vectors of the form $C\vec x$ with $\vec x\in \Q_+^n$.

  If $\cone(\cA)\ne\Q^n$, then there is a vector $\vec v\in\Q^n$ with
  $\vec v\notin\cone(\cA)$. This means the system of inequalities
  $C\vec x=\vec v$, $\vec x\ge 0$, does not have a solution. By
  Farkas' lemma, there exists a vector $\vec y\in\Q^n$ with
  $\vec y^\top C\ge 0$ and $\langle \vec y,\vec v\rangle <0$.  Hence,
  for every element $C\vec x$, $\vec x\in\Q_+^n$, of $\cone(\cA)$, we
  have $\langle \vec y,C\vec x\rangle=\vec y^\top C\vec x\ge 0$. Since
  $\vec y\in\Q^n$, there is a $k\in\N$ so that
  $\vec u=k\vec y\in\Z^n$.  Then we have
  $\cone(\cA)\subseteq \{\vec x\in\Q^n \mid \langle \vec x,\vec
  u\rangle\ge 0\}$.
  \qed
\end{proof}

\paragraph{Linear automata} For an automaton $\cA$, 
consider the directed acyclic graph (dag) consisting of strongly
connected components of $\cA$. If this dag is a path, then $\cA$ is
called \emph{linear}. Given an automaton $\cA$, we can
construct linear automata $\cA_1,\ldots,\cA_\ell$ with
$\Lang{\cA}=\Lang{\cA_1}\cup\cdots\cup\Lang{\cA_\ell}$.

\begin{lemma}\label{no-halfspace}
  Let $\cA$ be a linear automaton with $\cone(\cA)=\Q^n$. If
  $\Lang{\cA}\cap Z_n=\emptyset$, then $\Lang{\cA}\subseteq M_k$ for
  some $k$.
\end{lemma}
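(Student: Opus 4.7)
My plan is to analyse $\varphi(\Lang{\cA})$ via the standard decomposition of accepting runs in a finite automaton. Every accepting run of $\cA$ factors as an acyclic accepting run (the \emph{backbone}) enriched by pumping simple cycles inside the SCCs it visits. Consequently
\[ \varphi(\Lang{\cA}) = \bigcup_{\vec f \in F} (\vec f + S), \]
where $F \subseteq \Z^n$ is the \emph{finite} set of effects of acyclic accepting runs and $S = \sum_{i=1}^{m} \N \vec c_i$ is the numerical semigroup generated by the effects $\vec c_1, \ldots, \vec c_m$ of all simple cycles of $\cA$. The linearity hypothesis on $\cA$ plays no role in this step but will be used implicitly via the finiteness of $F$.

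The crux of the argument is to upgrade $S$ to the full lattice $\Lambda := \sum_{i=1}^{m} \Z \vec c_i$ using the cone hypothesis. Since $\cone(\cA) = \Q^n$, every $-\vec c_j$ is a nonnegative rational combination of $\vec c_1, \ldots, \vec c_m$; clearing denominators and summing over $j$ yields strictly positive integers $\rho_1, \ldots, \rho_m$ with $\sum_i \rho_i \vec c_i = \vec 0$. Given any $\vec v = \sum_i a_i \vec c_i \in \Lambda$ with $a_i \in \Z$, I would add $K$ copies of this null identity for $K \in \N$ large enough that every $a_i + K \rho_i \ge 0$; the resulting representation shows $\vec v \in S$, hence $S = \Lambda$.

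Because the $\vec c_i$ also $\Q$-span $\Q^n$, the lattice $\Lambda$ has full rank in $\Z^n$, so $k := |\Z^n / \Lambda|$ is finite and $k\Z^n \subseteq \Lambda$ (a standard consequence of Lagrange's theorem). The hypothesis $\Lang{\cA} \cap Z_n = \emptyset$ now rules out $\vec 0 \in \vec f + \Lambda$ for each backbone effect $\vec f \in F$, i.e.\ $\vec f \notin \Lambda$; combined with $k \Z^n \subseteq \Lambda$ this forces $\vec f \not\equiv \vec 0 \pmod{k}$. Since every effect in $\vec f + \Lambda$ agrees with $\vec f$ modulo $k$ (again using $k\Z^n \subseteq \Lambda$), I conclude $\varphi(w) \not\equiv \vec 0 \pmod{k}$ for every $w \in \Lang{\cA}$, which is exactly $\Lang{\cA} \subseteq M_k$.

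The main obstacle is the identification $S = \Lambda$: it hinges on the Gordan-type observation that a full cone among the $\vec c_i$ admits a strictly positive integer null combination, which lets one absorb negative coefficients by shifting with $K \sum_i \rho_i \vec c_i = \vec 0$. Once this is in place, the rest is a clean finite-index computation in $\Z^n/\Lambda$.
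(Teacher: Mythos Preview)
Your overall strategy is sound and close in spirit to the paper's argument, but there is a genuine gap in the very first step, and a smaller slip at the end.

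\textbf{The decomposition $\varphi(\Lang{\cA}) = \bigcup_{\vec f\in F}(\vec f + S)$ is not justified.} You correctly observe that every accepting run decomposes into an acyclic backbone plus simple cycles, which gives the inclusion $\subseteq$. But you silently assume the reverse inclusion, and you later \emph{use} it: the step ``$\Lang{\cA}\cap Z_n=\emptyset$ rules out $\vec 0\in\vec f+\Lambda$'' needs $\vec f+S\subseteq\varphi(\Lang{\cA})$. This inclusion is \emph{false} in general. For a two-state automaton with edges $p\to q$ of effect $(1,0)$, $q\to p$ of effect $(0,1)$, and a self-loop $q\to q$ of effect $(1,0)$, one has $S=\N(1,1)+\N(1,0)$ and $F=\{(0,0)\}$, yet $(1,0)\in S$ is not the effect of any accepting word. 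The obstacle is that inserting a simple cycle whose basepoint is not on the backbone requires a detour, and the detour contributes extra cycle effects you cannot simply discard. This is precisely where linearity is needed (every accepting run visits every SCC, so detours exist at all) and where the identity $S=\Lambda$ is needed (so that the unwanted detour vector $\vec D\in S$ can be cancelled by inserting \emph{further} cycles realising $-\vec D\in\Lambda=S$). Your remark that ``linearity plays no role in this step but will be used implicitly via the finiteness of $F$'' is therefore mistaken: $F$ is finite for any finite automaton; linearity is what makes the $\supseteq$ direction go through. The paper's proof handles exactly this point by first inserting cycles (each $k$ times) so that the run visits \emph{every} state, and only then cancelling the effect.

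\textbf{The final modular step has a reversed inclusion.} From $k\Z^n\subseteq\Lambda$ you write ``every effect in $\vec f+\Lambda$ agrees with $\vec f$ modulo $k$'', i.e.\ $\Lambda\subseteq k\Z^n$, which is the wrong direction (take $\Lambda=\Z\times 2\Z$, $k=2$). The conclusion $\varphi(w)\not\equiv\vec 0\pmod{k}$ is nonetheless correct, but for the opposite reason: if $\varphi(w)\equiv\vec 0\pmod{k}$ then $\varphi(w)\in k\Z^n\subseteq\Lambda$, and combined with $\varphi(w)\in\vec f+\Lambda$ this forces $\vec f\in\Lambda$, a contradiction.

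Once these two points are repaired, your argument is a clean structural alternative to the paper's direct pumping proof: you show $\varphi(\Lang{\cA})$ is a single nonzero coset of a full-rank lattice $\Lambda$ and take $k$ to be the group exponent of $\Z^n/\Lambda$, whereas the paper argues by contradiction, explicitly pumping a hypothetical $w\notin M_k$ into a word of $Z_n$.
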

\begin{proof}
  Since $\cone(\cA)=\Q^n$, we know that in particular the vectors
  $\vec e_1,-\vec e_1,\ldots,\vec e_n,-\vec e_n$ belong to
  $\cone(\cA)$. This means there are cycles labeled $w_1,\ldots,w_p$
  such that both $\vec e_i$ and $-\vec e_i$ are linear combinations of
  $\varphi(w_1),\ldots,\varphi(w_p)$ with coefficients in $\Q_+$, for
  every $i\in[1,n]$.  Therefore, there is a $k\in\N$ such that
  $k\cdot \vec e_i$ and $-k\cdot \vec e_i$ are linear combinations of
  $\varphi(w_1),\ldots,\varphi(w_p)$ with coefficients in $\N$, for
  every $i\in[1,n]$.
  We claim that $\Lang{\cA}\subseteq M_k$. Towards a contradiction,
  suppose $w\in \Lang{\cA}$ with $\varphi(w)\equiv \textbf{0}\bmod{k}$. Since
  $\cA$ is linear, we can take the run for $w$ and insert cycles so
  that the resulting run visits every state in $\cA$. Instead of
  inserting every cycle once, we insert it $k$ times, so that the
  resulting run (i)~visits every state in $\cA$ and (ii)~reads a word
  $w'\in\Sigma_n^*$ with $\varphi(w')\equiv \varphi(w)\bmod{k}$. Now
  since $\varphi(w')\equiv \varphi(w)\equiv \textbf{0}\bmod{k}$, we can write
  $-\varphi(w')=x_1\varphi(w_1)+\cdots+x_p\varphi(w_p)$ with
  coefficients $x_1,\ldots,x_p\in\N$. Since in the run for $w'$, every
  state of $\cA$ is visited, we can insert cycles corresponding to the
  $w_1,\ldots,w_p$: For each $i\in[1,p]$, insert the cycle for $w_i$
  exactly $x_i$ times. Let $w''$ be the word read by the resulting run
  and note that $w''\in\Lang{\cA}$.  Then we have
  $\varphi(w'')=\varphi(w')+x_1\varphi(w_1)+\cdots+x_p\varphi(w_p)=\textbf{0}$
  and thus $w''\in Z_n$, contradicting $\Lang{\cA}\cap Z_n=\emptyset$.
\end{proof}

\paragraph{Walks that stay close to a subspace}
Suppose we are given a vector space $U\subseteq\Q^n$ (represented by a
basis) with $m=\dim U<n$ and a bound $\ell\ge 0$.
Let $\|\vec u\|=\sqrt{\langle \vec u,\vec u\rangle}$.
For $U\subseteq\Q^n$ and $\vec v\in\Q^n$, we set $d(\vec v,U)=\inf\{\|\vec v-\vec x\|\mid \vec x\in U\}$.
Then we define the set
\[ S_{U,\ell}=\{w\in\Sigma_n^* \mid \text{for every prefix $v$ of $w$:
    $d(\varphi(v),U)\le \ell$}\}.\]
Hence, $S_{U,\ell}$ collects those
walks whose prefixes stay close to the subspace $U$. 

\begin{lemma}\label{halfspace}
  Let $\cA$ be an automaton such that $\cone(\cA)$ is contained in
  some half-space.  One can compute $k,\ell\in\N$, $\vec u\in\Z^n$, and a
  strict subspace $U\subseteq \Q^n$ with $\Lang{\cA}\subseteq D_{\vec u,k}\cup
  S_{U,\ell}$.
\end{lemma}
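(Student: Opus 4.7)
The plan is to exploit the cone hypothesis to produce a single direction $\vec u$ witnessing both parts of the dichotomy. By Farkas' lemma, as used in the proof of \cref{lem:dichotomy}, from the simple-cycle effects of $\cA$ (which generate $\cone(\cA)$) we can compute $\vec u \in \Z^n \setminus \{\textbf{0}\}$ such that $\langle \vec c, \vec u \rangle \ge 0$ for every cycle effect $\vec c$ of $\cA$. Set $U = \{\vec x \in \Q^n \mid \langle \vec x, \vec u \rangle = 0\}$; this is a strict subspace since $\vec u \ne \textbf{0}$. For brevity, write $\sigma(v) := \langle \varphi(v), \vec u \rangle$.

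The central step is a uniform \emph{infix bound}: a computable constant $k \in \N$ such that $\sigma(v) \ge -k$ for every infix $v$ of every word in $\Lang{\cA}$. To prove it, decompose $\cA$ into strongly connected components. Any path in $\cA$ labelled by such an infix $v$ passes through a sequence of SCCs $S_{i_1}, \ldots, S_{i_r}$ with $r \le N$, where $N$ is the number of SCCs of $\cA$. Writing correspondingly $v = v_1 t_1 v_2 t_2 \cdots t_{r-1} v_r$, each $v_j$ labels a path inside the SCC $S_{i_j}$ and each $t_j$ is a single inter-SCC edge. Since both endpoints of $v_j$ lie in the same SCC, we can append a simple return path $p_j$ inside that SCC to close $v_j$ into a cycle, whence $\sigma(v_j) \ge -\sigma(p_j) \ge -K$, where $K = \max\{|\sigma(p)| \mid p \text{ a simple path in } \cA\}$. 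With $T = \max\{|\sigma(\varphi(x))| \mid x \in \Sigma_n\}$ bounding single-edge contributions, we obtain
\[ \sigma(v) = \sum_{j=1}^r \sigma(v_j) + \sum_{j=1}^{r-1} \sigma(t_j) \ge -rK - (r-1)T \ge -(NK + NT) =: -k. \]

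With the infix bound in hand, the statement follows immediately. Pick any $w \in \Lang{\cA}$. If $\sigma(w) \ne 0$, the infix bound, applied to all infixes of $w$, directly yields $w \in D_{\vec u, k}$. Otherwise $\sigma(w) = 0$; for any prefix $v$ of $w$, write $w = v w'$, so $w'$ is itself an infix of $w$ and hence
\[ -k \le \sigma(v) = -\sigma(w') \le k. \]
Since $\vec u \in \Z^n \setminus \{\textbf{0}\}$ we have $\|\vec u\| \ge 1$, so $d(\varphi(v), U) = |\sigma(v)|/\|\vec u\| \le k$, whence $w \in S_{U, k}$. Taking $\ell = k$ yields $\Lang{\cA} \subseteq D_{\vec u, k} \cup S_{U, \ell}$.

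The main obstacle, and the heart of the argument, is the infix bound: an arbitrary infix of an accepted word is generally \emph{not} closable into a cycle (its endpoints may lie in different SCCs), so one cannot apply the cone hypothesis directly to $v$. The SCC decomposition circumvents this by cutting $v$ into pieces that each live in a single SCC and may thus be closed locally, while the total number of inter-SCC hops is controlled by $N$. All the constants are effectively computable from $\cA$: $\vec u$ by linear programming on the simple-cycle generators, $K$ and $T$ by enumerating simple paths and letters, and $U$ as the integer kernel of $\vec u^\top$.
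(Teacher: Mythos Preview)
Your proof is correct. The overall architecture---compute $\vec u$ from the half-space hypothesis, set $U=\{\vec x\mid\langle\vec x,\vec u\rangle=0\}$, establish a uniform infix bound $\sigma(v)\ge -k$, then split into two cases---matches the paper's, but two sub-steps differ.

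For the infix bound, the paper uses a shorter argument: any path reading an infix $v$ can have its cycles iteratively excised, and since every cycle effect has non-negative inner product with $\vec u$, excising cycles can only \emph{decrease} $\sigma$; what remains is a simple path, whose $|\sigma|$-value is bounded by a computable constant. Your SCC decomposition is also valid, but the ``obstacle'' you highlight---that $v$ itself cannot be closed into a cycle---is an artifact of wanting to close $v$ globally; the paper sidesteps it by removing cycles \emph{from within} $v$ rather than completing $v$ to one. Your approach buys nothing extra here, though it is a perfectly sound alternative.

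For the case split, the paper splits on whether some prefix $v$ of $w$ satisfies $\sigma(v)>k$: if so, the infix bound on the complementary suffix forces $\sigma(w)>0$, hence $w\in D_{\vec u,k}$; if not, every prefix has $|\sigma(v)|\le k$, hence $w\in S_{U,k}$. Your split on $\sigma(w)\ne 0$ versus $\sigma(w)=0$ is equally valid and arguably cleaner, since it uses the infix bound symmetrically on prefix and suffix.

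One notational slip: in defining $T$ you wrote $|\sigma(\varphi(x))|$, but $\sigma$ already contains $\varphi$; you mean $|\sigma(x)|=|\langle\varphi(x),\vec u\rangle|$.
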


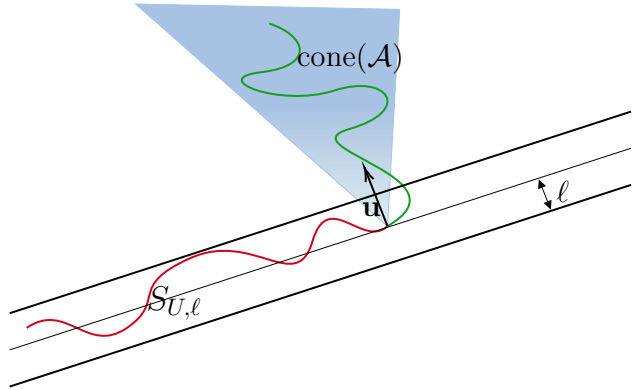
\begin{figure}  
  \centering
  \scalebox{0.5}{
  
\tikzset {_bnuccimim/.code = {\pgfsetadditionalshadetransform{ \pgftransformshift{\pgfpoint{0 bp } { 0 bp }  }  \pgftransformrotate{-270 }  \pgftransformscale{2 }  }}}
\pgfdeclarehorizontalshading{_tf8jpoegn}{150bp}{rgb(0bp)=(0.88,0.93,0.94);
rgb(37.5bp)=(0.88,0.93,0.94);
rgb(50.89285714285714bp)=(0.66,0.76,0.89);
rgb(100bp)=(0.66,0.76,0.89)}
\tikzset{every picture/.style={line width=0.75pt}} 

\begin{tikzpicture}[x=0.75pt,y=0.75pt,yscale=-1,xscale=1]

\path  [shading=_tf8jpoegn,_bnuccimim] (155.6,-6.9) -- (419.19,-1.5) -- (407,216.61) -- cycle ; 
 \draw  [color={rgb, 255:red, 164; green, 197; blue, 235 }  ,draw opacity=1 ] (155.6,-6.9) -- (419.19,-1.5) -- (407,216.61) -- cycle ; 

\draw [color={rgb, 255:red, 208; green, 2; blue, 27 }  ,draw opacity=1 ][line width=1.5]    (47,318.98) .. controls (87,291.31) and (107,346.64) .. (147,318.98) .. controls (187,291.31) and (147,277.48) .. (210,248.89) .. controls (273,220.3) and (316,286.7) .. (332,232.29) .. controls (348,177.88) and (382,239.67) .. (407,216.61) ;

\draw [color={rgb, 255:red, 17; green, 159; blue, 31 }  ,draw opacity=1 ][line width=1.5]    (407,216.61) .. controls (491,167.73) and (305,137.3) .. (369,118.85) .. controls (433,100.41) and (408,62.6) .. (354,80.12) .. controls (300,97.64) and (226,68.13) .. (281,61.68) .. controls (336,55.22) and (326,24.79) .. (289,12.8) ;

\draw [line width=1.5]    (30,304.22) -- (650,101.33) ;

\draw [line width=1.5]    (30,378) -- (650,175.11) ;

\draw [line width=1.5]    (407,216.61) -- (385.09,160.38) ;
\draw [shift={(384,157.59)}, rotate = 428.71000000000004] [color={rgb, 255:red, 0; green, 0; blue, 0 }  ][line width=1.5]    (14.21,-4.28) .. controls (9.04,-1.82) and (4.3,-0.39) .. (0,0) .. controls (4.3,0.39) and (9.04,1.82) .. (14.21,4.28)   ;

\draw    (30,341.11) -- (650,138.22) ;

\draw [line width=0.75]    (558.73,171.44) -- (569.27,198.15) ;
\draw [shift={(570,200.01)}, rotate = 248.48000000000002] [fill={rgb, 255:red, 0; green, 0; blue, 0 }  ][line width=0.75]  [draw opacity=0] (8.93,-4.29) -- (0,0) -- (8.93,4.29) -- cycle    ;
\draw [shift={(558,169.58)}, rotate = 68.48] [fill={rgb, 255:red, 0; green, 0; blue, 0 }  ][line width=0.75]  [draw opacity=0] (8.93,-4.29) -- (0,0) -- (8.93,4.29) -- cycle    ;

\draw (391,199.4) node  [scale=1.6] [align=left] {\textbf{{\Large u}}};
\draw (194,294.08) node [scale=1.8] [align=left]  {\textbf{\Large $S_{U,\ell}$}};
\draw (582,181.57) node [scale=1.6]  [align=left] {\textbf{{\Large $\ell$}}};
\draw (370,47.6) node [scale=1.6] [align=left] {\textbf{{\Large $\cone(\cA)$}}};

\end{tikzpicture}}
  \caption{Two runs (red and green) inside $D_{\vec u,k}\cup
  S_{U,\ell}$.}
\end{figure}

\begin{proof}
  Suppose $\cone(A)\subseteq H$, where
  $H=\{\vec v\in \Q^n \mid \langle \vec v,\vec u\rangle\ge 0\}$ for
  some vector $\vec u\in\Q^n\setminus \{\textbf{0}\}$. Without loss of
  generality, we may assume $\vec u\in\Z^n\setminus\{\textbf{0}\}$. Let
  $U=\{\vec v\in\Q^n \mid \langle \vec v,\vec u\rangle=0\}$. Then
  clearly $\dim U=n-1$.  Observe that since $\cone(\cA)\subseteq H$, we
  have $\langle \vec v,\vec u\rangle\ge 0$ for every cycle effect
  $\vec v\in\Z^n$ of $\cA$.  Let $k$ be the number of states in $\cA$. Now,
  whenever $w\in\Lang{\cA}$ and $v$ is an infix of $w$, then
  $\langle \varphi(v),\vec u\rangle\ge -k$: If
  $\langle \varphi(v),\vec u\rangle<-k$, then the path reading $v$
  must contain a cycle reading $v'\in\Sigma^*_n$ with
  $\langle \varphi(v'),\vec u\rangle<0$, which contradicts
  $\cone(\cA)\subseteq H$.

  We claim that $\Lang{\cA}\subseteq D_{\vec u,k}\cup S_{U,k}$. Let
  $w\in\Lang{\cA}$.  We distinguish two cases.
%
  \emph{Case~1:}~Suppose $w$ has a prefix $v$ with
    $\langle\varphi(v),\vec u\rangle>k$.  Write $w=vv'$. As argued
    above, we have $\langle \varphi(v'),\vec u\rangle\ge -k$.
    Hence,
    $\langle \varphi(w),\vec u\rangle=\langle \varphi(v),\vec
    u\rangle+\langle\varphi(v'),\vec u\rangle >0$. Thus, we have
    $w\in D_{\vec u,k}$.
%
    \emph{Case~2:}~Suppose for every prefix $v$ of $w$, we have
    $\langle \varphi(v),\vec u\rangle\le k$.  Then, for every prefix
    $v$ of $w$, we have $-k\le\langle \varphi(v),\vec u\rangle\le k$
    and thus 
    $d(\varphi(v),U)=|\langle \varphi(v), \vec u\rangle|/\|\vec u\|\le
    k$ (see \cref{distance-hyperplane}).  Thus, $w\in S_{U,k}$.
\end{proof}

\begin{lemma}\label{distance-hyperplane}
  Let $\vec u\in\Q^n$ and
  $U=\{\vec v\in\Q^n \mid \langle \vec v,\vec u\rangle=0\}$.  Then
  $d(\vec v,U)=\frac{|\langle \vec v,\vec u\rangle|}{\|\vec u\|}$ for
  $\vec v\in\Q^n$.
\end{lemma}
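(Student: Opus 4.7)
The plan is to invoke the standard orthogonal decomposition of $\vec{v}$ with respect to the hyperplane $U$. Since $U$ is the orthogonal complement of the one-dimensional subspace spanned by $\vec{u}$, I can uniquely write $\vec{v}=\vec{v}_U+c\vec{u}$ with $\vec{v}_U\in U$ and $c\in\Q$. To determine $c$, I take the inner product of both sides with $\vec{u}$: since $\langle \vec{v}_U,\vec{u}\rangle=0$ by the definition of $U$, this yields $\langle \vec{v},\vec{u}\rangle=c\|\vec{u}\|^2$, hence $c=\langle \vec{v},\vec{u}\rangle/\|\vec{u}\|^2$.

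Next, for an arbitrary $\vec{x}\in U$ I would compute $\|\vec{v}-\vec{x}\|^2$. Writing $\vec{v}-\vec{x}=(\vec{v}_U-\vec{x})+c\vec{u}$, and observing that $\vec{v}_U-\vec{x}\in U$ is orthogonal to $c\vec{u}$, the Pythagorean identity gives
\[
\|\vec{v}-\vec{x}\|^2=\|\vec{v}_U-\vec{x}\|^2+c^2\|\vec{u}\|^2\geq c^2\|\vec{u}\|^2,
\]
with equality precisely when $\vec{x}=\vec{v}_U$. Taking the infimum over $\vec{x}\in U$ (which is attained at $\vec{x}=\vec{v}_U$) yields $d(\vec{v},U)=|c|\cdot\|\vec{u}\|=|\langle \vec{v},\vec{u}\rangle|/\|\vec{u}\|$, as desired.

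The only mildly subtle point is that the infimum in the definition of $d(\vec{v},U)$ is actually attained and that $\vec{v}_U$ lies in $U$ as a subset of $\Q^n$ (not just $\R^n$); but both follow immediately from the explicit formula $\vec{v}_U=\vec{v}-(\langle \vec{v},\vec{u}\rangle/\|\vec{u}\|^2)\vec{u}$, which clearly has rational coordinates whenever $\vec{v},\vec{u}\in\Q^n$. There is no real obstacle here — this is the textbook formula for point-to-hyperplane distance, included only to justify the bound used in Case~2 of the proof of \cref{halfspace}.
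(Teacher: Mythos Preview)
Your proof is correct and follows essentially the same approach as the paper: both arguments use the orthogonal decomposition of $\vec v$ relative to $U$ and the Pythagorean identity to minimize $\|\vec v-\vec x\|$. The only cosmetic difference is that the paper extends $\vec u$ to a full orthogonal basis $\vec b_1,\ldots,\vec b_n$ of $\Q^n$ and works in coordinates, whereas you project directly onto the line $\Q\vec u$; your version is slightly more economical but the underlying idea is identical.
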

\begin{proof}
  We extend $\vec u$ to an orthogonal basis $\vec b_1,\ldots,\vec b_n$
  of $\Q^n$, meaning $\langle \vec b_i,\vec b_j\rangle=0$ if $i\ne j$ and
  $\vec b_1=\vec u$. Because of orthogonality, we may express $\|\vec v\|$ for any vector
  $\vec v\in\Q^n$ with $\vec v=v_1\vec b_1+\cdots+v_n\vec b_n$ as
  \begin{multline*}
    \sqrt{\langle \vec v,\vec v\rangle}=\sqrt{\langle v_1\vec b_1+\cdots +v_n\vec b_n,v_1\vec b_1+\cdots +v_n\vec b_n\rangle} \\
    =\sqrt{\sum_{i=1}^n v_i^2\langle \vec b_i,\vec b_i\rangle}=\sqrt{\sum_{i=1}^n v_i^2\|\vec b_i\|^2}.
  \end{multline*}
  Let $\vec x\in\Q^n$ be a vector with
  $\vec x=x_1\vec b_1+\cdots+x_n\vec b_n$.  Since $\vec b_1=\vec u$
  and thus $\langle \vec x,\vec u\rangle=x_1$, the vector $\vec x$
  belongs to $U$ if and only if $x_1=0$. Therefore, for
  $\vec v\in \Q^n$ with $\vec v=v_1\vec b_1+\cdots+v_n\vec b_n$ and $\vec x\in U$, we
  have
  \[ \|\vec v-\vec x\|=\sqrt{v_1^2\|\vec u\|^2+\sum_{i=2}^n (v_i-x_i)^2\|\vec b_i\|^2}. \]
   This distance is minimal with $x_i=v_i$ for $i\in[2,n]$ and
  in that case, the distance is
  $d(\vec v,U)=\sqrt{v_1^2\|\vec u\|^2}=|v_1|\cdot \|\vec u\|$. Since
  $\langle \vec v,\vec u\rangle=v_1\|\vec u\|^2$, that implies
  $d(\vec v,U)=|\langle \vec v,\vec u\rangle|/\|\vec u\|$.
  \qed
\end{proof}

\paragraph{Mapping to lower dimension}
\Cref{halfspace} tells us that if $\cone(\cA)$ is included in some
halfspace, then $\Lang{\cA}$ can be split into (i)~a part
$L\cap D_{\vec u,k}$ that is already geometrically separable and
(ii)~a part $L\cap S_{U,\ell}$ that stays close to a strict subspace
$U\subseteq\Q^n$. Therefore, to complete the proof that regular
languages disjoint from $Z_n$ are geometrically separable, it remains
to treat subsets of $S_{U,\ell}$. We will now show that they can be
transformed into a language in $\Sigma_m^*$, where $m=\dim U<n$.  This
transformation will not affect regularity, geometric separability or disjointness
from $Z_n$ (resp. $Z_m$) and thus allow us to apply induction.

This transformation will be performed by a transducer.
The transducer will consist of three steps, \emph{coordinate
  transformation} ($f$), \emph{intersection} ($R_{V,p}$), and
\emph{projection} ($\pi_m$). In the coordinate transformation, we
translate $L$ from walks that stay close to $U$ into walks that stay
close to $V=\{(v_1,\ldots,v_n)\in \Q^n\mid v_{m+1}=\cdots=v_n=0\}$.
The intersection will then select only those walks that not only stay
close to $V$, but even arrive in $V$. Finally, we project away
the coordinates $m+1,\ldots,n$ and thus have walks in $\Z^m$.

We now describe each of the three steps.  For the coordinate
transformation, we apply a linear map $A$ to the walk in $L$ that maps
$U$ to $V$. Let us define this map as a matrix $A\in\Z^{n\times
  n}$. We choose an orthogonal basis
$\vec b_1, \ldots,\vec b_n\in\Z^n$ of $\Q^n$ such that
$\vec b_1,\ldots,\vec b_m$ is a basis for $U$. This can be done,
e.g. using Gram-Schmidt
orthogonalisation~\cite{LangLinearAlgebra1966}.  If
$B\in\Z^{n\times n}$ is the matrix whose columns are
$\vec b_1,\ldots,\vec b_n$, then $B$ is invertible and maps $V$ to
$U$.  Thus the inverse $B^{-1}\in\Q^{n}$ maps $U$ to $V$. We can
clearly choose an $\alpha\in\Z$ such that
$\alpha B^{-1}\in\Z^{n\times n}$ and we set $A=\alpha B^{-1}$.
For each $i\in[1,n]$, choose a word $w_i\in\Sigma_n^*$ with
$\varphi(w_i)=A\varphi(a_i)$ and let $f\colon\Sigma^*_n\to\Sigma^*_n$
be the morphism with $f(a_i)=w_i$ and $f(\bar{a}_i)=\bar{w}_i$.
Now $f$ indeed transforms walks close to $U$ into walks close to $V$:
\begin{restatable}{lemma}{computabilityShift}\label{computability-shift}
  We can compute $p\in\N$ with $f(S_{U,\ell})\subseteq S_{V,p}$.
\end{restatable}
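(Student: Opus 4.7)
The plan is to bound $d(\varphi(v),V)$ for an arbitrary prefix $v$ of $f(w)$ by splitting $v$ into a part that is an $f$-image of a prefix of $w$ and a short tail coming from the current block $w_i$ or $\bar{w}_i$. The two ingredients will be (a)~linearity of $f$ on $\varphi$-images and (b)~the fact that $A$ sends $U$ into $V$.

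First I would record the linearity observation: by definition of $f$ and since the $\varphi$-effects of the letters $a_i,\bar{a}_i$ are mapped through $A$, we have $\varphi(f(u))=A\,\varphi(u)$ for every $u\in\Sigma_n^*$. Next, given $w\in S_{U,\ell}$ and a prefix $v$ of $f(w)$, I would factor $v=f(w')\cdot v''$, where $w'$ is a prefix of $w$ and $v''$ is a prefix of the block $w_i$ or $\bar{w}_i$ currently being produced. Then $\varphi(v)=A\,\varphi(w')+\varphi(v'')$, and the triangle inequality (choosing the nearest point of $V$ for $A\,\varphi(w')$) gives
\[
d(\varphi(v),V)\ \le\ d(A\,\varphi(w'),V)\ +\ \|\varphi(v'')\|.
\]

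For the first summand I would use that $A$ sends $U$ into $V$: if $\vec x\in U$ minimises $\|\varphi(w')-\vec x\|$, then $A\vec x\in V$ and
\[
d(A\,\varphi(w'),V)\ \le\ \|A\,\varphi(w')-A\vec x\|\ \le\ \|A\|\cdot d(\varphi(w'),U)\ \le\ \|A\|\cdot\ell,
\]
where $\|A\|$ is any computable upper bound on the operator norm of $A$ (e.g.\ $\sqrt{n}\max_{i,j}|A_{ij}|$, which is rational and depends only on $A$). For the second summand, since every letter of $\Sigma_n$ has $\varphi$-image of norm $1$, we get $\|\varphi(v'')\|\le|v''|\le M$, where $M=\max_i|w_i|$ is a computable constant. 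Combining these and taking $p=\lceil\|A\|\cdot\ell+M\rceil$ yields $d(\varphi(v),V)\le p$ for every prefix $v$ of every $f(w)$ with $w\in S_{U,\ell}$, which is exactly $f(S_{U,\ell})\subseteq S_{V,p}$.

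The only point requiring a little care is the computability of $p$: the basis $\vec b_1,\ldots,\vec b_n$, the scalar $\alpha$, the matrix $A=\alpha B^{-1}$, and the representative words $w_i$ are all explicitly constructed, so $M$ and a rational upper bound on $\|A\|$ are directly computable from this data. No deep estimate is needed; the main (mild) obstacle is choosing a computable operator-norm bound, which is handled by any elementary inequality such as $\|A\vec y\|\le\sqrt{n}\,(\max_{i,j}|A_{ij}|)\,\|\vec y\|$. \qed
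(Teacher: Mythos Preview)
Your proof is correct and follows essentially the same route as the paper: factor an arbitrary prefix of $f(w)$ as $f(w')v''$, bound $d(A\varphi(w'),V)$ by $\|A\|\cdot\ell$ using $AU\subseteq V$, and absorb the tail $v''$ by the maximal block length. One small slip: $\sqrt{n}\max_{i,j}|A_{ij}|$ is not rational in general, but any cruder integer bound such as $n\cdot\max_{i,j}|A_{ij}|$ works and keeps $p$ computable.
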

\begin{proof}
  Choose $k\in\N$ so that $k\ge |f(a_i)|$ and $k\ge |f(\bar{a}_i)|$
  for $i\in[1,n]$ and let $p=\|A\|\cdot\ell+k$. We claim that
  $f(S_{U,\ell})\subseteq S_{V,p}$. Let $w\in S_{U,\ell}$.
  
  Consider a prefix $v$ of $f(w)$. Let us first consider the case that
  $v=f(u)$ for some prefix $u$ of $w$. Since $w\in S_{U,\ell}$,
  we have $d(\varphi(u), U)\le\ell$. Therefore,
  \begin{align*}
    d(\varphi(f(u)),V)&=d(A\varphi(u), AU) \\
    &=\inf\{\|A\varphi(u)-A\vec u\| \mid \vec u\in U\} \\
                      &\le \|A\| \cdot \inf\{\|\varphi(u)-\vec u\| \mid \vec u\in U\}\\
    &=\|A\|\cdot d(\varphi(u),U)=\|A\|\cdot \ell.
  \end{align*}
  Now if $v$ is any prefix of $f(w)$, then $v=f(u)v'$, where $u$ is a
  prefix of $w$ and $|v'|\le k$. This implies that
  $d(\varphi(v),V)\le d(\varphi(u),V)+k\le \|A\|\cdot \ell+k=p$.
  \qed
\end{proof}

Moreover, applying $f$ does not introduce geometric separability
and preserves disjointness with $Z_n$.
\begin{restatable}{lemma}{subspaceTransCoordinates}\label{subspace-trans-coordinates}
  If $f(L)$ is geometrically separable for $L\subseteq\Sigma^*_n$, then so is $L$.
  We have $L\cap Z_n=\emptyset$ if and only if $f(L)\cap Z_n=\emptyset$.
\end{restatable}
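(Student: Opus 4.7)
\textbf{Proof plan for \Cref{subspace-trans-coordinates}.}
The whole argument hinges on two easy observations about the morphism $f$: first, that $\varphi(f(w)) = A\varphi(w)$ for every $w\in\Sigma_n^*$ (this holds on letters by definition of $f$ and extends to all words because $\varphi$ is a morphism), and second, that $A\in\Z^{n\times n}$ is invertible over $\Q$ (since $A=\alpha B^{-1}$ with $\alpha\ne 0$ and $B$ invertible). With these in hand, both claims reduce to tracking how $A$ interacts with the two kinds of basic separators.

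For the second statement (disjointness from $Z_n$), invertibility of $A$ immediately gives $A\varphi(w)=\textbf{0}$ iff $\varphi(w)=\textbf{0}$, i.e.\ $f(w)\in Z_n$ iff $w\in Z_n$. Hence $f(L)\cap Z_n=\emptyset$ iff $L\cap Z_n=\emptyset$.

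For the first statement (geometric separability), I will prove two containments: $f^{-1}(M_k)\subseteq M_k$ and $f^{-1}(D_{\vec u,k})\subseteq D_{A^{\top}\vec u,k}$ for every $\vec u\in\Z^n$ and $k\in\N$. The first is trivial because $A$ has integer entries: $\varphi(w)\equiv\textbf{0}\bmod k$ implies $A\varphi(w)\equiv\textbf{0}\bmod k$, so $f(w)\in M_k$ forces $w\in M_k$. For the second, the key point is that since $f$ is a morphism, whenever $w=xvy$ is any factorisation of $w$, one has $f(w)=f(x)f(v)f(y)$, so $f(v)$ is an infix of $f(w)$. Assuming $f(w)\in D_{\vec u,k}$, apply the defining inequality to every such $f(v)$: $\langle\varphi(f(v)),\vec u\rangle=\langle A\varphi(v),\vec u\rangle=\langle\varphi(v),A^{\top}\vec u\rangle\ge -k$, and apply the non-vanishing condition to $f(w)$ itself: $\langle\varphi(w),A^{\top}\vec u\rangle=\langle A\varphi(w),\vec u\rangle\ne 0$. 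Hence $w\in D_{A^{\top}\vec u,k}$. Combining, if $f(L)\subseteq\bigcup_i M_{k_i}\cup\bigcup_j D_{\vec u_j,k_j}$, then $L\subseteq\bigcup_i M_{k_i}\cup\bigcup_j D_{A^{\top}\vec u_j,k_j}$, which is again a finite union of basic separators.

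I do not expect a genuine obstacle here; the only thing one must be slightly careful about is that an arbitrary infix $v'$ of $f(w)$ need \emph{not} be of the form $f(v)$ for an infix $v$ of $w$. This is why the proof only requires the inequality on infixes of the form $f(v)$, which is immediate from $f$ being a morphism.
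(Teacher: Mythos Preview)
Your proposal is correct and follows essentially the same route as the paper: both prove $f^{-1}(M_k)\subseteq M_k$ and $f^{-1}(D_{\vec u,k})\subseteq D_{A^{\top}\vec u,k}$ via the identity $\varphi(f(w))=A\varphi(w)$ and the adjoint relation $\langle A\vec x,\vec u\rangle=\langle \vec x,A^{\top}\vec u\rangle$, and both handle the $Z_n$-disjointness claim by invertibility of $A$. Your treatment is in fact slightly more precise than the paper's in one respect: you correctly work with \emph{infixes} (as the definition of $D_{\vec u,k}$ requires) and explicitly note that $f(v)$ is an infix of $f(w)$ whenever $v$ is an infix of $w$, whereas the paper's proof speaks of prefixes.
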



\begin{proof}
  For the first statement, we prove that
  $f^{-1}(M_k)\subseteq M_k$ and
  $f^{-1}(D_{\vec u,k})\subseteq D_{A^\top \vec u,k}$, which clearly
  suffices. Note that if $w\in\Sigma_n^*$ satisfies
  $\varphi(w)\equiv \textbf{0}\bmod{k}$, then also
  $\varphi(f(w))=A \varphi(w)\equiv \textbf{0}\bmod{k}$. This implies
  $f^{-1}(M_k)\subseteq M_k$.  For the second inclusion, suppose
  $w\in\Sigma_n^*$ satisfies $f(w)\in D_{\vec u,k}$ and let $v$ be a prefix
  of of $w$.  Then $f(v)$ is a prefix of $f(w)$ and thus
  \begin{align*}
    \langle \varphi(v),A^\top \vec u\rangle &= \varphi(v)^\top A^\top \vec u = (A\varphi(v))^\top \vec u \\
    &=\langle A\varphi(v),\vec u\rangle=\langle \varphi(f(v)), \vec u\rangle.
  \end{align*}
  In particular, we have
  $\langle\varphi(v),A^\top\vec u\rangle=\langle \varphi(f(v)),\vec
  u\rangle\ge -k$ and
  $\langle\varphi(w),A^\top \vec u\rangle=\langle\varphi(f(w)),\vec
  u\rangle>0$, which implies $w\in D_{A^\top \vec u,k}$.

  For the second statement, note that $A$ is invertible, meaning
  $\varphi(f(w))=A\varphi(w)$ vanishes if and only if $\varphi(w)$
  vanishes.
  \qed
\end{proof}

For the second step of our transformation (intersection), we observe
the following.  Since the walks in $S_{V,p}$ stay close to $V$, there
is a finite set $F\subseteq\Z^{n-m}$ of possible difference vectors
between a point $\varphi(v)$ reached by a prefix $v$ of a word in
$S_{V,p}$ and the point closest to $\varphi(v)$ in $V$. Therefore, the
set $R_{V,p}$ of walks in $S_{V,p}$ that also arrive in $V$ is
regular: One can maintain the current distance vector in the state.
To make this formal, let $\bar{\pi}_j\colon\Q^n\to\Q^j$ denote the
projection on the last $j$ coordinates,
$\bar{\pi}_j(v_1,\ldots,v_n)=(v_{n-j+1},\ldots,v_n)$.  Then we have
$d(\vec v, V)=\|\bar{\pi}_{n-m}(\vec v)\|$ for every $\vec v\in\Q^n$.
If $v$ is a prefix of $w\in S_{V,p}$, then $d(\varphi(v),V)\le p$
implies $\|\bar{\pi}_{n-m}(\varphi(v))\|\le p$ and hence there is a
finite set $F\subseteq\Z^{n-m}$ such that
$\bar{\pi}_{n-m}(\varphi(v))\in F$ for every prefix $v$ of some
$w\in S_{V,p}$.  Thus, the set
$R_{V,p}=\{w\in S_{V,p} \mid \varphi(w)\in V \}$ is regular. The
second step of our transformation is to intersect with $R_{V,p}$.
\begin{restatable}{lemma}{subspaceTransIntersection}\label{subspace-trans-intersection}
  Let $L\subseteq S_{V,p}$. If $L\cap R_{V,p}$ is geometrically separable, then so is $L$.
  We have $L\cap Z_n=\emptyset$ if and only if $(L\cap R_{V,p})\cap Z_n=\emptyset$.
\end{restatable}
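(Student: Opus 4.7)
The plan is to handle the two statements separately, exploiting that $L$ already sits inside $S_{V,p}$.

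For the disjointness statement, I would show $L\cap Z_n = (L\cap R_{V,p})\cap Z_n$. If $w\in L\cap Z_n$, then $\varphi(w)=\textbf{0}\in V$, and since $L\subseteq S_{V,p}$ we have $w\in S_{V,p}$; by the definition $R_{V,p}=\{w\in S_{V,p}\mid \varphi(w)\in V\}$ this places $w$ in $R_{V,p}$. Thus $L\cap Z_n\subseteq (L\cap R_{V,p})\cap Z_n$, and the reverse inclusion is obvious.

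For the geometric-separability statement, I will decompose $L=(L\cap R_{V,p})\cup(L\setminus R_{V,p})$ and argue that the second piece is already covered by a bounded number of basic separators of the form $D_{\vec u,k}$, independently of $L$. Concretely, any $w\in L\setminus R_{V,p}$ lies in $S_{V,p}$ but satisfies $\varphi(w)\notin V$, so some coordinate $i\in[m+1,n]$ has $\langle\varphi(w),\vec e_i\rangle\ne 0$. Using the identity $d(\vec v,V)=\|\bar{\pi}_{n-m}(\vec v)\|$ from the proof of \cref{halfspace}, every prefix $v$ of $w$ satisfies $|\langle\varphi(v),\vec e_i\rangle|\le p$. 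Writing an arbitrary infix as $\varphi(v)=\varphi(w_1 v)-\varphi(w_1)$ with $w_1 v,w_1$ both prefixes of $w$, this yields $|\langle\varphi(v),\vec e_i\rangle|\le 2p$ and in particular $\langle\varphi(v),\vec e_i\rangle\ge -2p$. Hence $w\in D_{\vec e_i,2p}$.

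Putting these two observations together, $L\setminus R_{V,p}\subseteq \bigcup_{i=m+1}^{n} D_{\vec e_i,2p}$. Combining this finite union of basic separators with a finite union of $M_k$'s and $D_{\vec u,k}$'s witnessing the geometric separability of $L\cap R_{V,p}$ produces a finite union of basic separators containing $L$, which is what geometric separability of $L$ requires.

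The only subtle point is the shift from the prefix bound supplied by $S_{V,p}$ to the infix bound demanded by $D_{\vec u,k}$; as shown above this shift costs only a factor of two, so no real obstacle arises and the lemma essentially follows by unwinding the definitions.
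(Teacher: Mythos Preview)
Your proof is correct, but it takes a genuinely different route from the paper for the geometric-separability half. The paper covers the complementary part $\hat R_{V,p}=S_{V,p}\setminus R_{V,p}$ by a single modular separator: since $\bar\pi_{n-m}(\varphi(w))$ lies in the finite set $F\subseteq\Z^{n-m}$ of bounded norm for every $w\in S_{V,p}$, choosing $k$ larger than every such norm forces any $w\in\hat R_{V,p}$ to satisfy $\varphi(w)\not\equiv\textbf{0}\bmod k$, hence $\hat R_{V,p}\subseteq M_k$. You instead cover $L\setminus R_{V,p}$ by the drift separators $\bigcup_{i=m+1}^n D_{\vec e_i,2p}$, exploiting that the prefix bound from $S_{V,p}$ yields an infix bound of $2p$ in each of the last $n-m$ coordinates. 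Both arguments are short and sound; the paper's $M_k$ approach needs only one separator and no prefix-to-infix doubling, while yours avoids introducing any modulus and reuses the same drift primitives that drive the rest of the section. For the disjointness statement, your direct observation that $Z_n\cap S_{V,p}\subseteq R_{V,p}$ (since $\textbf{0}\in V$) is in fact cleaner than the paper's argument, which derives it from the inclusion $\hat R_{V,p}\subseteq M_k$.
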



\begin{proof}
  Suppose $L\cap R_{V,p}$ is geometrically separable.  Let
  $\hat{R}_{V,p}=\{w\in S_{V,p} \mid \varphi(w)\notin V\}$. Then
  $S_{V,p}=\hat{R}_{V,p}\cup R_{V,p}$. It suffices to show that
  $\hat{R}_{V,p}\subseteq M_k$ for some $k\in\N$, because then
  \begin{equation} L=(L\cap \hat{R}_{V,p})\cup (L\cap R_{V,p})\subseteq M_k\cup (L\cap R_{V,p}) \label{decomp-zero-nonzero} \end{equation}
  and $L\cap R_{V,p}$ being geometrically separable implies that $L$
  is geometrically separable as well.

  To show that $\hat{R}_{V,p}\subseteq M_k$, let $F\subseteq\Z^{n-m}$
  be a finite set such that $\bar{\pi}_{n-m}(\varphi(v))\in F$ for
  every prefix $v$ of a word $w\in S_{V,p}$. Moreover, choose $k\in\N$
  so that $k>\|\vec v\|$ for every $\vec v\in F$. We claim that then
  $\hat{R}_{V,p}\subseteq M_k$.  To this end, suppose
  $w\in \hat{R}_{V,p}$.  Then $d(\varphi(w),V)\ne 0$ and hence
  $\bar{\pi}_{n-m}(\varphi(w))\in F\setminus \{\textbf{0}\}$.  In particular,
  we have $\varphi(w)\not\equiv \textbf{0}\bmod{k}$ and thus $w\in M_k$. This
  proves $\hat{R}_{V,p}\subseteq M_k$.

  For the second statement, note that $L\cap Z_n=\emptyset$ clearly
  implies $(L\cap R_{V,p})\cap Z_n=\emptyset$. Conversely, if
  $(L\cap R_{V,p})\cap Z_n=\emptyset$, then \cref{decomp-zero-nonzero}
  entails $L\cap Z_n\subseteq (L\cap R_{V,p})\cap Z_n=\emptyset$
  because $M_k\cap Z_n=\emptyset$.
  \qed
\end{proof}

In our third step, we project onto the first $m$ coordinates: We
define $\pi_m\colon \Sigma_n^*\to\Sigma_m^*$ as the morphism with
$\pi_m(a_i)=a_i$, $\pi_m(\bar{a}_i)=\bar{a}_i$ for $i\in[1,m]$, and
$\pi_m(a_i)=\pi_m(\bar{a}_i)=\varepsilon$ for $i\in[m+1,n]$. In other words,
$\pi_m$ deletes the letters $a_i$ and $\bar{a}_i$ for $i\in[m+1,n]\}$.


\begin{restatable}{lemma}{subspaceTransProjection}\label{subspace-trans-projection}
  Let $L\subseteq R_{V,p}$. If $\pi_m(L)$ is geometrically separable,
  then so is $L$.  Moreover, $L\cap Z_n=\emptyset$ if and only if
  $\pi_m(L)\cap Z_n=\emptyset$.
\end{restatable}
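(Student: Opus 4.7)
My plan is to exploit two elementary facts about words in $R_{V,p}$: their $\varphi$-images already have zeros in the last $n-m$ coordinates, and the projection $\pi_m$ interacts cleanly with $\varphi$ in that $\varphi_m(\pi_m(w))$ equals the tuple of the first $m$ coordinates of $\varphi_n(w)$. Both assertions of the lemma then reduce to short definition chases.

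For the second statement, I will note that for every $w \in R_{V,p}$, $\varphi(w) \in V$ forces the last $n-m$ coordinates of $\varphi_n(w)$ to vanish. Hence $\varphi_n(w) = \textbf{0}$ is equivalent to $\varphi_m(\pi_m(w)) = \textbf{0}$, i.e., $w \in Z_n$ iff $\pi_m(w) \in Z_m$. This immediately yields $L \cap Z_n = \emptyset$ iff $\pi_m(L) \cap Z_m = \emptyset$ (the right-hand $Z_n$ in the statement is to be read as $Z_m$ since $\pi_m(L) \subseteq \Sigma_m^*$).

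For the first statement, my plan is to show that pulling back a basic separator over $\Sigma_m^*$ through $\pi_m$ is contained in a basic separator over $\Sigma_n^*$. Concretely, I will prove $\pi_m^{-1}(M_k) \subseteq M_k$, because non-vanishing of the first $m$ coordinates of $\varphi_n(w)$ modulo $k$ certainly implies non-vanishing of the full vector modulo $k$; and $\pi_m^{-1}(D_{\vec u, k}) \subseteq D_{\vec u', k}$, where $\vec u' = (\vec u, \textbf{0}) \in \Z^n$ is the extension of $\vec u$ by zeros. For the latter, for every factorization $w = w_1 v w_2$, the word $\pi_m(v)$ is an infix of $\pi_m(w)$ and we have $\langle \varphi_n(v), \vec u' \rangle = \langle \varphi_m(\pi_m(v)), \vec u \rangle$, so the infix and non-vanishing conditions defining $D_{\vec u, k}$ on $\pi_m(w)$ transfer verbatim to the defining conditions of $D_{\vec u', k}$ on $w$.

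Combining these inclusions, if $\pi_m(L) \subseteq \bigcup_{i} M_{k_i} \cup \bigcup_{j} D_{\vec u_j, k_j}$, then $L \subseteq \pi_m^{-1}(\pi_m(L))$ is contained in the finite union $\bigcup_{i} M_{k_i} \cup \bigcup_{j} D_{(\vec u_j, \textbf{0}), k_j}$ over $\Sigma_n^*$, which witnesses geometric separability of $L$. I expect the argument to be almost entirely routine; no single step looks difficult, and the only ``obstacle'' is keeping careful track of which ambient dimension each $\varphi$, $M_k$, or $D_{\vec u,k}$ lives in.
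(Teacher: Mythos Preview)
Your proposal is correct and follows essentially the same approach as the paper: both arguments pad $\vec u\in\Z^m$ to $\vec u'=(\vec u,\textbf{0})\in\Z^n$ and use the identity $\langle\varphi_n(v),\vec u'\rangle=\langle\varphi_m(\pi_m(v)),\vec u\rangle$ to transfer the defining conditions of $D_{\vec u,k}$, and both handle $M_k$ via the observation that a nonzero residue in the first $m$ coordinates forces a nonzero residue overall. Your formulation via the inclusions $\pi_m^{-1}(M_k)\subseteq M_k$ and $\pi_m^{-1}(D_{\vec u,k})\subseteq D_{\vec u',k}$ is slightly cleaner in that it does not invoke the hypothesis $L\subseteq R_{V,p}$ for the first statement (the paper restricts to $w\in R_{V,p}$ there, though it is not actually needed), but the substance is identical.
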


\begin{proof}
  It suffices to show that for $w\in R_{V,p}$, two implications hold:
  (i)~if $\pi_m(w)\in M_k$ for some $k\in\N$, then $w\in M_k$ and
  (ii)~if $\pi_m(w)\in D_{\vec u,k}$ for some $\vec u\in\Z^m$ and
  $k\in\N$, then $w\in D_{\vec u',k}$ for some $\vec u'\in\Z^n$.

  Suppose $w\in R_{V,p}$ and $\pi_m(w)\in M_k$. Since $w\in R_{V,p}$,
  the last $n-m$ components of $\varphi(w)$ are zero. Thus, we have
  $\varphi(w)\equiv \textbf{0}\bmod{k}$ if and only if
  $\varphi(\pi_m(w))\equiv \textbf{0}\bmod{k}$. This implies $w\in M_k$.

  Now suppose $w\in R_{V,p}$ with $\pi_m(w)\in D_{\vec u,k}$ for some
  $\vec u\in\Z^m$ and $k\in\N$. Let $\vec u=(u_1,\ldots,u_m)$ and
  define $\vec u'=(u_1,\ldots,u_m,0,\ldots,0)\in\Z^n$.  Then clearly
  \[ \langle \varphi(v),\vec u'\rangle=\langle \varphi(\pi_m(v)), \vec
    u\rangle \]
  for every word $v\in\Sigma^*_n$.  In particular, we have
  $w\in D_{\vec u',k}$.
  \qed
\end{proof}

We are now prepared to define our transformation: Let
$T_{U,\ell}\subseteq\Sigma_n^*\times\Sigma_m^*$ be the transduction
with $T_{U,\ell}L=\pi_m(f(L)\cap R_{V,p})$. Then
\cref{subspace-trans-coordinates,subspace-trans-intersection,subspace-trans-projection} imply:
\begin{restatable}{proposition}{dimensionReduction}\label{dimension-reduction}
  Let $L\subseteq S_{U,\ell}$. If $T_{U,\ell}L\subseteq \Sigma_m^*$ is
  geometrically separable, then so is $L$.  Also,
  $L\cap Z_n=\emptyset$ if and only if
  $(T_{U,\ell}L)\cap Z_m=\emptyset$. Thus, $\sep{L}{Z_n}$ if
  and only if $\sep{(T_{U,\ell}L)}{Z_m}$.
\end{restatable}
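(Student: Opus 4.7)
The plan is to compose the three subsidiary lemmas \cref{subspace-trans-coordinates}, \cref{subspace-trans-intersection}, and \cref{subspace-trans-projection} following the definition $T_{U,\ell}L = \pi_m(f(L) \cap R_{V,p})$. The hypothesis $L \subseteq S_{U,\ell}$ together with \cref{computability-shift} supplies the inclusion $f(L) \subseteq S_{V,p}$ needed to apply \cref{subspace-trans-intersection}, and the inclusion $f(L) \cap R_{V,p} \subseteq R_{V,p}$ is immediate and licenses \cref{subspace-trans-projection}.

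For the first claim, starting from $T_{U,\ell}L = \pi_m(f(L) \cap R_{V,p})$ being geometrically separable, I would apply \cref{subspace-trans-projection} to conclude that $f(L) \cap R_{V,p}$ is geometrically separable, then \cref{subspace-trans-intersection} (using $f(L) \subseteq S_{V,p}$) to conclude that $f(L)$ is geometrically separable, and finally \cref{subspace-trans-coordinates} to obtain that $L$ itself is geometrically separable. The second claim follows analogously by chaining the disjointness equivalences of the three lemmas; the only observation needed is that $\pi_m(X) \cap Z_n = \pi_m(X) \cap Z_m$ for $X \subseteq \Sigma_n^*$, since $\pi_m(X) \subseteq \Sigma_m^*$ and $Z_n \cap \Sigma_m^* = Z_m$.

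For the third claim, I would appeal to \cref{movetrans}. Treating $T_{U,\ell}$ as a rational transduction and using symmetry of separability, \cref{movetrans} gives $\sep{T_{U,\ell}L}{Z_m}$ iff $\sep{L}{T_{U,\ell}^{-1}Z_m}$. A direct unfolding, exploiting that $f$ is induced by an invertible rational matrix $A$, identifies $T_{U,\ell}^{-1}Z_m$ with $Z_n \cap f^{-1}(S_{V,p})$. Since $L \subseteq S_{U,\ell} \subseteq f^{-1}(S_{V,p})$ by \cref{computability-shift} and $f^{-1}(S_{V,p})$ is regular (being a morphism preimage of the clearly regular $S_{V,p}$), intersecting any regular separator with $f^{-1}(S_{V,p})$ yields $\sep{L}{Z_n}$ iff $\sep{L}{Z_n \cap f^{-1}(S_{V,p})}$, completing the chain of equivalences. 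The main obstacle I anticipate is the identification $T_{U,\ell}^{-1}Z_m = Z_n \cap f^{-1}(S_{V,p})$: one must verify that $f(w) \in R_{V,p}$ together with $\pi_m(f(w)) \in Z_m$ forces $\varphi(f(w))$ to vanish in both the first $m$ and the last $n-m$ coordinates, so that $A\varphi(w) = 0$ and hence $\varphi(w) = 0$ by invertibility of $A$; once this identification is in hand, the rest of the argument is routine.
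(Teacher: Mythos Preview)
Your treatment of the first two claims is exactly the paper's: chain \cref{computability-shift}, \cref{subspace-trans-coordinates}, \cref{subspace-trans-intersection}, and \cref{subspace-trans-projection} along the decomposition $T_{U,\ell}L=\pi_m(f(L)\cap R_{V,p})$.

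For the third claim your route differs from the paper's. The paper argues the forward direction by pushing a regular separator $R$ for $(L,Z_n)$ through $T_{U,\ell}$ and invoking the disjointness part to get $T_{U,\ell}R\cap Z_m=\emptyset$; for the converse it appeals to \cref{geometric-separability} to pass from $\sep{T_{U,\ell}L}{Z_m}$ to geometric separability of $T_{U,\ell}L$, and then uses the first part of the present proposition. You instead use \cref{movetrans} and the identification $T_{U,\ell}^{-1}Z_m = Z_n\cap f^{-1}(S_{V,p})$, together with regularity of $f^{-1}(S_{V,p})$ and $L\subseteq f^{-1}(S_{V,p})$, to get $\sep{L}{T_{U,\ell}^{-1}Z_m}\Leftrightarrow\sep{L}{Z_n}$. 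Your identification is correct: $f(w)\in R_{V,p}$ forces the last $n-m$ coordinates of $A\varphi(w)$ to vanish and $\pi_m(f(w))\in Z_m$ forces the first $m$ to vanish, whence $\varphi(w)=0$ by invertibility of $A$; the reverse inclusion is immediate since $w\in Z_n$ gives $A\varphi(w)=0$. And $S_{V,p}$ is regular (only finitely many values of $\bar{\pi}_{n-m}(\varphi(v))$ are possible for prefixes $v$), so $f^{-1}(S_{V,p})$ is too. Your argument is a bit longer computationally but has the virtue of being self-contained: the paper's converse direction invokes \cref{geometric-separability}, whose proof in turn relies on the first two parts of this very proposition, so although there is no genuine circularity (only parts one and two are used there), your route sidesteps the issue entirely.
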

\begin{proof}
  With \cref{computability-shift}, the first two statements of
  \cref{dimension-reduction} follow directly from
  \cref{subspace-trans-coordinates,subspace-trans-intersection,subspace-trans-projection}.
  Let us prove the conclusion in the second statement.

  If $\sep{L}{Z_n}$ with a regular $R$ with $L\subseteq R$ and
  $R\cap Z_n=\emptyset$, then by \cref{dimension-reduction}, we have
  $T_{U,\ell}R\cap Z_m=\emptyset$. Hence, $T_{U,\ell}R$ separates
  $T_{U,\ell}L$ and $Z_m$.  Conversely, if $\sep{T_{U,\ell}L}{Z_m}$,
  then by \cref{geometric-separability}, the language $T_{U,\ell}L$ is
  geometrically separable. According to \cref{dimension-reduction},
  that implies that $L$ is geometrically separable and in particular
  $\sep{L}{Z_n}$.
  \qed
\end{proof}

Let us now prove \cref{geometric-disjointness}.
  Suppose $R\subseteq\Sigma_n^*$ and
  $R\cap Z_n=\emptyset$. We show by induction on the
  dimension $n$ that then, $R$ is included in a
  finite union of sets of the form $M_k$ and $D_{\vec u,k}$.
  Let $R=\Lang{\cA}$ for an automaton $\cA$. Since $\cA$ can be
  decomposed into a finite union of linear automata, it suffices to
  prove the claim in the case that $\cA$ is linear. If
  $\cone(\cA)=\Q^n$, then \cref{no-halfspace} tells us that
  $\Lang{\cA}\subseteq M_k$ for some $k\in\N$.  If $\cone(\cA)$ is
  contained in some half-space, then according to \cref{halfspace}, we
  have $R\subseteq D_{\vec u,k}\cup S_{U,\ell}$ for some
  $\vec u\in \Q^n\setminus\{\textbf{0}\}$, $k,\ell\in\N$, and strict subspace
  $U\subseteq\Q^n$. This implies that the regular language
  $R\setminus D_{\vec u,k}$ is included in $S_{U,\ell}$.  We may
  therefore apply \cref{dimension-reduction}, which yields
  $T_{U,\ell}(R\setminus D_{\vec u,k})\cap Z_m=\emptyset$. Since
  $T_{U,\ell}(R\cap S_{U,\ell})\subseteq\Sigma_m^*$ with $m=\dim U<n$,
  induction tells us that $T_{U,\ell} (R\setminus D_{\vec u,k})$ is
  geometrically separable and hence, by \cref{dimension-reduction}, $R\setminus D_{\vec u,k}$ is geometrically
  separable. Since
  $R\subseteq D_{\vec u,k}\cup (R\setminus D_{\vec u,k})$, $R$ is geometrically separable.

\subsection{The decision procedure}\label{vass}
In this \lcnamecref{vass}, we apply \cref{geometric-separability} to
prove \cref{result-vass-zvass}.

Before we prove \cref{result-vass-zvass} let us explain why
a particular straightforward approach does not work.
\Cref{geometric-separability} tells us that
in order to decide whether $\sep{L}{Z_n}$ for $L\subseteq\Sigma_n^*$,
it suffices to check whether there are $k\in\N$,
$\ell_1,\ldots,\ell_m\in\N$, and vectors
$\vec u_1,\ldots,\vec u_m\in\Z^n$ such that
$L\subseteq M_k\cup D_{\vec u_1,\ell_1}\cup\cdots\cup D_{\vec
  u_m,\ell_m}$.
It is tempting to conjecture that there is a finite collection of direction vectors
$F\subseteq\Z^n$ (such as a basis together with negations)
so that for a given language $L$, such an inclusion holds only if it
holds with some $\vec u_1,\ldots,\vec u_m\in F$.
In that case we would
only need to consider scalar products of words in $L$ with vectors in
$F$ and thus reformulate the problem over sections of reachability
sets of VASS. However, this is not the case. For
$\vec u,\vec v\in\Q^n$, we write $\vec u\sim\vec v$ if
$\Q_+\vec u=\Q_+\vec v$. Since $\sim$ has infinitely many equivalence
classes and every class intersects $\Z^n$, the following shows that
there is no  fixed set of directions.

\begin{restatable}{proposition}{noFixedDirections}\label{no-fixed-directions}
  For each $\vec u\in\Z^n$, there is a $k_0\in\N$ such that for
  $k\ge k_0$, the following holds.  For every
  $\ell,\ell_1\ldots,\ell_m\ge 1$, $\vec u_1,\ldots,\vec u_m\in\Z^n$
  with $\vec u_i\not\sim\vec u$ for $i\in[1,n]$, we have
  $D_{\vec u,k}\not\subseteq M_{\ell}\cup D_{\vec u_1,\ell_1}\cup\cdots
  D_{\vec u_m,\ell_m}$.
\end{restatable}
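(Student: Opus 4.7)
The plan is to construct, for every tuple of parameters $\ell,\ell_1,\ldots,\ell_m$ and vectors $\vec u_1,\ldots,\vec u_m\in\Z^n$ with $\vec u_i\not\sim\vec u$, an explicit word $w\in D_{\vec u,k}$ that escapes $M_\ell$ and every $D_{\vec u_i,\ell_i}$. Since $\sim$ identifies vectors with the same positive ray, the hypothesis $\vec u_i\not\sim\vec u$ partitions the indices into three classes: $I_0=\{i:\vec u_i=\textbf{0}\}$ (harmless, because $D_{\textbf{0},\ell_i}=\emptyset$), $I_-=\{i:\vec u_i=c_i\vec u\text{ with }c_i\in\Z_{<0}\}$, and $I_{\neq}=\{i:\vec u_i\notin\Q\vec u\}$.

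The key uniformity step is to fix, \emph{depending only on $\vec u$}, a $\Z$-basis $\vec b_1,\ldots,\vec b_{n-1}$ of the lattice $\vec u^{\perp}\cap\Z^n$, together with fixed short words $w_+,\bar w_+$ realizing $\pm\vec u$ and $w_{\pm\vec b_j},\bar w_{\pm\vec b_j}$ realizing $\pm\vec b_j$ under $\varphi$. For each $i\in I_{\neq}$, the vector $\vec u_i$ is not in $\Q\vec u=(\vec u^{\perp})^{\perp}$, so at least one integer $\langle\vec b_j,\vec u_i\rangle$ is nonzero; choosing such a $j$ and the appropriate sign yields $\vec t_i\in\{\pm\vec b_1,\ldots,\pm\vec b_{n-1}\}$ with $\langle\vec t_i,\vec u_i\rangle\le -1$ and $\vec t_i\perp\vec u$.

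The candidate word is
\[
  w \;=\; w_+^{P}\cdot\prod_{i\in I_{\neq}}\bigl(w_{\vec t_i}^{Q_i}\,\bar w_{\vec t_i}^{Q_i}\bigr),
\]
with $Q_i=\ell_i+1$ and $P$ a multiple of $\ell$ larger than $\max_i\ell_i$. Each detour contributes $\textbf{0}$ to $\varphi$ (since $\varphi(w_{\vec t_i}\bar w_{\vec t_i})=\textbf{0}$), so $\varphi(w)=P\vec u\equiv\textbf{0}\bmod\ell$ and hence $w\notin M_\ell$. For $i\in I_-$, the prefix $w_+^P$ is an infix satisfying $\langle P\vec u,c_i\vec u\rangle\le -P<-\ell_i$. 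For $i\in I_{\neq}$, the infix $w_{\vec t_i}^{Q_i}$ satisfies $\langle Q_i\vec t_i,\vec u_i\rangle\le -Q_i<-\ell_i$. So $w$ violates every $D_{\vec u_i,\ell_i}$.

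The genuine content is verifying $w\in D_{\vec u,k}$ for every $k\ge k_0(\vec u)$. Tracking $\sigma(j):=\langle\varphi(w[0{:}j]),\vec u\rangle$, the buildup phase $w_+^{P}$ drives $\sigma$ monotonically to $P\|\vec u\|^2$ modulo fluctuations of order $|w_+|\cdot|\vec u|$, while each detour, because $\vec t_i\perp\vec u$, keeps $\sigma$ inside an interval of radius at most $|w_{\vec t_i}|\cdot|\vec u|$ around $P\|\vec u\|^2$. A short case analysis according to whether $a\le b$ lie both in $w_+^{P}$, both in one detour, or straddle the boundary bounds the drop $\sigma(a)-\sigma(b)$ by a constant $k_0$. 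The single delicate point, and the main obstacle to guard against, is ensuring that this constant depends only on $\vec u$ and not on the potentially enormous $\vec u_i$'s. This is exactly where the restriction $\vec t_i\in\{\pm\vec b_1,\ldots,\pm\vec b_{n-1}\}$ pays off: the detour words come from a finite pool fixed by $\vec u$, so their lengths (and hence the fluctuation bounds $|w_{\vec t_i}|\cdot|\vec u|$) are uniformly bounded independently of the input $\vec u_i$'s. Taking $k_0$ to be the resulting bound finishes the proof.
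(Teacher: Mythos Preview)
Your proposal is correct and follows essentially the same strategy as the paper: fix once and for all a finite pool of directions depending only on $\vec u$ (the paper uses an orthogonal integer basis $\vec b_1=\vec u,\vec b_2,\ldots,\vec b_n$ and the set $\{\vec b_1,\pm\vec b_2,\ldots,\pm\vec b_n\}$; you use $\vec u$ together with a $\Z$-basis of $\vec u^\perp\cap\Z^n$), then assemble a word that takes long excursions in those pool directions so that every $\vec u_i\not\sim\vec u$ witnesses a drop below $-\ell_i$, while the bounded word-lengths in the fixed pool give the uniform $k_0$. One small slip to fix: in your case $I_-$ the scalar $c_i$ with $\vec u_i=c_i\vec u$ is in $\Q_{<0}$, not in $\Z_{<0}$, but your inequality $\langle P\vec u,c_i\vec u\rangle\le -P$ survives since $|c_i|\,\|\vec u\|^2=\|\vec u_i\|\cdot\|\vec u\|\ge 1$ for nonzero integer vectors.
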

\begin{proof}
  The idea of the proof is as follows. We construct a word $w$
  corresponding to a walk in $\Z^n$, which traverses long distances in
  many directions orthogonal to $\vec u$.  That way, $w$ cannot belong
  to any of the languages $D_{\vec u_i, \ell_i}$ for $i\in[1,m]$.
  Moreover, we carefully design the construction such that
  $w \not\in M_\ell$.  Furthermore, the walk never moves far in the
  direction of $-\vec u$ because that would imply
  $w \not\in D_{\vec u, k}$.
  
  We begin by choosing $k_0\in\N$. We extend the vector $\vec u$ to an
  orthogonal basis $\vec b_1,\ldots,\vec b_n\in\Z^n$ of $\Q^n$,
  meaning that $\vec b_1=\vec u$ and
  $\langle \vec b_i,\vec b_j\rangle=0$ if $i\ne j$. Note that since
  $\vec b_i\ne 0$, we then have
  $\langle \vec b_i,\vec b_i\rangle=\|\vec b_i\|^2\ne 0$.  In
  particular, this means for every
  $\vec v\in P=\{\vec b_1,\vec b_2,-\vec b_2,\ldots, \vec b_n,-\vec
  b_n\}$, we have $\langle \vec v,\vec u\rangle\ge 0$. Note that
  except for $\vec b_1$, the set $P$ contains every vector $\vec b_i$
  positively and negatively.
  
  For each $i\in[1,2n-1]$, we pick a word $v_i\in\Sigma_n^*$ so that
  $\{\varphi(v_1),\ldots,\varphi(v_{2n-1})\}=P$.  Note that then, we
  have $\langle \varphi(v_i),\vec u\rangle\ge 0$ for every
  $i\in[1,2n-1]$.  Choose $k_0\in\N$ so that $k_0\ge 2|v_i|$ for each
  $i\in[1,2n-1]$.

  To show $D_{\vec u,k}\not\subseteq M_\ell\cup D_{\vec u_1,\ell_1}\cup\cdots\cup D_{\vec u_m,\ell_m}$, suppose $k\ge k_0$.
  We shall construct a word $w\in D_{\vec u,k}$ so that
  $w\notin M_\ell$ and $w\notin D_{\vec u_i,\ell_i}$ for every
  $i\in[1,m]$.  Pick $s\in\N$ with $s>\ell_i$ for $i\in[1,m]$. Write
  $\vec u=(x_1,\ldots,x_n)$ and let $u=a_1^{x_1}\cdots a_n^{x_n}$.
  Here, in slight abuse of notation, if $x_i<0$, we mean
  $\bar{a}_i^{|x_i|}$ instead of $a_i^{x_i}$.  Then clearly, we have
  $\varphi(u)=\vec u$ and every infix $z$ of $u$ satisfies
  $\langle \varphi(z),\vec u\rangle>0$.
  Let
  \[w=u^\ell v_1^{\ell\cdot s}v_2^{\ell\cdot s}\cdots v_{2n-1}^{\ell\cdot
      s}. \] Let us first show that $w\in D_{\vec u,k}$. Since
  $\langle \varphi(v_i),\vec u\rangle=\langle\vec b_i,\vec
  u\rangle=0$, we have
  $\langle \varphi(w),\vec u\rangle=\ell\cdot \langle \varphi(u),\vec
  u\rangle=\ell\cdot \|u\|^2>0$. Let $z$ be an infix of $w$.  Since
  $\langle \varphi(y),\vec u\rangle>0$ for every infix $y$ of $u$ and
  also $\langle \varphi(v_i),\vec u\rangle=0$ for $i\in[1,2n-1]$, we have
  $\langle \varphi(z),\vec u\rangle\ge -k_0\ge -k$. Thus, we have $w\in D_{\vec u,k}$.

  Finally, we prove that $w\notin M_\ell$ and
  $w\notin D_{\vec u_i,\ell_i}$ for $i\in[1,m]$.  First, note that
  $\varphi(w)\equiv \textbf{0}\bmod{\ell}$, so that $w\notin M_\ell$.  Let us
  now show that for $i\in[1,m]$, we have
  $w\notin D_{\vec u_i,\ell_i}$.  Since $\vec b_1,\ldots,\vec b_n$ is
  a basis of $\Q^n$, we can write
  $\vec u_i=\alpha_1\vec b_1+\cdots +\alpha_n\vec b_n$ for some
  $\alpha_1,\ldots,\alpha_n\in \Q$. Since the basis $\vec b_1,\ldots,\vec b_n$ is an orthogonal basis,
  we have
  \begin{align*}
    \langle \vec u_i, \vec b_j\rangle &=\langle \alpha_1\vec b_1+\cdots+\alpha_n\vec b_n,\vec b_j\rangle \\
                                      &=\alpha_1\langle \vec b_1,\vec b_j\rangle+\cdots+\alpha_n\langle\vec b_n,\vec b_j\rangle \\
                                      &=\alpha_j\langle \vec b_j,\vec b_j\rangle \\
                                      &=\alpha_j\cdot \|\vec b_j\|^2
  \end{align*}
  and thus $\langle \vec u_i,\vec b_j\rangle>0$ if and only if $\alpha_j>0$.
  
  Observe that now either $\alpha_1<0$ or $\alpha_j\ne 0$ for some
  $j\in[2,n]$: Otherwise, we would have
  $\vec u_i=\alpha_1\vec b_1=\alpha_1\vec u$ and thus
  $\Q_+\vec u_i=\Q_+\vec u$. Therefore, there is a vector
  $\vec p\in P$ with $\langle \vec u_i,\vec p\rangle < 0$.  Let
  $\vec p=\varphi(v_p)$ with $p\in[1,2n-1]$. Hence, the infix
  $v_p^{\ell\cdot s}$ of $w$ satisfies
  $\langle\varphi(v_p^{\ell\cdot s}),\vec u_i\rangle<-\ell s<-\ell_i$
  and hence $w\notin D_{\vec u_i,\ell_i}$.
  \qed
\end{proof}

\paragraph{Outline of the algorithm}
We now turn to the proof of \cref{result-vass-zvass}. According to
\cref{geometric-separability}, we have to decide whether a given VASS
language $L\subseteq\Sigma^*_n$ satisfies
$L\subseteq M_k\cup D_{\vec u_1,\ell_1}\cup\cdots\cup D_{\vec
  u_m,\ell_m}$ for some $k\in\N$, $\vec u_1,\ldots,\vec u_m\in\Z^n$
and $\ell_1,\ldots,\ell_m\in\N$. Our algorithm employs the KLMST
decomposition (so named by Leroux
and Schmitz~\cite{DBLP:conf/lics/LerouxS15} after its inventors) used by Sacerdote and
Tenney~\cite{sacerdote1977decidability},
Mayr~\cite{DBLP:conf/stoc/Mayr81},
Kosaraju~\cite{DBLP:conf/stoc/Kosaraju82}, and
Lambert~\cite{DBLP:journals/tcs/Lambert92} and recently cast in terms
of ideal decompositions by Leroux and
Schmitz~\cite{DBLP:conf/lics/LerouxS15}. The decomposition yields VASS
languages $L_1,\ldots,L_p$ with $L=L_1\cup\cdots\cup L_p$, together
with finite automata $\cA_1,\ldots,\cA_p$ whose languages
overapproximate $L_1,\ldots,L_p$, respectively. We show that the
$\cA_i$ are not only overapproximations, but are what we call
``modular envelopes'' (\cref{construct-modular-envelopes}). This
allows us to proceed similarly to the proof of
\cref{geometric-disjointness}. It suffices to check regular
separability for each $L_i$. If $\cone(\cA_i)=\Q^n$, then it suffices
to check whether $\sep{\Perms(L_i)}{Z_n}$ using
\cref{separability-commutative} (see \cref{lem:all-directions}).  If
$\cone(\cA_i)$ is contained in some halfspace, then we transform $L_i$
into a VASS language $\hat{L}_i\subseteq\Sigma_m^*$.  Here,
$\hat{L}_i$ essentially captures the walks of $L_i$ that stay close to
a strict linear subspace $U\subseteq\Q^n$ with $m=\dim U<n$.  Since
$m<n$, we can then apply our algorithm recursively to $\hat{L}_i$.

We first explain the concept of modular envelopes. We then describe
the algorithm for regular separability and finally, we show how to
construct modular envelopes.

\paragraph{Modular envelopes}

For a finite automaton $\cA$ with input alphabet $\Sigma$, let
$\Loop(\cA)\subseteq\Sigma^*$ be the set of words that can be read on
a cycle in $\cA$.  Recall that $\Parikh(w)$ denotes the Parikh image
of $w$. We say that an automaton $\cA$ is a \emph{modular envelope}
for a language $L\subseteq\Sigma^*$ if (i)~$L\subseteq L(\cA)$ and
(ii)~for every selection $u_1,\ldots,u_m\in\Sigma^*$ of words from
$\Loop(\cA)$ and every $w\in L$ and every $k\in\N$, there is a word
$w'\in L$ so that each $u_j$ is an infix of $w'$ and
$\Parikh(w')\equiv\Parikh(w)\bmod{k}$, where the congruence is defined
component-wise.

In other words, $\cA$ describes a regular overapproximation that is
  small enough that we can find every selection of $\cA$'s loops as infixes in a word
  from $L$ 
  whose Parikh image is congruent modulo $k$ to a given word from
  $L$. Using the KLMST decomposition, we prove:
\begin{restatable}{theorem}{constructModularEnvelopes}\label{construct-modular-envelopes}
  Given a VASS language $L$, one can construct VASS languages
  $L_1,\ldots,L_p$, together with a modular envelope $\cA_i$ for each
  $L_i$ such that $L=L_1\cup\cdots\cup L_p$.
\end{restatable}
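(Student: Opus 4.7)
The plan is to invoke the KLMST decomposition for VASS, as recast in the ideal-based presentation of Leroux and Schmitz~\cite{DBLP:conf/lics/LerouxS15}. Applied to a VASS $V$ with $\Lang{V}=L$, the decomposition returns a finite family of ``perfect'' refined pieces $V_1,\ldots,V_p$. Each $V_i$ is a VASS whose control graph and transitions are obtained by restricting those of $V$, so that runs of $V_i$ are exactly certain runs of $V$, and every accepting run of $V$ appears in at least one piece. Taking $L_i=\Lang{V_i}$ then yields VASS languages with $L=L_1\cup\cdots\cup L_p$.

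For each $i$, I would define $\cA_i$ to be the finite-state skeleton of $V_i$, obtained by forgetting all counter updates while keeping the input-alphabet labels, the control graph, and the source and target states. This immediately gives $L_i\subseteq L(\cA_i)$, which is property~(i) of a modular envelope. Moreover, $\Loop(\cA_i)$ is exactly the set of labels of cycles in the control graph of $V_i$, and each such cycle is a (possibly counter-infeasible in isolation) cycle of $V_i$.

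The core of the proof is property~(ii), which I would establish by exploiting the pumpability inherent to perfect KLMST pieces. Given $w\in L_i$ with a witnessing run $\rho$ in $V_i$, and cycles $u_1,\ldots,u_m\in\Loop(\cA_i)$ with some modulus $k\in\N$, lift each $u_j$ to a cycle $\tau_j$ of the control graph of $V_i$ with the same Parikh image as $u_j$. The key property of perfect pieces is that, after sufficient multiplication, any nonnegative integer combination of control-graph cycles of the piece can be simultaneously inserted into any existing accepting run, yielding a new accepting run. Applying this with each $\tau_j$ inserted exactly $k\cdot N$ times (for some $N$ large enough to absorb the combined negative effects), I obtain a new valid run $\rho'$ of $V_i$ whose label $w'$ lies in $L_i$, contains every $u_j$ as an infix, and satisfies $\Parikh(w')=\Parikh(w)+kN\sum_{j=1}^{m}\Parikh(u_j)\equiv\Parikh(w)\bmod{k}$, as required.

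The main obstacle is packaging the pumping statement cleanly from the KLMST machinery: the literature gives pumpability for single loops and provides strengthened ``coherent'' iteration lemmas inside perfect marked graph transition sequences, but I will need to extract and state precisely that all cycles of the control-graph skeleton of a perfect piece can be pumped \emph{simultaneously} inside a single accepting run, with freely chosen multiplicities. This is standard material for the decomposition of Sacerdote and Tenney~\cite{sacerdote1977decidability}, Mayr~\cite{DBLP:conf/stoc/Mayr81}, Kosaraju~\cite{DBLP:conf/stoc/Kosaraju82}, and Lambert~\cite{DBLP:journals/tcs/Lambert92}, but requires careful translation from the chosen formalism (MGTSs, or the ideal decomposition of~\cite{DBLP:conf/lics/LerouxS15}) to the VASS-language setting used throughout this paper; the freedom to absorb any factor of $k$ into $N$ is what makes the modular congruence work out with no extra arithmetic.
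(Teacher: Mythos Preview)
Your overall architecture matches the paper: invoke the KLMST decomposition to obtain perfect pieces $\Ncal_1,\ldots,\Ncal_p$ (the paper uses Lambert's MGTS formulation), set $L_i=L(\Ncal_i)$, and take $\cA_i$ to be the finite-state control of $\Ncal_i$ (in the paper: the precovering graphs glued along the bridge transitions). Property~(i) of a modular envelope is then immediate, and the substance is property~(ii).

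There is, however, a genuine gap in your argument for~(ii). You claim that in a perfect piece, any nonnegative integer combination of control-graph cycles can be inserted into an existing accepting run ``with freely chosen multiplicities'', and you propose to insert each $\tau_j$ exactly $kN$ times, ``for some $N$ large enough to absorb the combined negative effects''. This reasoning does not go through: if a cycle $\tau_j$ has a strictly negative effect on some counter, then inserting \emph{more} copies makes feasibility worse, not better, so enlarging $N$ cannot help. The standard pumpability statements for perfect MGTS (Lambert's iteration lemma) do not give you free control over the multiplicity of an arbitrary cycle at a fixed insertion point in a fixed run; what they give is a parametrized family of runs built from covering sequences $x_i$ together with auxiliary sequences $\beta_i,y_i,z_i$, all iterated the \emph{same} number $k$ of times. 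Extracting ``freely chosen multiplicities for arbitrary cycles'' from this is precisely the step that fails.

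The paper closes the gap with a two-stage construction that avoids ever needing to control multiplicities. First, an ``inject-factors'' lemma: given the base run $\rho$ and the loop words $u_1,\ldots,u_m$, one prolongs each covering sequence so that it contains the relevant $u_j$'s (and the graph parts of $\rho$) as infixes, applies Lambert's iteration lemma once, and obtains a single run $\rho_1$ in which $\rho$ embeds via an $\Ncal$-compatible embedding and all the $u_j$ occur as infixes. Second, the modular congruence comes not from choosing multiplicities but from \emph{run amalgamation} in the sense of Leroux--Schmitz: one iteratively amalgamates $\rho_1$ with itself over $\rho$, producing runs $\rho_j$ that remain in $\Ncal$ (this needs the compatibility of embeddings, \cref{amalgam-in-mgts}) and satisfy $\Parikh(\rho_j)-\Parikh(\rho)=j\cdot(\Parikh(\rho_1)-\Parikh(\rho))$. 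Taking $j=k$ yields $\Parikh(\rho_k)\equiv\Parikh(\rho)\bmod k$ automatically, with each $u_j$ still an infix. The point is that one never needs to know what $\Parikh(\rho_1)-\Parikh(\rho)$ is; amalgamation multiplies the \emph{difference} by $k$, which is exactly what the modular-envelope condition asks for. Your plan is on the right track, but you should replace the ``insert $kN$ copies'' step by this inject-then-amalgamate mechanism.
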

We postpone the proof of \cref{construct-modular-envelopes} until
\cref{envelopes} and first show how
it is used to decide geometric separability.

\paragraph{Modular envelopes with cone $\Q^n$} By
\cref{lem:dichotomy} we know that every cone either equals $\Q^n$ or
is included in some half-space. The following lemma will be useful in
the first case.
\begin{lemma}\label{lem:all-directions}
  Let $L\subseteq\Sigma_n^*$ be a language with a modular envelope $\cA$.  If
  $\cone(\cA) = \Q^n$ then the following are equivalent:
  (i)~$\sep{L}{Z_n}$,
  (ii)~$L \subseteq M_k$ for some $k\in\N$,
  (iii)~$\sep{\Perms(L)}{Z_n}$.
\end{lemma}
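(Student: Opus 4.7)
The easy half consists of the implications (ii) $\Rightarrow$ (iii) $\Rightarrow$ (i). For (ii) $\Rightarrow$ (iii), note that $M_k$ is commutative: membership of $w$ in $M_k$ depends only on $\varphi(w) \bmod k$, which depends only on $\Parikh(w)$. Hence $L \subseteq M_k$ implies $\Perms(L) \subseteq \Perms(M_k) = M_k$, and $M_k$ separates $\Perms(L)$ from $Z_n$. For (iii) $\Rightarrow$ (i), any regular separator of $\Perms(L)$ from $Z_n$ also separates $L \subseteq \Perms(L)$ from $Z_n$.

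The interesting direction is (i) $\Rightarrow$ (ii). By \cref{geometric-separability}, from $\sep{L}{Z_n}$ we obtain $k \in \N$, vectors $\vec u_1,\ldots,\vec u_m \in \Z^n$, and $\ell_1,\ldots,\ell_m \in \N$ with
\[
  L \subseteq M_k \cup D_{\vec u_1,\ell_1} \cup \cdots \cup D_{\vec u_m,\ell_m}.
\]
My plan is to use the hypothesis $\cone(\cA) = \Q^n$ together with the modular envelope property to absorb every $D_{\vec u_i,\ell_i}$ component into $M_k$, yielding $L \subseteq M_k$ with the \emph{same} $k$.

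Suppose for contradiction that some $w \in L$ satisfies $\varphi(w) \equiv \mathbf{0} \bmod k$, i.e. $w \notin M_k$, so necessarily $w \in D_{\vec u_j,\ell_j}$ for some $j$. Since $\cone(\cA) = \Q^n$, each vector $-\vec u_i$ lies in $\cone(\cA)$, so I can pick for every $i \in [1,m]$ a cycle $c_i \in \Loop(\cA)$ with $\langle \varphi(c_i), \vec u_i\rangle < 0$. Fix $r \in \N$ so large that $r\cdot \langle \varphi(c_i), \vec u_i\rangle < -\ell_i$ for every $i$; the word $c_i^r$ is again readable on a cycle, hence $c_i^r \in \Loop(\cA)$. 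Applying the modular envelope property of $\cA$ with the loop selection $c_1^r,\ldots,c_m^r$, the word $w$, and the modulus $k$ produces $w' \in L$ such that each $c_i^r$ is an infix of $w'$ and $\Parikh(w') \equiv \Parikh(w) \bmod k$. Since $\varphi$ is an integer linear function of the Parikh image, this gives $\varphi(w') \equiv \varphi(w) \equiv \mathbf{0} \bmod k$, so $w' \notin M_k$. On the other hand, for every $i$ the infix $c_i^r$ witnesses $\langle \varphi(c_i^r), \vec u_i\rangle = r\langle \varphi(c_i),\vec u_i\rangle < -\ell_i$, so $w' \notin D_{\vec u_i,\ell_i}$. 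This contradicts the geometric separation of $L$, completing the argument.

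The main obstacle is finding, for each undesired direction $\vec u_i$, a single cycle whose effect has strictly negative scalar product with $\vec u_i$; this is precisely what $\cone(\cA) = \Q^n$ delivers. Once such cycles are available, pumping them to $c_i^r$ amplifies the negative effect past any threshold $\ell_i$, and the modular envelope property does the rest: it lets us simultaneously plant all the pumped cycles into some word of $L$ while preserving the Parikh image, and hence $\varphi$, modulo $k$.
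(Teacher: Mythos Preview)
Your proof is correct and follows essentially the same approach as the paper: use the geometric characterization to write $L\subseteq M_k\cup\bigcup_i D_{\vec u_i,\ell_i}$, use $\cone(\cA)=\Q^n$ to find for each $\vec u_i$ a cycle with strictly negative scalar product, pump it past the threshold, and invoke the modular envelope property to plant all pumped cycles into some $w'\in L$ with $\Parikh(w')\equiv\Parikh(w)\bmod k$. The only cosmetic differences are that the paper normalizes all $\ell_i$ to a common $k$ and argues directly that every $w\in L$ lies in $M_k$, whereas you keep the $\ell_i$ separate and phrase the same step by contradiction.
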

\begin{proof}
  Note that (ii) implies (iii) immediately and that (iii) implies (i)
  because $L\subseteq\Perms(L)$. Thus, we only need to show that (i)
  implies (ii).  By \cref{geometric-separability} if $\sep{L}{Z_n}$,
  then $L$ is included in some
  $M_k \cup D_{\vec u_1,k} \cup \cdots \cup D_{\vec u_m,k}$.  We show
  that in our case we even have $L \subseteq M_k$.

  Take any $w \in L$. We aim at constructing $w' \in L$ such that
  $w' \notin D_{\vec u_i,k}$ for every $i \in [1,m]$ and
  additionally $\varphi(w') \equiv \varphi(w) \bmod{k}$.  Since
  $\cone(\cA) = \Q^n$, for every $\vec u_j$, there exist a loop $v_j$
  in $\cA$ such that $\scal{\varphi(v_j)}{\vec u_j} < 0$.  Since
  $\varphi(v_j),\vec u_j\in\Z^n$, we even have
  $\scal{\varphi(v_j)}{\vec u_j} \leq -1$.  Since each $v_j$ belongs
  to $\Loop(\cA)$, the words $v_j^{k+1}$ also belong to $\Loop(\cA)$.

  Since $\cA$ is a modular envelope, there exists a word $w' \in L$
  that has all the words
  $v_1^{k+1}, \ldots, v_{m}^{k+1}$ as infixes and where
  $\Parikh(w) \equiv \Parikh(w') \bmod{k}$.
  Recall that every infix $u$ of every word in $D_{\vec u_i, k}$ has
  $\scal{\varphi(u)}{\vec u_j} \geq -k$. However, we have the inequality
  $\scal{\varphi(v_j^{k+1})}{\vec u_j} \leq -(k+1)$. Thus, $w'$ cannot
  belong to $D_{\vec u_i,k}$ for $i \in [1,m]$.  Since
  $L\subseteq M_k\cup D_{\vec u_1,k}\cup\cdots \cup D_{\vec u_m,k}$,
  this only leaves $w' \in M_k$. Since
  $\Parikh(w')\equiv \Parikh(w)\bmod{k}$, we also have
  $\varphi(w')\equiv\varphi(w)\bmod{k}$ and thus $w\in M_k$.
\end{proof}

\begin{algorithm}
  \DontPrintSemicolon
  \SetKwInOut{Input}{Input}
  \SetKw{Exit}{exit}
  \Input{$n\in\N$ and VASS language $L=\Lang{V}\subseteq\Sigma_n^*$}
  \caption{Deciding separability of a VASS language $L$ from $Z_n$}\label{algorithm}

  \lIf{$n=0$ and $L=\emptyset$}{\Return{``yes''}}
  \lIf{$n=0$ and $L\neq \emptyset$}{\Return{``no''}}
  Use KLMST decomposition to compute VASS languages $L_1,\ldots,L_p$,
  together with modular envelopes $\cA_1,\ldots,\cA_p$.\\
  \For{$i\in[1,p]$}{
    \If{$\cone(\cA_i)=\Q^n$}{
      Check whether $\Perms(L_i)|Z_n$\\
      \lIf{not $\Perms(L_i)|Z_n$}{
        \Return{``no''}
      }
    }
    \If{$\cone(\cA_i)\subseteq H=\{\vec x\in\Q^n\mid \langle \vec x,\vec u\rangle\ge 0\}$ for some $\vec u\in\Z^n\setminus \{\bf{0}\}$}{
      Let $U=\{\vec x\in \Q^n\mid \langle \vec x,\vec u\rangle=0\}$. \tcc*[f]{$\dim U=n-1$} \\
      Compute $k,\ell$ with $L_i \subseteq\Lang{\cA_i}\subseteq D_{\vec u,k}\cup S_{U,\ell}$
      \tcc*[f]{Now we have $L_i\setminus D_{\vec u,k}\subseteq S_{U,\ell}$}\\
      Compute transduction $T_{U,\ell}\subseteq \Sigma_n^*\times\Sigma_{n-1}^*$ \\
      Compute VASS for $\hat{L}_i=T_{U,\ell}(L_i\!\setminus\! D_{\vec u,k})$ \\
      \tcc*[f]{$\hat{L}_i$ has dimension $n-1$: $\hat{L}_i\subseteq\Sigma_{n-1}^*$}\\
      Check recursively whether $\hat{L}_i|Z_{n-1}$  \\
      \lIf{not $\hat{L}_i|Z_{n-1}$}{
        \Return{``no''}
      }
    }
  }
  \Return{``yes''}
  
\end{algorithm}

We are now prepared to explain the decision procedure for
\cref{result-vass-zvass}. 
The algorithm is illustrated in
\cref{algorithm}.  If $n=0$, then $\Sigma_n=\emptyset$ and thus either
$L=\emptyset$ or $L=\{\varepsilon\}$, meaning $\sep{L}{Z_0}$ if and
only if $L\ne\emptyset$. If $n\ge 1$, we perform the KLMST
decomposition, which, as explained in \cref{envelopes}, yields
languages $L_1\cup\cdots\cup L_p$ and modular envelopes
$\cA_1,\ldots,\cA_p$ such that $L=L_1\cup\cdots\cup L_p$. Since then
$\sep{L}{Z_n}$ if and and only if $\sep{L_i}{Z_n}$ for each
$i\in[1,p]$, we check for the latter. For each $i\in[1,p]$, the
dichotomy of \cref{lem:dichotomy} guides a case distinction: If
$\cone(\cA_i)=\Q^n$, then by \cref{lem:all-directions},
$\sep{L_i}{Z_n}$ if and only if $\sep{\Perms(L_i)}{Z_n}$, which can be
checked via \cref{separability-commutative}.

If $\cone(\cA_i)$ is contained in some half-space
$H=\{\vec x\in\Q^n \mid \langle x,\vec u\rangle\ge 0\}$ with
$\vec u\in\Z^n\setminus\{\textbf{0}\}$, then \cref{halfspace} tells us that
$L_i\subseteq \Lang{\cA_i}\subseteq D_{\vec u,k}\cup S_{U,\ell}$ for
some computable $k,\ell\in\N$ and
$U=\{\vec x\in\Q^n \mid \langle \vec x,\vec u\rangle=0\}$.  In
particular, we have $\sep{L_i}{Z_n}$ if and only if
$\sep{L_i\!\setminus\! D_{\vec u,k}}{Z_n}$.  Note that
$L_i\setminus D_{\vec u,k}=L_i\cap (\Sigma^*_n\setminus D_{\vec u,k})$
is a VASS language and is included in $S_{U,\ell}$.  Thus, the walks
in $L_i$ always stay close to the hyperplane $U$, which has dimension
$n-1$. We can therefore use the transduction $T_{U,\ell}$ to transform
$L_i$ into a set $\hat{L}_i$ of walks in $(n-1)$-dimensional space and
decide separability recursively for the result: We have
$\hat{L}_i=T_{U,\ell}(L_i\setminus D_{\vec
  u,k})\subseteq\Sigma_{n-1}^*$ and \cref{dimension-reduction} tells
us that $\sep{L_i}{Z_n}$ if and only if $\sep{\hat{L}_i}{Z_{n-1}}$.

\subsection{Constructing modular envelopes}\label{envelopes}

\paragraph{Petri nets} We now prove \cref{construct-modular-envelopes}. Since our proof crucially relies on Lambert's
iteration lemma (\cref{lambert-pumping}), we adopt in this section the
notation of Lambert and phrase our proof in terms of Petri nets.  A
\emph{Petri net} $N = (P, T, \vec \Pre, \vec \Post)$ consists of a
finite set $P$ of \emph{places}, a finite set $T$ of
\emph{transitions} and two mappings
$\vec \Pre, \vec \Post\colon T \to \N^P$.  Configurations of Petri net
are elements of $\N^P$, called \emph{markings}.
The \emph{effect} of a transition $t \in T$ is $\vec \Post(t) - \vec \Pre(t) \in \Z^P$, denoted $\eff(t)$.
If for every place $p \in P$ we have $\vec \Pre(t)[p] \leq \vec M[p]$ for a transition $t \in T$ then
$t$ is \emph{fireable} in $\vec M$ and the result of firing $t$ in marking $\vec M$ is $\vec M' = \vec M + \eff(t)$,
we write $\vec M \trans{t} \vec M'$. We extend notions of fireability and firing naturally to sequences
of transitions, we also write $\vec M \trans{w} \vec M'$ for $w \in T^*$.
The \emph{effect} of $w \in T^*$, $w=t_1\cdots t_m$, $t_1,\ldots,t_m\in T$ is $\eff(w) = \eff(t_1)+\cdots+\eff(t_m)$.

For a Petri net $N = (P, T, \vec \Pre, \vec \Post)$ and markings
$\vec M_0, \vec M_1$, we define the language
$L(N, \vec M_0, \vec M_1) = \{w \in T^* \mid \vec M_0 \trans{w} \vec
M_1\}$.  Hence, $L(N, \vec M_0, \vec M_1)$ is the set of transition
sequences leading from $\vec M_0$ to $\vec M_1$.  A \emph{labeled
  Petri net} is a Petri net $N=(P,T,\vec \Pre,\vec \Post)$ together
with an \emph{initial marking} $\vec M_I$, a \emph{final marking}
$\vec M_F$, and a \emph{labeling}, i.e.  a homomorphism
$h: T^*\to\Sigma^*$.  The language \emph{recognized by} the labeled
Petri net is then defined as
$L_h(N, \vec M_I, \vec M_F) = h(L(N, \vec M_I, \vec M_F))$.

It is folklore (and easy to see) that a language is a VASS language if
and only if it is recognized by a labeled Petri net (and the
translation is effective). Thus, it suffices to show
\cref{construct-modular-envelopes} for languages of the form
$L=h(L(N,\vec M_I,\vec M_F))$.  Moreover, it is already enough to
prove \cref{construct-modular-envelopes} for languages of the form
$L(N,\vec M_I,\vec M_F)$: If $\cA$ is a modular envelope for $L$, then
applying $h$ to the edges of $\cA$ yields a modular envelope for
$h(L)$.  Thus from now on, we assume $L=L(N,\vec M_I,\vec M_F)$ for a
fixed Petri net $N=(P,T,\vec \Pre,\vec \Post)$.

\paragraph{Basic notions} Let us introduce some notions used in Lambert's proof.
We extend the set of configurations $\N^d$ into $\Nom^d$, where $\Nom=\N\cup\{\omega\}$.
We extend the notion of transition firing into $\Nom^d$, by
defining $\omega - k = \omega = \omega + k$ for every $k \in \N$. For $\vec u, \vec v \in \Nom^d$
we write $\vec u \leq_\omega \vec v$ if
$\vec u[i] = \vec v[i]$ or $\vec v[i] = \omega$. Intuitively reaching a configuration with $\omega$ at some places
means that it is possible to reach configurations with values $\omega$ substituted by arbitrarily high values.

A key notion in~\cite{DBLP:journals/tcs/Lambert92} is that of MGTS, which formulate restrictions on paths in
Petri nets. A \emph{marked graph-transition sequence (MGTS)} for our Petri net
$N=(P,T,\vec \Pre,\vec \Post)$ is a finite sequence
$
C_0, t_1, C_1 \ldots C_{n-1}, t_n, C_n,
$
where $t_i$ are transitions from $T$ and $C_i$ are precovering graphs, which are defined next.
A \emph{precovering graph} is a quadruple $C = (G, \vec m, \vec m^\init, \vec m^\fin)$, where $G=(V,E,h)$ is a finite, strongly connected, directed graph with $V\subseteq\Nom^P$ and labeling $h\colon E \to T$,
and three vectors: a \emph{distinguished} vector $\vec m \in V$, an \emph{initial} vector $\vec m^\init \in \Nom^P$, and a \emph{final} vector $\vec m^\fin \in \Nom^P$.
A precovering graph has to meet two conditions:
First, for every edge $e = (\vec m_1, \vec m_2) \in E$, there is an $\vec m_3\in\Nom^P$
with $\vec m_1 \trans{h(e)} \vec m_3 \leq_\omega \vec m_2$.
Second, we have
$\vec m^\init, \vec m^\fin \leq_\omega \vec m$.
Additionally we impose the restriction on MGTS that the initial vector of $C_0$
equals $\vec M_I$ and the final vector of $C_n$ equals $\vec M_F$.

\paragraph{Languages of MGTS} Each precovering graph can be
treated as a finite automaton. For $\vec m_1, \vec m_2\in V$,
 $L(C, \vec m_1, \vec m_2)$ denotes the set of all $w\in T^*$ read on
a path from $\vec m_1$ to $\vec m_2$.  Moreover, let
$L(C) = L(C, \vec m, \vec m)$.  MGTS have associated languages as
well.  Let $\Ncal = C_0, t_1, C_1 \ldots C_{n-1}, t_n, C_n$ be an MGTS
of a Petri net $N$, where
$C_i = (G_i, \vec m_i, \vec m_i^\init, \vec m_i^\fin)$. Its language
$L(\Ncal)$ is the set of all words of the form
$w = w_0 t_1 w_1 \cdots w_{n-1} t_n w_n \in T^*$ where:
$w_i \in L(C_i)$ for each $i\in[0,n]$ and (ii)~there exist markings
$\vec u_0, \vec u'_0, \vec u_1, \vec u'_1, \ldots, \vec u_n, \vec u'_n
\in \N^P$ such that $\vec u_i \leq_\omega \vec m^\init_i$ and
$\vec u'_i \leq_\omega \vec m^\fin_i$ and
\begin{equation} \vec u_0 \trans{w_0} \vec u'_0 \trans{t_1} \vec u_1 \trans{w_1} \ldots \trans{w_{n-1}} \vec u'_{n-1} \trans{t_n} \vec u_n \trans{w_n} \vec u'_n. \label{mgts-run} \end{equation}
In this situation, the occurrences of $t_1,\ldots,t_n$ shown in \cref{mgts-run}
are called the \emph{bridges} and $w_0,\ldots,w_n$ are called the \emph{graph parts}.
 Notice that by (ii) and the restriction that
 $\vec m^\init_0 = \vec M_I$ and $\vec m^\fin_n = \vec M_F$, we have
 $L(\Ncal)\subseteq L(N, \vec M_I,\vec M_F)$ for any MGTS $\Ncal$.
Hence roughly speaking, $L(\Ncal)$ is the set of runs that contain the transitions
$t_1,\ldots,t_n$ and additionally markings before and after firing these
transitions are prescribed on some places: this is exactly what the restrictions
$\vec u_i \leq_\omega \vec m^\init_i$, $\vec u'_i \leq_\omega \vec m^\fin_i$ impose.
%
As an immediate consequence of the definition, we observe that for every
MGTS $\Ncal = C_0, t_1, C_1 \ldots C_{n-1}, t_n, C_n$
we have 
\begin{equation} L(\Ncal) \subseteq L(C_0) \cdot \{t_1\} \cdot L(C_1) \cdots L(C_{n-1}) \cdot \{t_n\} \cdot L(C_n). \label{eq:overapproximation}\end{equation}

\paragraph{Perfect MGTS}
Lambert calls MGTS with a particular property
\emph{perfect}~\cite{DBLP:journals/tcs/Lambert92}. Since the precise
definition is involved and we do not need all the details, it is
enough for us to mention a selection of facts about perfect MGTS
(\cref{perfect-covering-sequence,thm:language-union}). One of these is
the intuitive property that in perfect MGTSes, the value $\omega$ on
place $p$ in $\vec m_i$ means that inside of the precovering graph
$C_i$, the token count in place $p$ can be made arbitrarily high.  The
precise formulation involves the notion of covering sequences, which
we define next.  Let $C$ be a precovering graph for a Petri net
$N = (P, T, \vec \Pre, \vec \Post)$ with a distinguished vector
$\vec m \in \Nom^P$ and initial vector $\vec m^{\init} \in \Nom^P$.
For a marking $\vec M_0$ let
$L(N, \vec M_0) = \bigcup_{M \in \N^P} L(N, \vec M_0, \vec M)$,
i.e. the set of all the transition sequences fireable in $\vec M_0$.
A sequence $x \in L(C) \cap L(N, \vec m^{\init})$ is called a
\emph{covering sequence for $C$} if $x$ is enabled in $\vec m^{\init}$
and for every place $p \in P$ we have either (i)~$\vec m^{\init}[p] = \omega$, or (ii)~$\vec m[p] = \vec m^{\init}[p]\in\N$ and $\eff(x)[p] = 0$, or (iii)~$\vec m^{\init}[p]<\vec m[p] = \omega$ and $\eff(x)[p] > 0$.  The
property of perfect MGTS that we need is the following:
\begin{property}\label{perfect-covering-sequence}
  In a perfect MGTS $\Ncal$, each precovering graph possesses a
  covering sequence.
\end{property}
This is part of the definition of perfect MGTS,
see~\cite[page~92]{DBLP:journals/tcs/Lambert92}.
The second fact about perfect MGTS that we will use is
that one can decompose each Petri net into finitely many perfect MGTS.
In~\cite{DBLP:journals/tcs/Lambert92} the following is shown (Theorem~4.2
(page~94) together with the preceding definition).
\begin{proposition}[\cite{DBLP:journals/tcs/Lambert92}]\label{thm:language-union}
  Given a Petri net $N$, one can compute finitely many perfect MGTS
  $\Ncal_1, \ldots, \Ncal_p$ such that $L(N,\vec M_I,\vec M_F)$ equals
  $\bigcup_{i=1}^p L(\Ncal_i)$.
\end{proposition}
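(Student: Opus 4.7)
The plan is to follow Lambert's iterative refinement scheme. We maintain a finite collection of MGTSes whose languages union to $L(N,\vec M_I,\vec M_F)$ and repeatedly replace any non-perfect MGTS by a finite collection of strictly simpler MGTSes; when every member of the collection is perfect, we stop. The initial collection consists of a single trivial MGTS with no bridges and one precovering graph built from the Karp--Miller coverability graph of $N$ started in $\vec M_I$, with distinguished vector dominating both $\vec M_I$ and $\vec M_F$, initial vector $\vec M_I$, and final vector $\vec M_F$; its language is exactly $L(N,\vec M_I,\vec M_F)$.

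For the refinement step, suppose $\Ncal = C_0, t_1, C_1, \ldots, t_n, C_n$ is not perfect, so by \cref{perfect-covering-sequence} some $C_i$ lacks a covering sequence witnessing its $\omega$-entries. Diagnosing the obstruction yields one of two defects. Either (a)~some place $p$ with $\vec m_i[p]=\omega$ admits only bounded effect on $p$ over any $x \in L(C_i) \cap L(N,\vec m_i^{\init})$ that can actually be embedded into a full realization~\eqref{mgts-run} of $\Ncal$; or (b)~the ``characteristic'' linear system (expressing that the firing counts of bridges and of cycles inside each $C_i$ interpolate the prescribed initial/final markings) has no strictly positive nonnegative-integer solution. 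In case (a), only finitely many values of coordinate $p$ are reachable at $\vec m_i$, so we split $\Ncal$ into finitely many MGTSes in which that $\omega$ is replaced by each attainable finite value; this may cascade through the graph of $C_i$ and produce finitely many refined precovering graphs in its place. In case (b), we isolate a transition whose count is forced to be finite and promote it into a new bridge, splitting $\Ncal$ on its exact firing count. In both cases, one checks that the union of the resulting MGTS languages equals $L(\Ncal)$.

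The principal obstacle is termination. We rank each MGTS by the multiset, taken over its precovering graphs, of pairs $(\#\omega(\vec m_i), B_i)$, where $\#\omega(\vec m_i)$ is the number of $\omega$-coordinates of $\vec m_i$ and $B_i$ is a known upper bound on the finite coordinates that appear in $\vec m_i$ (the bound is carried along by the refinement that produced $C_i$). A refinement of type~(a) strictly reduces $\#\omega(\vec m_i)$ in every offspring; a refinement of type~(b) keeps the $\omega$-profile intact but strictly reduces some $B_i$, because the newly introduced bridge records a stronger constraint. The multiset extension of the well-founded product order on $\N \times \N$ is itself well-founded by Dickson's lemma, so every strictly descending chain in our ranking is finite and the refinement procedure halts after finitely many steps. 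The output collection then is the desired $\Ncal_1, \ldots, \Ncal_p$.

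The hard part is verifying, rather than just assuming, that each non-perfect $\Ncal$ really admits a refinement that strictly decreases this measure: one must show that the failure modes (a) and (b) are exhaustive, that the case split in (a) can be realized as finitely many new MGTSes, and that in case (b) the bound $B_i$ extracted from the linear-algebraic obstruction is strictly smaller than what was previously known. This analysis, together with the explicit construction of the characteristic system and the bookkeeping of the $B_i$, is the technical core of Lambert's argument and is exactly what makes the decomposition effective.
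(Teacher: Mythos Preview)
The paper does not give its own proof of this proposition; it simply cites Lambert~\cite{DBLP:journals/tcs/Lambert92} (Theorem~4.2 there, together with the preceding definition). So there is nothing in the paper to compare against beyond the citation. Your sketch is recognisably in the spirit of Lambert's refinement procedure, but two of its steps do not hold up.

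First, your seed MGTS is not well-formed. By the paper's definition, a precovering graph is required to be \emph{strongly connected}, but the Karp--Miller coverability graph of $N$ from $\vec M_I$ is essentially a tree and almost never strongly connected; it cannot serve as a single precovering graph $C_0$. Lambert's actual seed is the trivial one: a single vertex $\omega^P$ with a self-loop for every transition, initial vector $\vec M_I$ and final vector $\vec M_F$. That graph is strongly connected and its language is all of $L(N,\vec M_I,\vec M_F)$.

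Second, and more seriously, your termination rank does not decrease in case~(b). Promoting a transition to a bridge splits $C_i$ into (at least) two strongly connected pieces; none of them has fewer $\omega$'s than $C_i$, and your quantity $B_i$---``a known upper bound on the finite coordinates that appear in $\vec m_i$''---need not shrink at all, because inserting a bridge constrains the \emph{run structure}, not the non-$\omega$ values carried by the graph vertices. So under the multiset extension your rank can strictly increase. Lambert's actual ranking pairs the $\omega$-count of each precovering graph with a \emph{size} parameter of the graph (number of edges), ordered lexicographically: case~(a) drops the first component, and case~(b) removes the promoted edge from the ambient graph, so every offspring has strictly fewer edges while keeping the $\omega$-count. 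That is what makes the Dershowitz--Manna multiset argument terminate. Your $B_i$ does not play that role, and the sentence ``the newly introduced bridge records a stronger constraint'' is not a proof that any concrete ordinal-valued quantity goes down.
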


\paragraph{Building the automata} By
\cref{thm:language-union}, it suffices to construct a modular envelope
for each $L(\Ncal_i)$. Hence, we consider a single perfect MGTS
$\Ncal=C_0,t_1,C_1,\ldots, t_n,C_n$ with distinguished vertices
$\vec m_0,\ldots,\vec m_n$ and construct a modular envelope $\cA$ for
$L(\Ncal)$.  We obtain $\cA$ by gluing together all
the precovering graphs $C_i$ along the transitions $t_i$. In other
words, $\cA$ is the disjoint union of all the graphs $C_i$ and has an
edge labeled $t_i$ from $\vec m_{i-1}$ to $\vec m_i$ for each
$i\in[1,n]$. The initial state of $\cA$ is $\vec m_0$ and its final
state is $\vec m_n$.

\paragraph{Ingredient I: Run amalgamation}
The first ingredient in for showing that $\cA$ is a modular envelope
is a method for constructing runs in Petri nets: the amalgamation of
runs as introduced by Leroux and
Schmitz~\cite{DBLP:conf/lics/LerouxS15}. It is based on an embedding
between Petri net runs introduced by
Jan\v{c}ar~\cite{DBLP:journals/tcs/Jancar90} and
Leroux~\cite{DBLP:conf/popl/Leroux11}.  A triple
$(\vec u,t,\vec v)\in \N^P\times T\times \N^P$ is a \emph{transition
  triple} if $\vec v=\vec u+\eff(t)$. If there is no danger of
confusion, we sometimes call $(\vec u,t,\vec v)$ a transition.  A
triple $(\vec u,w,\vec v)$ with $\vec u,\vec v\in\N^P$ and
$w\in (\N^P\times T\times \N^P)^*$ is called a \emph{prerun}.
Let $\rho=(\vec u,w,\vec v)$ and $\rho'=(\vec u',w',\vec v')$ be
preruns with
$w=(\vec u_0,t_1,\vec v_1)(\vec u_1,t_2,\vec v_2)\cdots (\vec
u_{r-1},t_r,\vec v_r)$ and
$w'=(\vec u'_0,t'_1,\vec v'_1)(\vec u'_1,t'_2,\vec v'_2)\cdots (\vec
u'_{s-1},t'_s,\vec v'_s)$.  An \emph{embedding of $\rho$ in $\rho'$}
is a monotone map $\sigma\colon[1,r]\to[1,s]$ such that $t'_{\sigma(i)}=t_i$,
$\vec u_i\le\vec u'_{\sigma(i)}$ and $\vec v_i\le\vec v'_{\sigma(i)}$
for $i\in[1,r]$, and $\vec u\le\vec u'$ and $\vec v\le\vec v'$.  In
this case, we call the words
$t'_1\cdots t'_{\sigma(1)-1}$,
$t'_{\sigma(i)+1}\cdots t'_{\sigma(i+1)-1}$ for $i\in[1,r-1]$, and
$t'_{\sigma(r)+1}\cdots t'_s$
the \emph{inserted words of $\sigma$}. By $F(\sigma)\subseteq T^*$, we
denote the set of all infixes of inserted words of $\sigma$.
Furthermore, by $\Parikh(\rho)$, we denote the Parikh image
$\Parikh(t_1\cdots t_r)\in\N^T$.

Moreover, $\rho$ is called a \emph{run} if each
$(\vec u_i,t_i,\vec v_i)$ is a transition and also $\vec u=\vec u_0$,
$\vec u_i=\vec v_i$ for $i\in[1,r]$, and $\vec v=\vec v_r$.  Note that
this is equivalent to $\vec u_i=\vec v_i$ for $i\in[1,r]$ and
$\vec u=\vec u_0\trans{t_1}\vec u_1\cdots \vec u_{r-1}\trans{t_r}\vec
u_r$ and we sometimes use the latter notation to denote runs.

Suppose we have three runs $\rho_0,\rho_1,\rho_2$ and there are
embeddings $\sigma_1$ of $\rho_0$ in $\rho_1$ and $\sigma_2$ of
$\rho_0$ in $\rho_2$. As observed in
\cite[Prop.~5.1]{DBLP:conf/lics/LerouxS15} one can define a new run
$\rho_3$ in which both $\rho_1$ and $\rho_2$ embed. Let $\rho_0$ be the run
$ \vec u_0\trans{t_1}\vec u_1\trans{t_2}\cdots \trans{t_r}\vec u_r$.
Then $\rho_1$ and $\rho_2$ can be written as
\begin{equation}\label{runs-in-amalgamation}
\begin{aligned}
  \rho_1\colon \vec u_0+\vec v_0\trans{w_0}\vec u_0+\vec v_1 &\trans{t_1}\vec u_1+\vec v_1\trans{w_1}\vec u_1+\vec v_2\cdots \\
  \cdots&\trans{t_r} \vec u_r+\vec v_r\trans{w_r}\vec u_r+\vec v_{r+1} \\
  \rho_2\colon \vec u_0+\vec v'_0\trans{w'_0}\vec u_0+\vec v'_1 &\trans{t_1}\vec u_1+\vec v'_1\trans{w'_1}\vec u_1+\vec v'_2\cdots \\
  \cdots&\trans{t_r} \vec u_r+\vec v'_r\trans{w'_r}\vec u_r+\vec v'_{r+1}
\end{aligned}
\end{equation}
for some $\vec v_i,\vec v'_i\in\N^P$, $i\in[0,r+1]$. Then the
\emph{amalgam of $\rho_1$ and $\rho_2$ (along $\sigma_1$ and
  $\sigma_2$)} is the run $\rho_3$ defined as
\begin{equation}
\begin{aligned}
  & \vec u_0+\vec v_0+\vec v'_0 \trans{w_0} \vec u_0+\vec v_1+\vec v'_0 \trans{w'_0} \vec u_0+\vec v_1+\vec v'_1 \trans{t_1} \\
 & \vec u_1+\vec v_1+\vec v'_1 \trans{w_1} \vec u_1+\vec v_2+\vec v'_1 \trans{w'_1} \vec u_1+\vec v_2+\vec v'_2 \trans{t_1} \\
  &\vdots \\
  & \vec u_{r-1}+\vec v_{r-1}+\vec v'_{r-1} \trans{w_1} \vec u_{r-1}+\vec v_r+\vec v'_{r-1} \trans{w'_1} \vec u_{r-1}+\vec v_r+\vec v'_r \trans{t_r} \\
  &\vec u_r+\vec v_r+\vec v'_r\trans{w_r}\vec u_r+\vec v_{r+1}+\vec v'_{r}\trans{w'_r} \vec u_r+\vec v_{r+1}+\vec v'_{r+1}.
\end{aligned}\label{amalgamated-run}
\end{equation}
and the embedding $\tau$ of $\rho_0$ in $\rho_3$ is defined in the
obvious way.  Note that the run $\rho_3$ and the embedding $\tau$
satisfy
\begin{equation}\label{amalgamation-properties}
  \begin{aligned}
  F(\sigma_1)\cup F(\sigma_2)&\subseteq F(\tau),\\
  \Parikh(\rho_3)-\Parikh(\rho_0)&=(\Parikh(\rho_1)-\Parikh(\rho_0))+(\Parikh(\rho_2)-\Parikh(\rho_0)).
  \end{aligned}
\end{equation}

The following \lcnamecref{amalgam-in-mgts} is very much in the spirit
of Leroux and Schmitz~\cite{DBLP:conf/lics/LerouxS15}, which recasts
the KLMST algorithm as the computation of an ideal
decomposition. Specifically, their
\cite[Lemma~VII.2]{DBLP:conf/lics/LerouxS15} shows that the set of
runs of $\Ncal$ is upward directed, meaning that for any two runs
$\rho_1$ and $\rho_2$, there exists $\rho_3$ in $\Ncal$ in which both
$\rho_1$ and $\rho_2$ embed. We need precise control over the Parikh
image of the runs we construct. Therefore, we introduce the notion of
compatible embeddings, which guarantees that the amalgam of two
runs from $\Ncal$ again belongs to $\Ncal$.

If $\rho$ and $\rho'$ are runs in $\Ncal$, then we can associate to
each marking $\vec u_i$ ($\vec u'_i$) in $\rho$ (in $\rho'$) a node
$\tilde{\vec v}_i$ ($\tilde{\vec v}'_i$) in some $C_j$.  We say that
$\sigma$ is \emph{($\Ncal$-)compatible} if (i)~$\sigma$ maps the
$k$-th bridge transition in $\rho$ to the $k$-th bridge transition in
$\rho'$ for each $k\in[1,n]$ and (ii)~for $i\in[1,r]$, we have
$\tilde{\vec v}'_{\sigma(i)}=\tilde{\vec v}_i$.  In other words,
$\sigma$ does (i)~preserve bridge transitions and (ii)~map each
marking in $\rho$ to a marking in $\rho'$ that visits the same node in
$\Ncal$.
\begin{restatable}{lemma}{amalgamInMGTS}\label{amalgam-in-mgts}
  Let $\rho_0, \rho_1, \rho_2$ be runs in $\Ncal$ where $\rho_0$
  embeds compatibly in $\rho_1$ and $\rho_2$. Then the amalgam
  $\rho_3$ of $\rho_1$ and $\rho_3$ is also a run in
  $\Ncal$. Moreover, the induced embeddings of $\rho_1,\rho_2$ in
  $\rho_3$ are compatible.
\end{restatable}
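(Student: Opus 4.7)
The plan is to build on the Leroux--Schmitz amalgamation of equation~(\ref{amalgamated-run}), which already yields a valid Petri net run by the usual monotonicity of transition firing: since $w_i$ is fireable from $\vec u_{i-1}+\vec v_{i-1}$ in $\rho_1$, it remains fireable after further adding $\vec v'_{i-1}$, and symmetrically for $w'_i$, so stepping through the amalgam marking by marking produces a run in $N$ (this is essentially \cite[Prop.~5.1]{DBLP:conf/lics/LerouxS15}). The natural inclusions of $\rho_1$ and $\rho_2$ into $\rho_3$ (mapping each transition of $\rho_1$ to its corresponding occurrence in $\rho_3$, skipping the interleaved $w'_i$'s, and symmetrically for $\rho_2$) are then embeddings in the sense of the paper, since the amalgam markings are coordinate-wise above the $\rho_1$- and $\rho_2$-markings. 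It remains to verify that $\rho_3$ respects the MGTS structure of $\Ncal$ and that these induced embeddings are $\Ncal$-compatible.

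For $\rho_3\in L(\Ncal)$, I would use $\Ncal$-compatibility of $\sigma_1,\sigma_2$ to pinpoint the location of each insertion. Since the bridges of $\rho_0$ are mapped to the bridges of $\rho_1$ and of $\rho_2$ in order, no inserted word $w_i$ or $w'_i$ contains a bridge transition, so $\rho_3$ has exactly the bridges $t_1,\ldots,t_n$ in the correct order. Next, within the portion of $\rho_3$ corresponding to a precovering graph $C_j$, the inserted word $w_i$ from $\rho_1$ is flanked in $\rho_1$ by two positions that, by $\Ncal$-compatibility, are both assigned the same $\Ncal$-vertex $\tilde{\vec v}_i$ (since they both correspond to the same $\rho_0$-marking $\vec u_i$); hence $w_i$ is a closed walk in $C_j$ at $\tilde{\vec v}_i$, and by the same argument applied to $\sigma_2$ so is $w'_i$. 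Therefore $w_iw'_i$ is again a closed walk at $\tilde{\vec v}_i$, and inserting these concatenations in the $\rho_0$-skeleton (itself a walk in $C_j$ from $\vec m_j$ to $\vec m_j$) yields a walk in $C_j$ from $\vec m_j$ to $\vec m_j$, i.e.\ a word in $L(C_j)$.

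The main obstacle is verifying the $\leq_\omega$ constraints on the markings flanking each bridge of $\rho_3$. Consider such a marking $\vec u^{(3)}$, and the corresponding markings $\vec u^{(0)}$, $\vec u^{(0)}+\vec v$, $\vec u^{(0)}+\vec v'$ in $\rho_0,\rho_1,\rho_2$, which all satisfy $\leq_\omega\vec m^\init_j$ by the assumption that $\rho_0,\rho_1,\rho_2\in L(\Ncal)$ (the $\vec m^\fin$ case being analogous). On each place $p$ with $\vec m^\init_j[p]\neq\omega$, the three hypotheses together force $\vec u^{(0)}[p]=\vec u^{(0)}[p]+\vec v[p]=\vec u^{(0)}[p]+\vec v'[p]=\vec m^\init_j[p]$, so $\vec v[p]=\vec v'[p]=0$; hence $\vec u^{(3)}[p]=\vec u^{(0)}[p]+\vec v[p]+\vec v'[p]=\vec m^\init_j[p]$, and the constraint transfers to $\rho_3$. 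On places with $\vec m^\init_j[p]=\omega$ the constraint is vacuous.

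Finally, the induced embeddings $\rho_1\hookrightarrow\rho_3$ and $\rho_2\hookrightarrow\rho_3$ are $\Ncal$-compatible: bridges of $\rho_1$ (resp.\ $\rho_2$) map to the bridges of $\rho_3$, which coincide with those of $\rho_0$ and appear in the same order; and the $\Ncal$-vertex assigned to each $\rho_1$-position is unchanged in $\rho_3$, because the additional $\rho_2$-transitions inserted before it stay within the current precovering graph and form cycles returning to the originating vertex, as already observed in the second paragraph. The symmetric argument handles $\rho_2\hookrightarrow\rho_3$.
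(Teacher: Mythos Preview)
Your proposal is correct and follows essentially the same approach as the paper's proof: both verify that $\rho_3$ is a run in $\Ncal$ by checking the $\leq_\omega$ constraints at bridges via the observation that the difference vectors $\vec v_i,\vec v'_i$ vanish on non-$\omega$ places, and both argue that the inserted words $w_i,w'_i$ are cycles at the common node $\tilde{\vec u}_i$ so that $w_iw'_i$ stays in the right $L(C_j)$. Your path-based justification for compatibility of the induced embeddings (the inserted $w'_i$'s are cycles, so subsequent $\rho_1$-positions retain their nodes) is in fact slightly more direct than the paper's $\omega$-support detour; note only a harmless index slip in your first paragraph ($w_i$ is fired from $\vec u_i+\vec v_i$, not $\vec u_{i-1}+\vec v_{i-1}$).
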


For \cref{amalgam-in-mgts}, we roughly argue as follows. For runs as
in \cref{runs-in-amalgamation}, compatibility means that $\vec u_i$
and $\vec u_i+\vec v_{i+1}$ and $\vec u_i+\vec v'_{i+1}$ correspond to
the same node in a component of $\Ncal$. Therefore, the difference
vectors $\vec v_{i+1}$ and $\vec v'_{i+1}$ can be non-zero only in
coordinates that have $\omega$ in these nodes. This means, the vector
$\vec u_i+\vec v_{i+1}+\vec v'_{i+1}$ also differs from $\vec u_i$ in
only those coordinates and can again be associated to the same node to
show that $\rho_3$ is a run in $\Ncal$.

\begin{proof}
  Let $\rho_0$ be the run
  $\vec u_0\trans{t_1}\vec u_1\trans{t_2}\cdots \trans{t_r}\vec u_r$
  and let $\rho_1$ and $\rho_2$ be as in \cref{runs-in-amalgamation}.
  Moreover, 
  let $\tau$ be the resulting embedding of $\rho_0$ in $\rho_3$ and let $\sigma'_j$ be
  the embedding of $\rho_j$ in $\rho_3$ for $j\in\{1,2\}$.
  Let us first argue that $\rho_3$ is a run in $\Ncal$.  To this end,
  we argue that the images of bridge transitions under $\tau$ satisfy
  condition~(i) of a run. Suppose $t_i$ in $\rho_0$ is a bridge
  transition. Then for some final marking $\vec m^{\fin}$ of some
  graph $C_j$ and some initial marking $\vec m^{\init}$ of $C_{j+1}$,
  we have $\vec u_{i-1}\le_\omega \vec m^{\fin}$ and
  $\vec u_{i}\le_\omega \vec m^{\init}$. Note that $\vec m^{\fin}$ and
  $\vec m^{\init}$ have $\omega$ in the same set of places; we denote
  this set by $\Omega\subseteq P$.  Since $\sigma_1$ is compatible and
  $\rho_1$ is a run, this also implies
  $\vec u_{i-1}+\vec v_{i}\le_\omega \vec m^{\fin}$ and
  $\vec u_{i}+\vec v_{i}\le_\omega \vec m^{\init}$. Note that this
  means $\vec v_i[p]=0$ if $p\in P\setminus \Omega$, in other words
  $\vec v_{i}\in\N^\Omega$. By the same argument, we have
  $\vec v'_i\in\N^\Omega$.  Therefore, we also have
  $\vec u_{i-1}+\vec v_i+\vec v'_i\le \vec m^{\fin}$ and
  $\vec u_i+\vec v_i+\vec v'_i\le \vec m^{\init}$, which proves
  condition~(i).

  We now continue with condition~(ii). Since $\sigma_1$ and $\sigma_2$
  are compatible, the bridge transitions of $\rho_0$ must be among the
  $t_1,\ldots,t_r$.  Therefore, each $w_i$ is included in some graph
  part of $\rho_1$; and each $w'_i$ is included in some graph
  part of $\rho_2$.  Let
  $\tilde{\vec u}_i\in\N_\omega^P$ be the node in the graph $C$
  associated with $\vec u_i$ in $\rho_0$. Since $\sigma_1$ is
  compatible, the $\tilde{\vec u}_i$ is also the node associated with
  $\vec u_i+\vec v_i$ and with $\vec u_i+\vec v_{i+1}$. In particular,
  we have $w_i\in L(C,\tilde{\vec u}_i)$.  By the same argument, we
  have $w'_i\in L(C,\tilde{\vec u}_i)$ and therefore
  $w_iw'_i\in L(C,\tilde{\vec u}_i)$. In other words, $\rho_3$ is
  obtained from $\rho_2$ by inserting loops $w'_i$ in graphs right
  after loops at the same node $\tilde{\vec u}_i$. Thus, $\rho_3$ is a
  run in $\Ncal$.

  It remains to be shown that $\sigma'_1,\sigma'_2$ are compatible. By
  symmetry, it suffices to show this for $\sigma'_1$.  Consider the
  transitions in $\rho_1$ inside
  $\vec u_i+\vec v_i\trans{w_i}\vec u_i+\vec v_{i+1}$.  Since the
  bridge transitions of $\rho_1$ must be among $t_1,\ldots,t_r$, we
  know that $w_i$ is included in some graph part of $\rho_1$ in some
  graph $C$.  Observe that because of strong connectedness, all nodes
  in a graph of an MGTS have $\omega$ in exactly the same places.  Let
  $\Omega\subseteq P$ be the set of those places.  Let
  $\tilde{\vec u}_i\in\N_\omega^P$ be the node associated to $\vec u_i$ in
  $\rho_0$. Since $\sigma_2$ is compatible, the marking
  $\vec u_i+\vec v'_i$ is associated with the same node $\tilde{\vec u}_i$.
  Therefore, we have $\vec u_i\le_\omega \tilde{\vec u}_i$ and
  $\vec u_i+\vec v_i\le_\omega \tilde{\vec u}_i$. This means for every
  $p\in P$ with $\vec v'_i[p]\ne 0$, we have $\tilde{\vec u}_i[p]=\omega$.
  In other words, $\vec v'_i\in\N^\Omega$.  Note that every node
  associated to a transition in $w_i$ belong to $C$ and adding a
  vector from $\N^\Omega$ to a marking does not change its associated
  node. Since $\sigma'_1$ maps
  $\vec u_i +\vec v_i\trans{w_i}\vec u_i+\vec v_{i+1}$ to
  $\vec u_i+\vec v_i+\vec v'_i \trans{w_i} \vec u_i+\vec v_{i+1}+\vec
  v'_i$, $\sigma'_1$ has to preserve the nodes of the transitions in
  $w_i$. Similarly, one shows that $\sigma'_1$ preserves nodes of
  markings around non-bridge transitions among $t_1,\ldots,t_r$.
\end{proof}

\paragraph{Ingredient II: Lambert's iteration lemma}
Our second ingredient for showing that $\cA$ is a modular envelope is
Lambert's iteration lemma. It allows us to construct runs containing
desired infixes, which can then be revised using amalgamation.  Recall
that we consider the marked graph-transition sequence
$\Ncal = C_0, t_1, C_1 \ldots C_{n-1}, t_n, C_n$.  Let
$C_i = (V_i, E_i, h_i)$ be a precovering graph, and let the
distinguished vertex be $\vec m_i$ and initial vertex be
$\vec m_i^\init$. The following is a simplified version of Lambert's
iteration lemma (Lemma 4.1 in~\cite{DBLP:journals/tcs/Lambert92}
(page~92)).
\begin{lemma}[Lambert~\cite{DBLP:journals/tcs/Lambert92}]\label{lambert-pumping}
  Suppose $\Ncal=C_0,t_1,C_1,\ldots,t_n,C_n$ and let $x_i\in T^*$ be a
  covering sequence for $C_i$ for $i\in[0,n]$.  Then there exist
  $k_0\in\N$ and sequences $\beta_i,y_i,z_i\in T^*$ for $i\in[0,n]$
  such that for every $k\ge k_0$,
  \[ x_0^k\beta_0y_0^k z_0^k \cdot t_1\cdot x_1^k\beta_1 y_1^k
    z_1^k\cdots t_n\cdot x_n^k \beta_n y_n^kz_n^k \] is a run in
  $\Ncal$, such that the shown occurrences of $t_1,\ldots,t_n$ are the
  bridges.  Moreover, we have $\sum_{j=0}^i \eff(x_iy_iz_i)[p]\ge 1$
  for every $i\in[1,n]$ and $p\in P$ with $\vec m_i^{\fin}[p]=\omega$.
\end{lemma}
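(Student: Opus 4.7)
The plan is to build the run segment by segment, following the structure of $\Ncal$, by exploiting covering sequences as ``pumps'' that let us accumulate unboundedly many tokens at $\omega$-coordinates. The run in the $i$-th segment will go from a marking just above $\vec m_i^{\init}$ (reached after firing the bridge $t_i$) to a marking just above $\vec m_i^{\fin}$ (from which the next bridge $t_{i+1}$ will fire). I would devote $x_i^k$ to loading tokens onto the $\omega$-places of the distinguished vertex $\vec m_i$, and $z_i^k$ to loading tokens onto the $\omega$-places of $\vec m_i^{\fin}$; the middle factors $\beta_i y_i^k$ are used to travel in the graph $G_i$ from the endpoint of $x_i^k$ to the starting point of $z_i^k$ while preserving (and if needed increasing) the tokens already accumulated.

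First I would formalize the ``reverse'' covering sequence $z_i$: by strong connectedness of $G_i$ and by running the covering-sequence argument backwards from $\vec m_i^{\fin}$ (treating transitions with reversed effects on a suitable sub-Petri-net), one obtains a cycle $z_i\in L(C_i,\vec m_i^{\fin},\vec m_i^{\fin})$ fireable from some reachable marking, whose effect is strictly positive on every $p$ with $\vec m_i^{\fin}[p]=\omega$ but $\vec m_i[p]\ne\omega$ can't happen (since $\vec m_i^{\fin}\le_\omega\vec m_i$), and zero on the other coordinates. Next, $\beta_i$ is chosen to be \emph{any} path in $G_i$ connecting the last node visited by $x_i$ to the first node visited by $z_i$; its finite constant effect can be absorbed by sufficiently large $k$. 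The loop $y_i$ at a node on $\beta_i$ is selected so that the pair $(x_i,y_i,z_i)$ accumulates strictly positive effect on every $\omega$-place of $\vec m_i^{\fin}$ (by combining $x_i$ and a small cycle realising the required increase).

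The heart of the argument is an inductive fireability check. For each $k\ge k_0$, I would prove by induction on $i$ that the prefix of the candidate word up to the end of the $i$-th segment is fireable from $\vec M_I$ and produces a marking $\vec u_i'$ with $\vec u_i'\ge_\omega \vec m_i^{\fin}$ in the $\N$-coordinates and $\vec u_i'[p]\to\infty$ in every $\omega$-coordinate of $\vec m_i^{\fin}$ as $k\to\infty$. This uses the defining properties of a covering sequence: on $\N$-coordinates of $\vec m_i$ the effect of $x_i$ is zero, on strict $\omega$-coordinates it is strictly positive. Choosing $k_0$ large enough so that the (fixed, finite) negative excursions of $\beta_i$ and $y_i$ are dominated by the $(k{-}k_0)$-th iterate of $x_i$ at the relevant places, plus a bound to enable all bridges $t_i$, yields fireability uniformly for $k\ge k_0$. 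The final inequality $\sum_{j=0}^i\eff(x_jy_jz_j)[p]\ge 1$ is then immediate from how $z_i,y_i$ were chosen to push up the $\omega$-places of $\vec m_i^{\fin}$ monotonically along the segments.

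The main obstacle is the construction of $z_i$ and the global bookkeeping of accumulated token counts across consecutive segments: one must design $z_i$ so that it is fireable from within $C_i$ while producing the required positive increments, and simultaneously ensure that the bridge $t_{i+1}$, whose preset may touch both $\N$- and $\omega$-coordinates of $\vec m_i^{\fin}$, is enabled for all $k\ge k_0$. This is where perfectness of the MGTS is essential: \cref{perfect-covering-sequence} guarantees a forward covering sequence in each precovering graph, and the analogous backward property (also built into the definition of ``perfect'') provides $z_i$. Once these local pieces are in place, the run in \cref{lambert-pumping} follows by concatenation.
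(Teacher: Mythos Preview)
The paper does not actually prove this lemma. It quotes Lambert's iteration lemma (Lemma~4.1 in \cite{DBLP:journals/tcs/Lambert92}) as a black box and only adds a paragraph explaining how the extra inequality $\sum_{j=0}^i \eff(x_jy_jz_j)[p]\ge 1$ is read off from the internals of Lambert's proof: the $y_i$ arise from a maximal-support solution of the \emph{homogeneous characteristic equation} of the perfect MGTS, and perfectness guarantees that this solution assigns a positive value to each variable $c'_j(p)$ with $\vec m_j^{\fin}[p]=\omega$, while the scaling factor $\alpha$ can be taken at least $1$. So there is no self-contained argument in the paper for you to compare against; your proposal is a sketch of Lambert's proof rather than of the paper's.

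That said, your sketch has a genuine gap. Consider a place $p$ with $\vec m_i[p]=\omega$ but $\vec m_i^{\fin}[p]\in\N$. Your forward pump $x_i^k$ strictly increases $p$ (by the definition of covering sequence, case~(iii)), so after $x_i^k$ the token count in $p$ grows linearly in $k$. Your description of $z_i$ says it has ``zero on the other coordinates'', i.e.\ zero effect on such $p$; your $\beta_i$ and $y_i$ have bounded effect. Hence your run ends the $i$-th segment with $\Theta(k)$ tokens in $p$, whereas membership in $L(\Ncal)$ requires hitting the \emph{exact} finite value $\vec m_i^{\fin}[p]$ (because $\vec u'_i\le_\omega \vec m_i^{\fin}$). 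This is precisely why Lambert's $z_i$ is not a second ``forward'' covering sequence but a \emph{backward} one (a decovering sequence), with strictly \emph{negative} effect on such places, and why the $y_i$ come from the characteristic equation: the equation is what synchronizes the positive contribution of $x_i^k$ and the negative contribution of $z_i^k$ so that the finite coordinates of $\vec m_i^{\fin}$ are met exactly for every $k$. Without that balancing mechanism your construction does not produce a run in $\Ncal$. Your final claim that the inequality is ``immediate from how $z_i,y_i$ were chosen'' is also too quick: the inequality is about the cumulative effect over all segments $j\le i$, and in Lambert's proof it comes from the maximal-support property of the characteristic-equation solution, not from any local choice in segment $i$.
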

The inequalities $\sum_{j=0}^i \eff(x_iy_iz_i)[p]\geq1$ for each $p$ with
$\vec m^{\fin}_i[p]=\omega$ follow from item~(i) in
\cite[Lemma~4.1]{DBLP:journals/tcs/Lambert92}: In the notation of \cite{DBLP:journals/tcs/Lambert92}, item~(i) of Lemma~4.1 states that
the effect of $x_0y_0z_0\cdots x_jy_jz_j$ in
$p$ equals $\alpha(x_0(c'_j(p))-x_0(c_0(p)))$. Since
$x_0$ is a solution to the homogeneous characteristic equation (see
page~91 for the definition), we have
$x_0(c_0(p))=0$.  Perfectness (see page~92 for the definition) and the
fact that
$x_0$ even has maximal support among all solutions imply that
$x_0(c'_j(p))\ge 1$ for such
$p$. Finally, the proof of
\cite[Lemma~4.1]{DBLP:journals/tcs/Lambert92} chooses
$\alpha\in\N$ so as to be above certain thresholds. We may therefore
assume that $\alpha\ge 1$.

We use \cref{lambert-pumping} to construct a run in $\Ncal$
that contains the desired infixes and in which a given run embeds.
\begin{restatable}{lemma}{injectFactors}\label{inject-factors}
  For every run $\rho$ in $\Ncal$ and
  words $u_1,\ldots,u_m\in\Loop(\cA)$, there is a run $\rho'$ in
  $\Ncal$ such that $\rho$ embeds in $\rho'$ via a compatible
  embedding $\sigma$ with $u_1,\ldots,u_m\in F(\sigma)$.
\end{restatable}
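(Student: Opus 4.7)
The plan is to combine Lambert's iteration lemma (\cref{lambert-pumping}) with the amalgamation construction underlying \cref{amalgam-in-mgts}. First, I would exploit the structure of $\Loop(\cA)$: since $\cA$ is built from the precovering graphs $C_0, \ldots, C_n$ glued by the single forward bridge edges $t_1, \ldots, t_n$ (with no reverse edges), every cycle of $\cA$ lies entirely inside some single $C_{i_j}$. Hence each $u_j$ is read on a cycle of $C_{i_j}$ based at some node $\vec v_j$.

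Second, I would build an auxiliary run $\rho^{\star} \in L(\Ncal)$ containing every $u_j$ as an infix. Applying \cref{lambert-pumping} with a pumping parameter $k$ chosen sufficiently large, the run $\eta_k = x_0^k \beta_0 y_0^k z_0^k \cdot t_1 \cdots t_n \cdot x_n^k \beta_n y_n^k z_n^k$ lies in $L(\Ncal)$, and the bound $\sum_{j \le i}\eff(x_j y_j z_j)[p] \ge 1$ for every $\omega$-place $p$ of $\vec m_i^{\fin}$ forces unboundedly many tokens at those places as $k$ grows. Using strong connectedness of $C_{i_j}$, I would splice into the graph-part of $\eta_k$ inside $C_{i_j}$ a detour from $\vec m_{i_j}$ through $\vec v_j$ that fires the cycle labeled $u_j$ and returns to $\vec m_{i_j}$; firing $u_j$ only demands tokens at places already pinned by $\vec v_j$ or at $\omega$-places, and for $k$ sufficiently large the latter are abundant. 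The resulting $\rho^{\star}$ admits a compatible embedding of a smaller Lambert base run $\eta_{k_0}$ whose inserted words contain $u_1, \ldots, u_m$.

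Third, I would amalgamate $\rho$ with $\rho^{\star}$. To invoke \cref{amalgam-in-mgts} with $\rho_1 = \rho^{\star}$ and $\rho_2 = \rho$, I need a common sub-run $\rho_0 \in L(\Ncal)$ embedding compatibly in both. The natural candidate is $\eta_{k_0}$; to make it embed compatibly into the arbitrary $\rho$ as well, I would either strengthen Lambert's lemma with parameters derived from $\rho$, or pre-amalgamate $\rho$ with a Lambert pumping run of a suitably large parameter, so that the common skeleton $\eta_{k_0}$ becomes visible on both sides. Then \cref{amalgam-in-mgts} produces the amalgam $\rho' \in L(\Ncal)$ with compatible embeddings of both $\rho$ and $\rho^{\star}$, and the inclusion $F(\sigma_1) \cup F(\sigma_2) \subseteq F(\tau)$ from \cref{amalgamation-properties} transports the $u_j$'s (which lie in inserted words of $\eta_{k_0}\hookrightarrow \rho^{\star}$) into inserted words of the resulting compatible embedding $\sigma\colon \rho \hookrightarrow \rho'$. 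The main obstacle is exactly this last step: securing a common compatible ancestor of $\rho$ and $\rho^{\star}$ is delicate and requires Lambert's construction in its full strength together with an upward-directedness argument in the style of \cite[Lemma~VII.2]{DBLP:conf/lics/LerouxS15}.
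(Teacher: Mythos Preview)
Your plan has a real hole at exactly the point you flag: producing a common compatible ancestor $\rho_0$ of the arbitrary run $\rho$ and your auxiliary $\rho^\star$. You propose $\eta_{k_0}$, but a Lambert run $\eta_{k_0}$ will in general be much longer than $\rho$ (its graph-parts contain $x_i^{k_0}\beta_i y_i^{k_0} z_i^{k_0}$), so there is no reason it should embed compatibly into $\rho$. The fallback you sketch---``pre-amalgamate $\rho$ with a Lambert pumping run''---is circular, since amalgamation already presupposes a common compatible ancestor. Invoking upward-directedness in the style of Leroux--Schmitz would give \emph{some} run above both, but not with control over where the $u_j$ land among the inserted words of the embedding of $\rho$; you would still need to argue that the infixes survive in $F(\sigma)$ for the specific embedding $\sigma\colon\rho\hookrightarrow\rho'$, not merely for the embedding of $\rho_0$.

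The paper avoids this detour entirely: no amalgamation is used in the proof of this lemma. The key trick is to bake both the desired loops \emph{and the graph-parts of $\rho$ itself} into the covering sequences before invoking \cref{lambert-pumping}. Concretely, write $\rho = w_0 t_1 w_1 \cdots t_n w_n$, collect the $u_j$ into loops $u'_i\in L(C_i)$ using strong connectedness, take covering sequences $x_i$, and set $x'_i = x_i^{\ell} u'_i w_i$ for a suitably large $\ell$. One checks that each $x'_i$ is again a covering sequence for $C_i$, so Lambert's lemma yields runs $\rho_k$ whose $i$-th graph-part begins with $(x_i^{\ell} u'_i w_i)^k$. Now $\rho$ embeds directly and compatibly into $\rho_k$ for large $k$: map each bridge $t_i$ to the displayed bridge, and map $w_i$ to the copy of $w_i$ sitting inside the first occurrence of $x'_i$. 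The verification that this is a genuine run embedding (i.e.\ that the markings dominate) splits by place type and uses both the choice of $\ell$ and the inequality $\sum_{j\le i}\eff(x'_j y_j z_j)[p]\ge 1$ from \cref{lambert-pumping}. The inserted words of this embedding visibly contain each $u'_i$, hence each $u_j$. This is strictly simpler than your route: one application of Lambert with cleverly chosen covering sequences replaces the whole amalgamation machinery.
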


\begin{proof}
Roughly speaking, the proof of \cref{inject-factors} proceeds as
follows.  First, we take a covering sequence for every component and
show that they can be prolongated so that each $u_i$ appears as an
infix of some covering sequence. Then, we prolongate the covering
sequences further so that they contain the $w_i$ if $\rho$ is as in
\cref{mgts-run}. Next, we iterate each covering sequence so that it
creates enough tokens in places $p$ with $\vec m_i[p]=\omega$,
$\vec m_{i-1}^{\fin}[p]\in\N$ so that the part $w_i$ of $\rho$ can
embed. To make sure that there are enough tokens also in places $p$
with $\vec m_i^{\fin}[p]=\omega$, we drive up $k$. Because of
$\sum_{j=0}^i \eff(x_iy_iz_i)[p]\ge 1$ for such places, this creates
enough tokens to embed the run $\rho$.

  In the proof, we will use the concept of a hurdle for a transition
  sequence. Observe that for every sequence $w\in
  T^*$, there exists a smallest marking $\vec m\in\N^P$ such that
  $w$ can be fired in $\vec
  m$.  This is called the \emph{hurdle} of
  $w$ and we denote it by $\hurdle(w)$.
  
  Let $\Ncal=C_0,t_1,C_1,\ldots,t_n,C_n$ and let $\vec m_i$ be the
  distinguished vertex of $C_i$ and let $\vec m_i^{\init}$ be the
  initial vertex of $C_i$.
  Each of the words $u_1,\ldots,u_m$ labels a loop in some precovering
  graph $C_i$. Since $C_i$ is strongly connected, we can pick a loop
  $u'_i\in L(C_i)$ for each $i\in[0,n]$ so that every word $u_j$
  appears as an infix in $u'_1,\ldots,u'_n$.

  Let $\rho$ be the run $w_0t_1w_1\cdots t_nw_n$ in $\Ncal$.  Since
  $\Ncal$ is perfect, there is a covering sequence $x_i\in T^*$ for
  each $C_i$, $i\in[0,n]$. We will now construct a covering sequence
  $x_i^\ell u'_iw_i$ for some $\ell$ and then apply
  \cref{lambert-pumping}.

  Let $\vec n_i$ be the marking that $\rho$ enters after firing $t_i$
  (and hence before firing $w_i$).  Pick $\ell\ge 1$ so that
  \begin{enumerate}
  \item $\ell\ge \hurdle(u'_iw_i)[p]$,
  \item $\ell>|\eff(u'_iw_i)[p]|$, and
  \item $\ell-|\eff(u'_i)[p]|\ge \vec n_i[p]$
  \end{enumerate}
  for every $p\in P$ with $\vec m_i[p]=\omega$. We claim that the
  sequence $x'_i=x_i^\ell u'_iw_i$ is a covering sequence for $C_i$.
  
  First, $x'_i$ belongs to $L(C_i)$, because each word $x_i$, $u'_i$,
  $w_i$ does.  Moreover, $x'_i$ is enabled in $\vec m_i^{\init}$:
  Since $x_i$ is a covering sequence, $x_i^\ell$ is enabled.  Consider
  $p\in P$ with $\vec m_i^{\init}[p]\in\N$. If $\vec m_i[p]=\omega$,
  then $x_i$ has a positive effect on $p$, so that firing $x_i^\ell$
  leaves at least $\ell\ge \hurdle(u'_iw_i)[p]$ tokens in $p$. If
  $\vec m_i[p]\in\N$, then $x_i$, $u'_i$, and $w_i$ have zero effect
  on $p$.  Thus, $x_i^\ell u'_iw_i$ is indeed enabled in
  $\vec m_i^{\init}$.

  To establish that $x'_i$ is a covering sequence, it remains to show
  that for each $p\in P$ one of the conditions (i)--(iii) holds.  This
  is trivial if $\vec m_i^{\init}[p]=\omega$, so suppose
  $\vec m_i^{\init}[p]\in\N$.  If $\vec m_i[p]\in\N$, then each of the
  sequences $x_i$, $u'_i$, and $w_i$ have zero effect on $p$ by virtue
  of belonging to $L(C_i)$. If $\vec m_i[p]=\omega$, then $x_i^\ell$
  produces at least $\ell$ tokens in $p$ and since
  $\ell>|\eff(u'_iw_i)[p]|$, we have $\eff(x_i^\ell u'_i w_i)[p]>0$.
  This completes the proof that $x'_i$ is a covering sequence for $C_i$.

  According to \cref{lambert-pumping}, there are $k_0\in\N$, $\beta_i\in T^*$,
  $y_i,z_i\in T^*$ for $i\in[0,n]$, so that for every $k\ge k_0$, the sequence
  \begin{equation}
    \rho_k=x'^k_0 \beta_0 y_0^k z_0^k\cdot t_1\cdot x'^k_1 \beta_1 y_1^k z_1^k\cdots t_n\cdot x'^k_n\beta_ny_n^k z_n^k \label{run-from-pumping}
  \end{equation}
  is a run in $\Ncal$ for which $\sum_{j=0}^i \eff(x'_jy_jz_j)[p]\ge 1$
  for each $p\in P$ and $i\in[0,n]$ with $\vec
  m_i^{\fin}[p]=\omega$. We now claim that for large enough $k$, the
  run $\rho_k$ can be chosen as the desired $\rho'$.

  The embedding $\sigma$ will map each $t_i$ to the $t_i$ displayed in
  \cref{run-from-pumping}.  Let $\vec m_i^{(k)}$ be the marking
  entered in $\rho_k$ before firing $t_i$. Moreover, $\sigma$ will
  embed each $w_i$ to the infix $w_i$ in the first occurrence of
  $x'_i=x_i^\ell u'_i w_i$. Let $\vec n'^{(k)}_i$ be the marking
  entered before firing this infix $w_i$ in $\rho_k$. To verify that
  this will indeed yield an embedding of runs, we have to show that
  for large enough $k$, we have
  \begin{align}
    \vec n_i-\eff(t_i)\le\vec m_i^{(k)} ~\text{for $i\in[1,n]$ and}~~ \vec n_i\le \vec n'^{(k)}_i~\text{for $i\in[0,n]$} \label{embedding-inequalities}
  \end{align}
  The left inequality states that there are enough tokens to embed the
  marking entered in $\rho$ before firing $t_i$, $i\in[1,n]$. The right inequality
  states that the same is true for the marking entered in $\rho$
  before firing $w_i$, $i\in[0,n]$.

  Let us argue that we can choose $k$ large enough to satisfy
  \cref{embedding-inequalities}. First, consider the left inequality
  for $i\in[1,n]$ and let $p\in P$. If $\vec m_{i-1}^{\fin}[p]\in\N$,
  then of course we have
  $\vec m_i^{(k)}[p]\ge \vec n_i[p]-\eff(t_i)[p]$: We even have
  $\vec m_i^{(k)}[p]=\vec m_{i-1}^{\fin}[p]=\vec n_i[p]-\eff(t_i)[p]$,
  which has to hold for any run in $\Ncal$.  If
  $\vec m_{i-1}^{\fin}[p]=\omega$, then we observe
  \begin{align*}
    \vec m_i^{(k)}[p] &= \eff(x'^k_0\beta_0y_0^kz_0^k)+\sum_{j=1}^{i-1} \eff(t_jx'^k_j\beta_jy_j^kz_j^k) \\
    &= \eff(\beta_0t_1\beta_1\cdots t_{i-1}\beta_{i-1})[p]+k\cdot \sum_{j=0}^{i-1} \eff(x'_jy_jz_j)[p] \\
    &\ge \eff(\beta_0t_1\cdots t_{i-1}\beta_{i-1})[p] + k
  \end{align*}
  because $\sum_{j=0}^{i-1} \eff(x'_jy_jw_j)\ge 1$.  Hence, for large
  enough $k$, the left inequality of \cref{embedding-inequalities} is
  fulfilled for such $i\in[1,n]$.  Thus, it holds for every
  $i\in[1,n]$.

  Now consider the right inequality of \cref{embedding-inequalities}
  for $i\in[0,n]$. Let $p\in P$. We distinguish three cases.
  \begin{enumerate}
  \item Suppose $\vec m_i^{\init}[p]=\omega$. Then we also have
    $\vec m_{i-1}^{\fin}[p]=\omega$. Observe that
    \begin{align*}
      \vec n'_i[p]&=\eff(\beta_0t_1\beta_1\cdots t_{i-1}\beta_{i-1})[p] \\
      &~~~~+ k\cdot\sum_{j=0}^{i-1} \eff(x'_iy_iz_i)[p]+\eff(t_ix_i^\ell u'_i)[p] \\
      &\ge \eff(\beta_0t_1\beta_1\cdots t_{i-1}\beta_{i-1})[p] + k + \eff(t_ix_i^\ell u'_i)[p]
    \end{align*}
    since in this case $\sum_{j=0}^{i-1}\eff(x'_iy_iz_i)\ge 1$. Thus,
    for large enough $k$, we have $\vec n'_i[p]\ge\vec n_i[p]$.
  \item Suppose $\vec m_i^{\init}[p]=\vec m_i[p]\in\N$. Then,
    both $\vec n_i[p]$ and $\vec n'_i[p]$ are necessarily equal to
    $\vec m^{\init}_i[p]$: Such places $p$ are left unchanged by
    cycles in $C_i$. Hence, $\vec n_i[p]=\vec n'_i[p]$.
  \item Suppose $\vec m_i^{\init}[p]\in\N$ and $\vec m_i[p]=\omega$.
    In this case, we have $\vec n_i[p]\le\vec n'_i[p]$ already by our
    choice of $\ell$: Since for such $p$, $x_i$ has a positive effect,
    we have in particular $\vec n'_i[p]\ge \ell$. Since we chose
    $\ell$ with $\ell-|\eff(u'_i)[p]|\ge \vec n_i[p]$, this implies
    that after executing $x_i^\ell u'_i$, there must be at least
    $\vec n_i[p]$ tokens in $\vec n'_i[p]$.
  \end{enumerate}
  Thus, each of the $2n+1$ inequalities in
  \cref{embedding-inequalities} holds for sufficiently large $k$,
  meaning we can choose a $k$ for which they all hold
  simultaneously. With this, let $\rho'=\rho_k$.  Now indeed, $\rho$
  embeds into $\rho'$ via the embedding $\sigma$.  Then clearly
  $u'_i\in F(\sigma)$ for $i\in[0,n]$, hence $u_j\in F(\sigma)$ for
  $j\in[1,m]$.
  \qed
\end{proof}


\paragraph{Proof of the modular envelope property}
We are finally ready to show that $\cA$ is a modular envelope for
$L(\Ncal)$.  Given a run $\rho$ in $\Ncal$ for the transition sequence
$w\in L(\Ncal)$, we first use \cref{inject-factors} to obtain a run
$\rho_1$ such that $\rho_1$ contains each $u_i$, $i\in[1,n]$, as an infix
and $\rho$ embeds via a compatible embedding $\sigma_1$
into $\rho_1$.

For $j\ge 2$, let $\rho_j$ be the run obtained by amalgamating
$\rho_{j-1}$ and $\rho_1$ along $\sigma_{j-1}$ and
$\sigma_1$. Moreover, let $\sigma_j$ be the resulting embedding. Then
\cref{amalgam-in-mgts} tells us that $\rho_j$ is a run in $\Ncal$ for
every $j\ge 2$. Moreover, we have
$u_i\in F(\sigma_{j-1})\subseteq F(\sigma_j)$ for every $i\in[1,m]$
 and also
$\Parikh(\rho_j)-\Parikh(\rho)=\Parikh(\rho_{j-1})-\Parikh(\rho)+\Parikh(\rho_1)-\Parikh(\rho)$
(\cref{amalgamation-properties}) and hence by induction
$\Parikh(\rho_j)=\Parikh(\rho)+j\cdot
(\Parikh(\rho_1)-\Parikh(\rho))$.  In particular
$\Parikh(\rho_k)\equiv\Parikh(\rho)\bmod{k}$. Therefore, the
transition sequence of $\rho_k$ is a word $w'$ that has each $u_i$ as
an infix and satisfies $\Parikh(w')\equiv\Parikh(w)\bmod{k}$. This
proves that $\cA$ is a modular envelope for $L(\Ncal)$.

\bibliography{citat}

\end{document}